\documentclass[journal]{IEEEtran}
\DeclareFixedFont{\auacc}{OT1}{phv}{b}{it}{18}   
\DeclareFixedFont{\newauacc}{OT1}{ptm}{b}{rm}{12}   
\usepackage{hyperref}
\usepackage{cite}
\usepackage{graphicx}
\usepackage[tight]{subfigure}
\usepackage{breakurl}
\usepackage[amsthm]{ntheorem}
\urlstyle{same}
\usepackage{verbatim}

\usepackage{amsmath}
\usepackage{amsfonts}
\usepackage{amssymb}
\usepackage[usenames]{color}
\usepackage{paralist}
\usepackage{ifthen}
\usepackage{tabularx}
\usepackage{multirow}
\usepackage{wrapfig}

\usepackage[dvipsnames]{xcolor}
\usepackage{colortbl}
\definecolor{mygray}{gray}{0.8}
\definecolor{mycol}{rgb}{1,0.9,0.9}
\definecolor{mcy}{rgb}{0.9,1,0.9}

\newboolean{arxiv}
\setboolean{arxiv}{false}

\setlength{\pltopsep}{1mm}
\setlength{\plitemsep}{0mm}

\DeclareMathSymbol{\R}{\mathord}{AMSb}{"52}
\DeclareMathSymbol{\C}{\mathord}{AMSb}{"43}
\DeclareMathSymbol{\Z}{\mathord}{AMSb}{"5A}
\DeclareMathSymbol{\N}{\mathord}{AMSb}{"4E}
\DeclareMathSymbol{\K}{\mathord}{AMSb}{"4B}
\DeclareMathSymbol{\M}{\mathord}{AMSb}{"4D}
\DeclareMathSymbol{\Q}{\mathord}{AMSb}{"51}
\DeclareMathSymbol{\Lset}{\mathord}{AMSb}{"4C}

\theorembodyfont{\rmfamily}
\theoremheaderfont{\bfseries\slshape}
\setlength\theorempreskipamount{2pt}
\setlength\theorempostskipamount{2pt}


\newtheorem{theorem}{Theorem}

\newtheorem{corollary}{Corollary}
\newtheorem{lem}{Lemma}

\newtheorem{definition}{Definition}
\newtheorem{ex}{Example}

\newtheorem{axiom}{Axiom}

\newcommand\defeq{\mathrel{:=}}

\newcommand\pink{\fcolorbox{Lavender}{Lavender}}
\newcommand\mycyan{\fcolorbox{SkyBlue}{SkyBlue}}
\newcommand\ie{{\em i.e.}}
\newcommand\eg{{\em e.g.}}
\newcommand\etal{{\em et al.}}
\hyphenation{op-tical net-works semi-conduc-tor IEEEtran}

\definecolor{webmag}{rgb}{0.5,0,0.5}
\definecolor{myblue}{rgb}{0,0,1}
\pagenumbering{arabic}
\newcommand{\nop}[1]{}

\addtolength{\textfloatsep}{-0.5cm}
\addtolength{\abovedisplayskip}{-3pt}
\addtolength{\belowdisplayskip}{-3pt}

\title{On the Payoff Mechanisms in Peer-Assisted Services with Multiple
  Content Providers: \\ {\slshape Rationality and Fairness}}
\author{\authorblockN{Jeong-woo~Cho~and~Yung Yi,~\IEEEmembership{Members,~IEEE}}
\thanks{J. Cho is with the School of Information and Communication Technology at KTH Royal Institute of Technology, Sweden (email: \href{mailto:jwcho@kth.se}{jwcho@kth.se}).}
\thanks{Y. Yi is with Department of Electrical Engineering at KAIST (Korea Advanced Institute of Science and Technology), South Korea (email: \href{mailto:yiyung@kaist.edu}{yiyung@kaist.edu}).}
\thanks{A preliminary version of this
    work has been published in the proceedings of GameNets 2011.}
}

 \setlength{\abovecaptionskip}{0pt}
  \setlength{\belowcaptionskip}{0pt}

\makeatletter
\let\@copyrightspace\relax
\makeatother

\begin{document}
\maketitle

\newcommand{\expectation}{\textsf{E}}
\newcommand{\probability}{\textsf{P}}
\newcommand{\pdf}{\textsf{f}}
\newcommand{\slow}{\ell}
\newcommand{\nextline}{\mbox{}\\}
\newcommand{\ud}{\mathrm{d}}
\newcounter{tempcounter}
\newcounter{acounter}

\begin{abstract}
  This paper studies an incentive structure for cooperation and its
  stability in peer-assisted services when there exist multiple content
  providers, using a coalition game theoretic approach.  We first
  consider a generalized coalition structure consisting of multiple
  providers with many assisting peers, where peers assist providers to
  reduce the operational cost in content distribution. To distribute the
  profit from cost reduction to players (\ie, providers and peers), we
  then establish a generalized formula for individual payoffs when a
  ``Shapley-like'' payoff mechanism is adopted. We show that the grand
  coalition is {\em unstable}, even when the operational cost functions
  are concave, which is in sharp contrast to the recently studied case
  of a single provider where the grand coalition is stable.  We also
  show that irrespective of stability of the grand coalition, there
  always exist coalition structures which are not convergent to the
  grand coalition under a dynamic among coalition
  structures. Our results give us an incontestable fact that a provider
  does not tend to cooperate with other providers in peer-assisted
  services, and be separated from them. Three facets of the
  noncooperative (selfish) providers are illustrated; {\em (i)}
  underpaid peers, {\em (ii)} service monopoly, and {\em (iii)}
  oscillatory coalition structure. Lastly, we propose a stable payoff
  mechanism which improves fairness of profit-sharing by regulating the
  selfishness of the players as well as grants the content providers a
  limited right of realistic bargaining.  Our study opens many new
  questions such as realistic and efficient incentive structures and the
  tradeoffs between fairness and individual providers' competition in
  peer-assisted services.
\end{abstract}




\section{Introduction}\label{sec:intro}

\subsection{Motivation}

The Internet is becoming more content-oriented, and the need of
cost-effective and scalable distribution of contents has become the
central role of the Internet. Uncoordinated peer-to-peer (P2P) systems, \eg, BitTorrent,
have been successful in distributing contents, but the rights of the
content owners are not protected well, and most of the P2P contents are
in fact illegal. In its response, a new type of service, called {\em
  peer-assisted service,} has received significant attention these
days. In peer-assisted services, users commit a part of their resources
to assist content providers in content distribution with objective of
enjoying both scalability/efficiency in P2P systems and controllability in
client-server systems. Examples of application of peer-assisted services
include nano data center \cite{refValanciusGreen} and IPTV
\cite{refChaP2PTV}, where high potential of operational cost reduction
was observed. For instance, there are now 1.8 million IPTV subscribers
in South Korea, and the financial sectors forecast that by 2014 the IPTV
subscribers is expected to be 106 million \cite{refRNCOS}. However, it is clear that most users will not just
``donate'' their resources to content providers. Thus, the key factor to
the success of peer-assisted services is how to (economically)
incentivize users to commit their valuable resources and participate in
the service.

One of nice mathematical tools to study incentive-compatibility of
peer-assisted services is the coalition game theory which covers how
payoffs should be distributed and whether such a payoff scheme can be executed
by rational individuals or not. In peer-assisted services, the ``symbiosis''
between providers and peers are sustained when {\em (i)} the offered payoff
scheme guarantees fair assessment of players' contribution
under a provider-peer coalition and {\em (ii)} each individual has no
incentive to exit from the coalition. In the coalition game theory, the
notions of Shapley value and the core have been popularly applied to address
{\em (i)} and {\em (ii)}, respectively, when the entire players cooperate,
referred to as the {\it grand coalition}.  A recent paper by Misra \etal
\cite{refMisraP2P} demonstrates that the Shapley value approach is a
promising payoff mechanism to provide right incentives for cooperation
in a {\em single-provider} peer-assisted service.

\begin{figure}[t!]
  \centering
  \centerline{\includegraphics[width=1\columnwidth]{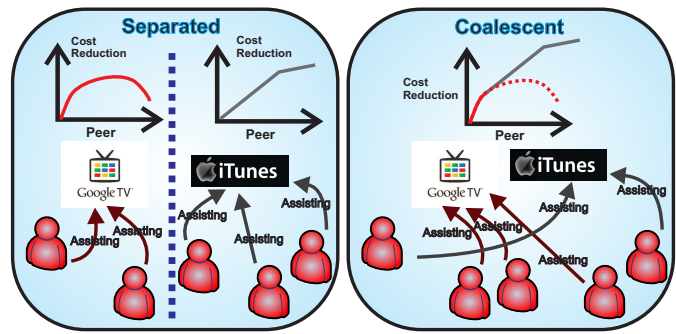}}
  \caption{Two coalition structures for a dual-provider peer-assisted
    service.} \label{fig:dualprovider}
\end{figure}

However, in practice, the Internet consists of multiple content
providers, even if only giant providers are counted. In the
multi-provider setting, users and providers are coupled in a more
complex manner, thus the model becomes much more challenging and even
the cooperative game theoretic framework itself is unclear, \eg, definition
of the worth of a coalition. Also, the results and their implications in the
multi-provider setting may experience drastic changes, compared to the
single-provider case.

The grand coalition is expected to be the ``best'' coalition in the
peer-assisted service with multiple providers in that it provides the
highest aggregate payoff. To illustrate, see an example in
Fig. \ref{fig:dualprovider} with two providers (Google TV and iTunes)
and a large number of peers. Consider two cooperation types: {\em (i)}
{\em separated,} where there exists a fixed partition of peers for each
provider, and {\em (ii)} {\em coalescent,} where each peer is possible
to assist any provider. In the separated case, a candidate payoff scheme
is based on the Shapley value in each disconnected coalition. In
the coalescent case, the Shapley value is also a candidate payoff scheme
after a worth function of the grand coalition is defined, where a reasonable
worth function\footnote{In Section~\ref{sec:worthpeer}, we establish
  that this definition is {\it derived} directly from an essential
  property of coalition.}~can be the total optimal profit, {\em maximized} over all
combinations of peer partitions to each provider.
Consequently, the total payoff for the coalescent case exceeds
that for the separated case, unless the two partitions of both cases are equivalent. Shapley value is defined by a few agreeable axioms, one of which is {\em efficiency}\footnote{To be discussed formally in Section \ref{sec:axiomatic}}, meaning that the {\em every} cent of coalition worth is distributed to players. Since smaller worth is shared out among players in the separated case, at least one individual is {\it underpaid} as compared with the coalescent case.  Thus, providers and users are
recommended to form the grand coalition and be paid off based on the
Shapley values.


However, it is still questionable whether peers are willing to stay in the grand
coalition and thus the consequent Shapley-value based payoff mechanism
is desirable in the multi-provider setting. In this paper, we anatomize
incentive structures in peer-assisted services with multiple content
providers and focus on stability issues from two different angles:
stability at equilibrium of Shapley value and convergence to the
equilibrium. We show that the Shapley payoff scheme may lead to unstable coalition structure, and propose a
different notion of payoff distribution scheme, $\chi$ value, under which
peers and providers stay in the stable coalition as well as better
fairness is guaranteed.

\subsection{Related Work}\label{sec:related}

The research on incentive structure in the P2P systems (\eg, BitTorrent)
has been studied extensively. To incapacitate free-riders in
 P2P systems, who only download contents but upload nothing, from
 behaving selfishly, a number of incentive mechanisms suitable for
 distribution of copy-free contents have been proposed (See
 \cite{refParkP2P} and references therein), using game theoretic
 approaches. Alternative approaches to exploit the potential of the P2P
systems for reducing the distribution (or operational) costs of the
copyrighted contents have been recently adopted by
\cite{refMisraP2P,refValanciusGreen}. To the best of our knowledge, the
work by Misra \etal \cite{refMisraP2P} is the first to study the
profit-sharing mechanism (payoff mechanism) of peer-assisted services.

Coalition game theory has been applied to model diverse networking
behaviors, where the main focus in most cases (\eg, \cite{refMisraP2P})
was to study the stability of a specific equilibrium \ie, the grand
coalition in connection with the notion of core.  Recently, Saad \etal
\cite{refSaadTutorial,refSaadHedonic}, discussed the stability and
dynamics of {\it endogenous} formation of general coalition
structures. In particular, \cite{refSaadHedonic} introduced a coalition
game model for self-organizing agents (\eg, unmanned aerial vehicles)
collecting data from arbitrarily located tasks in wireless networks and
proved the stability of the proposed algorithm by using {\it hedonic
  preference} (and dominance). In this paper, we use the stability
notion by Hart and Kurz \cite{refHartEndogenous} (see also
\cite{refTuticAD}) to study the dynamics of coalition structures in
peer-assisted services. The stability notion in \cite{refHartEndogenous}
is based on the preferences of any arbitrary {\it coalition} while the
hedonic coalition games are based on the preferences of {\it
  individuals}. Other subtle differences are described in
\cite{refCasajusStability}.

\subsection{Main Contributions and Organization}

We summarize our main contributions as follows:

\begin{compactenum}[1)]
\item Following the preliminaries in Section~\ref{sec:coalitiongame}, in
  Section~\ref{sec:issuemultiple}, we describe and propose the
  cooperative game theoretic framework of the peer-assisted service with
  multiple providers. After defining a worth function that is provably the
  unique feasible worth function satisfying two essential properties, \ie,
  feasibility and superadditivity of a coalition game, we provide a closed-form
  formula of the Shapley value for a general coalition with multiple
  providers and peers, where we take a fluid-limit approximation for
  mathematical tractability.  This is a non-trivial generalization of
  the Shapley value for the single-provider case in
  \cite{refMisraP2P}. In fact, our formula in Theorem
  \ref{th:advaluecoal} establishes the general Shapley value for
  distinguished {\it multiple} atomic players and infinitesimal players
  in the context of the Aumann-Shapley (A-S) prices \cite{refAS} in
  coalition game theory.

\item In Section~\ref{sec:gcinstability}, we discuss in various ways
  that the Shapley payoff regime cannot incentivize rational players to
  form the grand coalition, implying that {\it fair} profit-sharing and
  {\it opportunism} of players cannot stand together.  First, we prove
  that the Shapley value for the multiple-provider case is not in the
  core under mild conditions, \eg, each provider's cost function is
  concave. This is in stark contrast to the single-provider case where
  the concave cost function stabilizes the equilibrium.  Second, we
  study the dynamic formation of coalitions in peer-assisted services
  by introducing the notion of stability defined by the seminal work of
  Hart and Kurz \cite{refHartEndogenous}. Finally, we show that, if we
  adopt a Shapley-like payoff mechanism, called Aumann-Dr{\`eze} value,
  irrespective of stability of the grand coalition, there always exist
  initial states which do not converge to the grand coalition.

\item In Section~\ref{sec:critique}, we present three examples stating
  the problems of the non-cooperative peer-assisted service: {\em (i)}
  the peers are underpaid compared to their Shapley payoffs, {\em (ii)} a
  provider paying the highest dividend to peers monopolizes all
  peers, and {\em (iii)} Shapley value for each coalition gives rise to
  an oscillatory behavior of coalition structures. These examples
  suggest that the system with the separated providers may be even
  unstable as well as unfair in a peer-assisted service market.

\item In Section~\ref{sec:chivalue}, as a partial solution to the problems of Shapley-like payoffs (\ie, Shapley and Aumann-Dr{\`eze}), we propose an alternative payoff
  scheme, called $\chi$ value \cite{refChiValue}. This payoff mechanism
  is relatively {\it fair} in the sense that players, at the least,
  apportion the difference between the coalition worth and the sum of
  their fair shares, \ie, Shapley payoffs, and {\it stabilizes} the whole
  system. It is also practical in the sense that providers are granted a
  limited right of {\it bargaining}. That is, a provider may award an
  extra bonus to peers by cutting her dividend, competing with other
  providers in a fair way. More importantly, we show that authorities
  can effectively avoid unjust rivalries between providers by
  implementing a simplistic measure.

\end{compactenum}

After presenting a practical example of peer-assisted services with multiple providers in delay-tolerant networks in Section \ref{sec:app}, we conclude this paper.

\section{Preliminaries}\label{sec:coalitiongame}

Since this paper investigates a multi-provider case, where a peer can
choose any provider to assist, we start this section by defining a
coalition game with a peer partition (\ie, a coalition structure) and
introducing the payoff mechanism thereof.


\subsection{Game with Coalition Structure}\label{sec:pri}


A game with coalition structure is a triple $(N,v,\mathcal{P})$ where
$N$ is a player set and $v : 2^N \rightarrow \R $ ($2^N$ is the set of
all subsets of $N$) is a worth function, $v(\emptyset)=0$. $v(K)$ is
called the worth of a coalition $K\subseteq
N$. 
$\mathcal{P}$ is called a {\it coalition structure} for $(N,v)$; it is a
partition of $N$ where $C(i) \in \mathcal{P} $ denotes the
coalition containing player $i$. For your reference, a coalition structure $\mathcal{P}$ can be regarded as a set of disjoint coalitions. The {\em grand coalition} is the
partition $\mathcal{P} = \{ N \}$. For instance\footnotemark, a
partition of $N=\{1,2,3,4,5 \}$ is $\mathcal{P} = \{ \{1,2\}, \{ 3 ,4,5
\} \},$ $C(4) = \{3, 4,5\},$ and the grand coalition is
$\mathcal{P}=\{ \{1,2,3,4,5\} \}$. $\mathcal{P}(S)$ is the set of all
partitions of $S \subseteq N$. For notational simplicity, a game {\it without}
coalition structure $(N,v,\{ N \})$ is denoted by $(N,v)$.  A value of
player $i$ is an operator $\phi_i (N,v,\mathcal{P})$ that assigns a
payoff to player $i$. 
We define $\phi_K = \sum_{i\in K} \phi_i $ for all $K\subseteq N$.

\footnotetext{A player $i$ is an {\it element} of a coalition
  $C=C(i)$, which is in turn an {\it element} of a partition
  $\mathcal{P}$. $\mathcal{P}$ is an element of $\mathcal{P}(N)$
  {\it while} a subset of $ 2^N$.}


To conduct the equilibrium analysis of coalition games, the notion of {\it core} has been extensively used to study the stability of grand coalition $\mathcal{P} = \{ N \}$:
\begin{definition}[Core]\label{def:core}
  The core of a game $(N,v)$ is defined by:
  \begin{multline*}
    \bigg\{ \phi (N,v) ~\vert~
    \sum_{i\in N} \phi_i (N,v)= v(N) \cr \mbox{ and } \sum_{i\in K} \phi_i
    (N,v) \geq v(K), \forall K\subseteq N\bigg\}.
  \end{multline*}
\end{definition}
If a payoff vector $\phi(N,v)$ lies in the core, no player in $N$ has
an incentive to split off to form another coalition $K$ because the
worth of the coalition $K$, $v(K)$, is no more than the payoff sum $
\sum_{i\in K} \phi_i (N,v) $. Note that the definition of the core
hypothesizes that the grand coalition is already formed {\it
  ex-ante}. We can see the core as an analog of Nash equilibrium from
noncooperative games. Precisely speaking,
  it should be viewed as an analog of {\it strong Nash equilibrium}
  where no arbitrary coalition of players can create worth which is
  larger than what they receive in the grand coalition. If a payoff
vector $\phi (N,v)$ lies in the core, then the grand coalition is
stable with respect to any collusion to break the grand coalition.


\subsection{Shapley Value and Aumann-Dr{\`eze} Value}\label{sec:adintro}

On the premise that the player set is not partitioned,
\ie, $\mathcal{P}=\{N\}$, the Shapley value, denoted by $\varphi$ (not $\phi$), is popularly used as a fair
distribution of the grand coalition's worth to individual players,
defined by:
\begin{align}\label{eq:shapleyoriginal}
\textstyle \varphi_i(N,v) \! = \! \! \displaystyle \sum_{S \subseteq N
  \setminus \{ i \} } \textstyle \frac{|S|! (|N|-|S|-1)!}{|N|!} \left(
  v( S \cup \{i\} ) \! - \!  v(S) \right) .
 \end{align}
 Shapley \cite{refShapley} gives the following interpretation: ``{\em
   (i)} Starting with a single member, the coalition adds one player at
 a time until everybody has been admitted. {\em (ii)} The order in which
 players are to join is determined by chance, with all arrangements
 equally probable. {\em (iii)} Each player, on his admission, demands
 and is promised the amount which his adherence contributes to the value
 of the coalition.'' The Shapley value quantifies the above that is
 axiomatized (see Section~\ref{sec:axiomatic}) and has
 been treated as a worth distribution scheme. The beauty of the
 Shapley value lies in that the payoff ``summarizes'' in {\it one}
 number all the possibilities of each player's contribution in every
 coalition structure.

Given a coalition structure $\mathcal{P} \neq \{ N \}$, one can obtain the Aumann-Dr{\`eze} value (A-D value) \cite{refADValue} of player $i$, also denoted by $\varphi$, by taking $C(i)$, which is the coalition containing player $i$, to be the player set and by computing the Shapley value of player $i$ of the {\it reduced} game $(C(i),v)$. It is easy to see that the A-D value can be construed as a direct extension of the Shapley value to a
 game with coalition structure. Note that both Shapley value and A-D value are denoted by $\varphi$ because the only
 difference is the underlying coalition structure $\mathcal{P}$.

\subsection{Axiomatic Characterizations of Values}\label{sec:axiomatic}

We provide here an axiomatic characterization of the Shapley value \cite{refShapley}. 
\begin{axiom}[Coalition Efficiency, CE]\label{ax:ce}
$\sum_{j\in C} \phi_j (N,v,\mathcal{P})= v(C),~\forall C\in \mathcal{P}$.
\end{axiom}
\begin{axiom}[Coalition Restricted Symmetry, CS]\label{ax:cs}
If $j \in C(i)$ and $v(K\cup \{i \})= v(K\cup \{j \})$ for all $K \subseteq N \setminus \{i,j\}$, then $\phi_i(N,v,\mathcal{P})= \phi_j (N,v,\mathcal{P})$.
\end{axiom}
\begin{axiom}[Additivity, ADD]\label{ax:add}
For all coalition functions $v$, $v'$ and $i\in N$, $\phi_i (N, v + v', \mathcal{P}) = \phi_i (N, v, \mathcal{P}) + \phi_i (N,  v', \mathcal{P})$.
\end{axiom}
\begin{axiom}[Null Player, NP]\label{ax:np}
If $v(K \cup \{ i \}) = v(K)$ for all $K \subseteq N$, then $ \phi_i (N,v,\mathcal{P}) =0 $.
\end{axiom}

Recall that the basic premise of the Shapley value is that the player
set is not partitioned, \ie, $\mathcal{P} = \{N \}$. It is
well-known \cite{refChiValue,refShapley} that the Shapley value, defined
in \eqref{eq:shapleyoriginal}, is {\it uniquely} characterized by {\slshape
  CE}, {\slshape CS}, {\slshape ADD} and {\slshape NP} for $\mathcal{P} = \{N \}$. The A-D value is also {\it
  uniquely} characterized by {\slshape CE}, {\slshape CS}, {\slshape ADD} and {\slshape NP}
(Axioms \ref{ax:ce}-\ref{ax:np}), but in this case for arbitrary
coalition structure $\mathcal{P}$ \cite{refADValue}. In the literature,
\eg, \cite{refSaadTutorial,refPeleg}, the A-D value has been used to
analyze the {\it static} games where a coalition structure is {\it
  exogenously} given.

\begin{definition}[Coalition Independent, {\slshape CI}]\label{def:ci}
 If $i \in C \subseteq N$, $C \in  \mathcal{P}$ and $C \in  \mathcal{P}'$, then $\phi_i (N,v,\mathcal{P})= \phi_i (N,v,\mathcal{P}')$.
\end{definition}

From the definition of the A-D value, the payoff of player $i$ in
coalition $C(i)$ is affected neither by the player set $N$ nor by
coalitions $C \in \mathcal{P}$, $C \neq C(i)$. Note that only $C(i)$
contains the player $i$. Thus, it is easy to prove that the A-D value
is coalition independent.
From {\slshape CI} of the A-D value, in order to decide the payoffs of a game with general coalition structure $\mathcal{P}$, it suffices to decide the payoffs of players within each coalition, say $C\in \mathcal{P}$, without considering other coalitions $C \in \mathcal{P}$, $C \neq C(i)$. In other words, once we decide the payoffs of a coalition $C \in \mathcal{P}$, the payoffs remain unchanged even though other coalitions, $C' \in \mathcal{P}$, $C' \neq C$, vary. Thus, for any given coalition structure $\cal P$, any coalition $C \in \mathcal{P}$ is just two-fold in terms of the number of providers in $C$: {\em (i)} one
provider or {\em (ii)} two or more providers, as depicted in
Fig. \ref{fig:dualprovider}.




\begin{figure}[t!]
  \centering
  \centerline{\includegraphics[width=8cm]{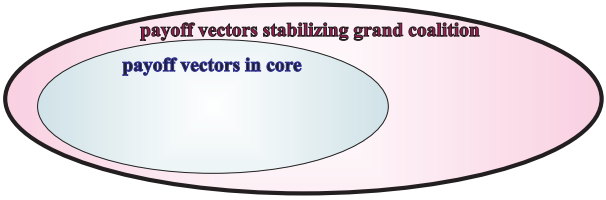}}
  \caption{If a payoff vector lies in the core, the grand coalition is
    stable \cite{refHartEndogenous}.} \label{fig:core}
\end{figure}
Yet another reason why {\slshape CI} attracts our attention is that it enables us to define the
stability of a game with coalition structure in the following simplistic way:
\begin{definition}[Stable Coalition Structure \cite{refHartEndogenous}]\label{def:stability}
We say that a coalition structure $\mathcal{P}'$ blocks $\mathcal{P}$, where $\mathcal{P}'$, $\mathcal{P} \in \mathcal{P}(N)$, with respect to $\phi$ if and only if there exists some $C \in \mathcal{P}'$ such that $\phi_i (N, v, \{ C, \cdots \}) > \phi_i (N,v,\mathcal{P})$ for all $i\in C$. In this case, we also say that $C$ blocks $\mathcal{P}$. If there does not exist any $\mathcal{P}'$ which blocks $\mathcal{P}$, $\mathcal{P}$ is called {\it stable}.
\end{definition}

Due to {\slshape CI} of the A-D value, all stability notions defined by the
seminal work of Hart and Kurz \cite{refHartEndogenous} coincide with the
above simplistic
definition, as discussed by Tutic \cite{refTuticAD}.
Definition~\ref{def:stability} can be intuitively interpreted that, if there exists any subset of players $C$ who improve their payoffs away from the current coalition structure, they {\it will} form a new coalition $C$. In other words, if a coalition structure $\mathcal{P}$ has any blocking coalition $C$, some rational players will break $\mathcal{P}$ to increase their payoffs. The basic premise here is that players are not clairvoyant, \ie, they are interested only in improving their instant payoffs in a myopic way. If a payoff vector lies in the core, the grand coalition
is stable in the sense of Definition~\ref{def:stability}, but the
converse is not necessarily true (see Fig. \ref{fig:core}).

\subsection{Comparison with Other Values}\label{sec:comparison}

In a particular category of games, called {\it
    voting} games or simple games, Banzhaf value as well as the Shapley
  value (also known as Shapley-Shubik index in this context) has been
  used in the literature (See, \eg, \cite{refCoreBring} and references
  therein). While the Shapley value has been extensively studied
  in many papers, there are no similar results for the Banzhaf
  value. For instance, the Shapley value is proven to lie in the core
  for a special type of games, called {\it convex} games, whereas there
  is no equivalent result for the Banzhaf value. Moreover, the Banzhaf
  value violates the efficiency axiom {\slshape CE} in Section
  \ref{sec:axiomatic} for a certain coalition structure $\mathcal{P} =
  \{N \}$, leading to inefficient sharing of the grand coalition worth.

As compared with Aumann-Dr{\`eze} value, a new
  value, referred to as {\it Owen} value (See, \eg, \cite[Chapter
  8.8]{refPeleg} or \cite[Chapter XII]{refOwen}) has emerged based on an
  {\it alternative} viewpoint on coalition, where a coalition forms not to share the coalition worth, but only to maximize their
  bargaining power with regard to division of the worth of the grand
  coalition. In other words, players form a labor union (coalition) to
  obtain a better bargaining position leading to a larger payoff,
  implying that the coalition efficiency axiom {\slshape CE} is also
  violated.
A delicate premise of this approach is that players must form the {\it grand coalition}, the worth of which is in fact the {\it largest} worth in superadditive games (See Definition \ref{def:superadditivity}), and bargain with each other at the same time. Also, in the context of P2P systems, whether it is more reasonable to
  nullify {\slshape CE} so that a portion of a worth of a coalition (peers and providers) $C \in \mathcal{P}$ becomes transferrable to other coalitions $C' \in \mathcal{P}$, $C \neq C'$, remains an open economic
  question.


\section{Coalition Game in Peer-Assisted Services}\label{sec:issuemultiple}


In this section, we first define a coalition game in a peer-assisted
service with multiple content providers by classifying the types of
coalition structures as {\it separated}, where a coalition includes
only one provider, and {\it coalescent}, where a coalition is allowed to
include more than one providers (see Fig. \ref{fig:dualprovider}).
To define
the coalition game, we will define a worth function of an arbitrary coalition
$S \subseteq N$ for such two cases.

\subsection{Worth Function in Peer-Assisted Services}\label{sec:worthpeer}

Assume that players $N$ are divided into two sets, the set of content
providers ${Z} \defeq \{ p_1, \cdots, p_{\zeta}\}$, and the set of peers
$ H \defeq \{n_{1} ,\cdots, n_\eta \} $, \ie, $N={Z} \cup H$. We also
assume that the peers are homogeneous, \eg, the same computing powers,
disk cache sizes, and upload bandwidths. Later, we discuss that our
results can be readily extended to nonhomogeneous peers. The set of
peers assisting providers is denoted by $ \bar H \subseteq H $ where $x \defeq |\bar H|/\eta$, \ie, the fraction of
assisting peers. We define the worth of a coalition $S$ to be the amount
of cost reduction due to cooperative distribution of the contents by the
players in $S$ in both separated and coalescent cases.


\smallskip
\noindent\underline{\bf Separated case}: Denote by $\Omega_p^{\eta}(x(S))$ the
operational cost of a provider $p$ when the coalition $S$ consists of
a single provider $p$ and $x(S) \cdot \eta$ assisting peers. Since the
operational cost cannot be negative and may decrease with the number of assisting peers, we assume the following to simplify the
 exposition:
 \begin{compactitem}
 \item Assumption: $\Omega_p^{\eta}(x)$ is non-increasing in $x$ for all $p \in {Z}$.
\end{compactitem}
Note that from the
homogeneity assumption of peers, the cost function depends only on the
fraction of assisting peers. Then, we define the worth function
$\hat{v}(S)$ for a  coalition $S$ having a single provider as:
 \begin{align}\label{eq:worthfirst}
   \hat v (S ) \defeq \Omega_p^{\eta}(0)- \Omega_p^{\eta}(x(S))
\end{align}
where $\Omega_p^{\eta}(0)$ corresponds to the cost when there are no
assisting peers. For a coalition $S$ with no provider,
  we simply have $\hat{v}(S) \defeq 0.$ For notational
simplicity, $x(S)$ is henceforth denoted by $x,$ unless confusion
arises.

\smallskip
\noindent\underline{\bf Coalescent case}:
In contrast to the separated case, where a coalition includes a single
provider, the worth for the coalescent case is not clear yet, since
depending on which peers assist which providers the amount of cost reduction
may differ. One of reasonable definitions would be the maximum worth out of
all peer partitions, \ie, the worth for the coalescent case is defined
by: for a coalition with at least two providers,
\begin{align}\label{eq:worthmultiple}
\textstyle v(S) \!\defeq\! \max \!\left\{\! \displaystyle{\sum_{C\in \mathcal{P}}} \hat v(C)
  \Big\vert {\mathcal{P} \!\in\! \mathcal{P}(S)}\!\!~\mbox{ s.t. }|Z\! \cap\! C| \!=\! 1,~ \forall C\!\in\! \mathcal{P} \! \right\},
\end{align}
and $v(S) \defeq \hat{v}(S)$ for a coalition $S$ with at most one provider.
The definition above implies that we {\it view} a coalition containing
more than one provider as the most productive coalition whose worth is
{\it maximized} by choosing the optimal partition $\mathcal{P}^* $ among
all possible partitions of $S$.  Note that \eqref{eq:worthmultiple} is
consistent with the definition \eqref{eq:worthfirst} for $|{Z}\cap S|
= 1$, \ie, $v(S)= \hat v(S)$ for $|{Z} \cap S| = 1$.

 Five remarks are in order. First, as opposed to \cite{refMisraP2P} where $ \hat v(\{ p \})= \eta R - \Omega_p^{\eta}(0)$ ($R$ is the subscription fee paid by any peer), we simply assume that $ \hat v(\{ p \} )= 0$. Note that, as discussed in \cite[Chapter 2.2.1]{refPeleg}, it is no loss of generality to assume that, initially, each provider has earned no money. In our context, this means that it does not matter how much fraction of peers is subscribing to each provider because each peer has already paid the subscription fee to providers {\it ex-ante}.

 Second, $\Omega_p^{\eta}(x)$ may not be decreasing because, for
 example, electricity expense of the computers and the maintenance cost
 of the hard disks of peers may exceed the cost reduction due to peers'
 assistance in content distribution, \eg, Annualized Failure Rate (AFR)
 of hard disk drives is over 8.6\% for three-year old ones \cite{refHardAFR}.

Third, the worth function in peer-assisted services can reflect the
diversity of peers. It is not difficult to extend our result to the case
where peers belong to distinct classes. For example, peers may be
distinguished by different upload bandwidths and different hard disk
cache sizes. A point at issue for the multiple provider case is whether
peers who are {\it not} subscribing to the content of a provider may be
allowed to assist the provider or not. On the assumption that the
content is ciphered and not decipherable by the peers who do not know
its password which is given only to the subscribers, providers will
allow those peers to assist the content distribution. Otherwise, we can
easily reflect this issue by dividing the peers into a number of classes
where each class is a set of peers subscribing to a certain
content. 

Fourth, it should be clearly understood that our worth function \eqref{eq:worthmultiple} does not encompass more than just the peer-partition optimization. That is, we speculate that cooperation among providers might lead to further expenses cut by optimizing their network resources. We recognize the lack of this `added bonus' to be the major weakness in our model.

Lastly, it should be noted that the worth function in \eqref{eq:worthmultiple} is selected in order to satisfy two properties. First of all, it follows from the definition of $v$ in
  \eqref{eq:worthmultiple} that no other coalition function $v'(\cdot)$ can be greater than
  $v(\cdot)$, \ie, $v(\cdot) \geq v'(\cdot)$ because $v$ is the total
  cost reduction that is {\it maximized} over all possible peer partitions to
  each provider.
\begin{definition}[Feasibility]\label{def:feasibility}
For all worth function $v'(\cdot)$, we have $v(S)\geq  v'(S)$ for all $S \subseteq N$.
\end{definition}
The second property, superadditivity, is one of the most elementary properties, which ensures that the core is nonempty by appealing to Bondareva-Shapley Theorem \cite[Theorem 3.1.4]{refPeleg}.
\begin{definition}[Superadditivity]\label{def:superadditivity}
A worth $v$ is superadditive if $\left( S,T\subseteq N \mbox{ and }  S \cap T =\emptyset \right) \Rightarrow v(S\cup T) \geq v(S)+v(T) $.
\end{definition}

The following lemma holds by the fact that a feasible worth function cannot be greater than \eqref{eq:worthmultiple}, \ie, the largest worth.

\begin{lem}\label{lem:sup}
When the worth for the separated case is given by
  \eqref{eq:worthfirst}, for the coalescent case, there exists a unique worth function that is {\it both} superadditive and feasible, given
  by \eqref{eq:worthmultiple}.
\end{lem}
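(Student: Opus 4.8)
The plan is to establish the two required properties, feasibility and superadditivity, separately, and then to read off uniqueness from feasibility. I will repeatedly invoke one elementary consequence of the standing assumption that each $\Omega_p^{\eta}(\cdot)$ is non-increasing: the single-provider worth $\hat v(C)=\Omega_p^{\eta}(0)-\Omega_p^{\eta}(x(C))$ is non-decreasing in the peer fraction $x(C)$, so enlarging the set of assisting peers of a single-provider coalition never lowers its worth. Feasibility is then almost immediate from the construction. Any worth function $v'$ assigns to a coalition the cost reduction realized by some concrete commitment of its peers, each peer helping exactly one provider; for a coalition $S$ with at least two providers this value is $\sum_{C\in\mathcal{P}}\hat v(C)$ for some partition $\mathcal{P}\in\mathcal{P}(S)$ obeying $|Z\cap C|=1$. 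Since \eqref{eq:worthmultiple} is by definition the maximum of exactly this quantity over all such partitions, $v'(S)\le v(S)$, while on coalitions with at most one provider $v$ and $v'$ agree through \eqref{eq:worthfirst}. Hence $v(S)\ge v'(S)$ for every $v'$ and every $S$, which is Definition~\ref{def:feasibility}.

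The substantive step is superadditivity, which I would prove by a partition-concatenation argument with a case split on the number of providers. Let $S,T\subseteq N$ be disjoint. If both contain at least one provider, pick the maximizing single-provider partitions $\mathcal{P}_S$ and $\mathcal{P}_T$ attaining $v(S)$ and $v(T)$ in \eqref{eq:worthmultiple}; since $S$ and $T$ are disjoint, $\mathcal{P}_S\cup\mathcal{P}_T$ is itself a partition of $S\cup T$ in which every block still contains exactly one provider, so it is admissible in the maximization defining $v(S\cup T)$ and yields $v(S\cup T)\ge v(S)+v(T)$. If one set, say $S$, has no provider, then $v(S)=0$ and it suffices to show $v(S\cup T)\ge v(T)$: take the optimal partition of $T$ and append all peers of $S$ to any one of its blocks, which is still an admissible single-provider partition of $S\cup T$ and whose total worth is at least $v(T)$ by the monotonicity of $\hat v$ in the peer fraction. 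The case in which neither set contains a provider is trivial, all three worths being $0$.

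Uniqueness then follows from feasibility alone. If $v''$ is any worth function that is both feasible and superadditive, then feasibility of $v''$ gives $v''\ge v$ pointwise (instantiating Definition~\ref{def:feasibility} at $v'=v$), while feasibility of $v$ gives $v\ge v''$; hence $v''=v$. Equivalently, as the text notes, $v$ is the largest worth, so no feasible worth can exceed it, and the two bounds force equality. I expect the feasibility and uniqueness arguments to be routine; the one place requiring genuine care is superadditivity, where the argument relies on the fact that optimal single-provider partitions of disjoint sets concatenate into an \emph{admissible} partition of the union, preserving the constraint $|Z\cap C|=1$, together with the correct treatment of the degenerate provider-less coalitions, for which superadditivity is recovered not by concatenation but by the monotonicity of $\hat v$.
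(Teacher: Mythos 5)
Your proof is correct, but it is organized quite differently from the paper's. The paper's proof is essentially a pure uniqueness argument: it takes an arbitrary superadditive worth $v'$ (which must agree with $\hat v$ on single-provider coalitions), iterates superadditivity over the optimal partition $\mathcal{P}^*$ attaining \eqref{eq:worthmultiple} to get $v'(S)\geq\sum_{C\in\mathcal{P}^*}v'(C)=\sum_{C\in\mathcal{P}^*}\hat v(C)=v(S)$, and combines this with the maximality bound $v\geq v'$ to conclude $v'=v$. Notably, the paper never explicitly verifies that $v$ itself is superadditive; that existence half of the lemma is left implicit. Your concatenation argument (gluing the optimal single-provider partitions of disjoint $S$ and $T$, and absorbing provider-less coalitions into an existing block via monotonicity of $\hat v$) supplies exactly this missing verification, and your treatment of the degenerate provider-less cases is the more careful one. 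Conversely, your uniqueness step carries less content than the paper's: under the literal Definition~\ref{def:feasibility} (feasibility as pointwise domination of all worth functions), any two feasible functions dominate each other, so uniqueness is automatic and superadditivity plays no role in it; the paper's argument instead establishes the sharper fact that superadditivity alone (plus achievability, $v'\leq v$) pins down $v$, which is what substantiates the paper's claim that \eqref{eq:worthmultiple} is \emph{derived} from the two essential properties rather than merely consistent with them. In short, you prove existence explicitly where the paper glosses over it, while the paper proves a stronger uniqueness statement where yours leans on the quirk of the feasibility definition; a fully airtight write-up would combine your superadditivity verification with the paper's superadditivity-based uniqueness argument.
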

\begin{proof}
Suppose we have a superadditive worth $v'$. Firstly, it follows directly from the assumption (the worth function for the separate case is \eqref{eq:worthfirst}) that  $v'(S) = \hat v(S)$ if $S$ includes one provider. {\em (i)
    Feasibility}: It follows from the definition of feasibility that we have $v(\cdot) \geq v'(\cdot)$ because $v(S)$ is the maximum over all possible partitions $\mathcal{P} \in \mathcal{P}(S)$. {\em (ii) Superadditivity}: In the meantime, since $v'$ is
  superadditive, it must satisfy $v'(S\cup T) \geq v'(S)+v'(T)$ for all
  disjoint $S$, $T \subseteq N$. This in turn implies $v'(S) \geq
  \sum_{C\in \mathcal{P}} v'(C)$ for {\bfseries\slshape all} $ \mathcal{P}$ such that $\mathcal{P} \in
  \mathcal{P}(S)$. The right-hand side $\sum_{C\in \mathcal{P}} v'(C)$ should coincide with
  $v(S)$ for {\bfseries\slshape some} $\mathcal{P}=\mathcal{P}^*$ such that $|{Z} \cap
  C| = 1$ for all $C\in \mathcal{P}^*$ (See \eqref{eq:worthmultiple}), where $\mathcal{P}^*$ is the peer partition which maximizes $v(S)$. Therefore, we have $v'(S) \geq v(S)$. Combining this with $v(\cdot) \geq v'(\cdot)$ uniquely determines $v'(\cdot)=v(\cdot)$.
\end{proof}

 In light of this lemma, we can restate that our objective in this paper is to analyze the incentive structure of peer-assisted services when the worth of coalition is feasible and superadditive. This objective in turn implies the form of worth function in \eqref{eq:worthmultiple}.

 \setcounter{tempcounter}{\value{equation}}
\setcounter{equation}{0}
\renewcommand{\theequation}{FluidAD\arabic{equation}}
\begin{figure*}[t]
\begin{minipage}{\textwidth}
\begin{equation}\label{eq:admultivalue}\left\{ \begin{array}{ll}
\widetilde{\varphi}_{p}^{{\bar Z}} (x) = \widetilde\Omega_p (0) - \sum_{S \subseteq {\bar Z} \setminus \{p\}} \int_{0}^{1} u^{|S|} (1-u)^{|{\bar Z}|-1-|S|} \left( M_{\Omega}^{S \cup \{p\}} (u x)- M_{\Omega}^{S} (u x) \right)  \ud u , & \mbox{for } p \in {\bar Z} \\
\widetilde{\varphi}_{n}^{{\bar Z}} (x) = - \sum_{S \subseteq {\bar Z}} \int_{0}^{1} u^{|S|} (1-u)^{|{\bar Z}|-|S|} \frac{\ud M_{\Omega}^S }{\ud x } (u x ) \ud u ,  & \mbox{for } n \in \bar H .
\end{array} \right.
\end{equation}
\begin{equation}\label{eq:addualvalue}\left\{ \begin{array}{ll}
\widetilde{\varphi}_{p}^{\{ p,q \}} (x) = {\widetilde\Omega}_p (0) - \int_{0}^{1}  u M_{\Omega}^{\{p,q\}} (u x ) \ud u  - \int_{0}^{1} (1-u)   M_{\Omega}^{\{p\}} (u x)   \ud u + \int_{0}^{1} u   M_{\Omega}^{\{q\}} (u x)  \ud u , & \mbox{(} p,~q \mbox{ are interchangeable)}\\
\widetilde{\varphi}_{n}^{\{ p,q \}} (x) = - \int_{0}^{1} u^2 \frac{\ud M_{\Omega}^{ \{ p,q \} } }{\ud x } (u x ) \ud u - \sum_{i\in \{ p,q \} } \int_{0}^{1} u(1-u)  \frac{\ud M_{\Omega}^{\{i\}}}{\ud x} (u x) \ud u , &\mbox{for } n \in \bar H  .
\end{array} \right.
\end{equation}
\end{minipage}
\put(-5,-41){\line(-1,0){508}}
\vspace{1mm}
\caption{Fluid Aumann-Dr{\`eze} payoff formula for multi-provider coalitions, construed as an extension of Aumann-Shapley prices to multiple atomic players.}\label{fig:equation}
\end{figure*}
\renewcommand{\theequation}{\arabic{equation}}
\setcounter{equation}{\value{tempcounter}}

\subsection{Fluid Aumann-Dr{\`eze} Value for Multi-Provider Coalitions}\label{sec:insensitivead}


So far we have defined the worth of coalitions. Now let us {\it
  distribute} the worth to the players for a given coalition structure
$\mathcal{P}$. Recall that the payoffs of players in a coalition are
independent from other coalitions by the definition of A-D
payoff. Pick a coalition $C$ without loss of generality, and denote
the set of providers in $C$ by $\bar Z \subseteq Z$. With slight notational
abuse, the set of peers assisting $\bar Z$ is denoted by $\bar H$. Once
we find the A-D payoff for a coalition consisting of arbitrary provider
set ${\bar Z} \subseteq Z $ and assisting peer set $\bar H \subseteq H$, the payoffs
for the separated and coalescent cases in Fig. \ref{fig:dualprovider}
follow from the substitutions $\bar Z = Z$ and $\bar Z = \{ p \}$,
respectively. In light of our discussion in Section \ref{sec:adintro},
it is more reasonable to call a Shapley-like payoff mechanism `A-D payoff' and
`Shapley payoff' respectively for the partitioned and non-partitioned
games $(N,v, \{{\bar Z} \cup \bar H , \cdots \})$ and $(N,v, \{ Z \cup H
\}
)$\footnotemark. 

 \footnotetext{On the contrary, the term `Shapley payoff' was used in \cite{refMisraP2P} to refer to the payoff for the game $(N,v, \{{\bar Z} \cup \bar H , \cdots \})$ where a proper subset of the peer set assists the content distribution.}

\smallskip
\noindent \underline{\bf Fluid Limit}: We adopt the limit axioms for a large
 population of users to overcome the computational hardness of the A-D
 payoffs:
 \begin{align}\label{eq:normalizecost}
\textstyle  \lim_{\eta \to \infty} \widetilde{\Omega}_{p}^{\eta} (\cdot) = \widetilde{\Omega}_{p} (\cdot ) ~~\mbox{   where  } \widetilde{\Omega}_{p}^{\eta} (\cdot)= \frac{1}{\eta} {\Omega}_{p}^{\eta} (\cdot)
  \end{align}
which is the asymptotic operational cost per peer in the system with a
large number of peers. We drop superscript $\eta$ from notations to denote their limits as $\eta \to \infty$. From the assumption $\Omega_p^{\eta}(x)>0$, we have $\widetilde\Omega_p(x) \geq 0$. To avoid trivial cases, we also assume ${\widetilde\Omega}_p (x) $ is not constant in the interval $x \in [0, 1]$ for any $p \in { Z}$.
We also introduce the payoff of each provider per user, defined as $\widetilde{\varphi}^{\eta}_{p}  \defeq \frac{1}{\eta} \varphi^{\eta}_{p} $.
We now derive the fluid limit equations of the payoffs, shown in Fig. \ref{fig:equation}, which can be obtained as $\eta \to \infty$. The proof of the following theorem is given in Appendix \ref{proof:th:advaluecoal}.
\begin{theorem}[A-D Payoff for Multiple Providers]\label{th:advaluecoal}
  As $\eta \to \infty$, the A-D payoffs of providers and peers
  under an arbitrary coalition $C = \bar{Z} \cup \bar{H}$ converge to \eqref{eq:admultivalue} in Fig. \ref{fig:equation} where $ M_{\Omega}^{S} (x) \defeq \min \left\{  \sum_{i\in S} \widetilde\Omega_i (y_i) ~\big\vert~ \sum_{i\in S} y_i \leq x ,~ y_i \geq 0 \right\}$ and $M_{\Omega}^{\emptyset} (x) \defeq 0$. Note that $M_{\Omega}^{\{p\}}(x)= {\widetilde\Omega}_p (x)$. 
\end{theorem}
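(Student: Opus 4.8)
The plan is to begin from the exact finite-$\eta$ A--D (Shapley) value of a tagged player inside a fixed coalition $C=\bar Z\cup\bar H$ and extract its $\eta\to\infty$ limit through the random-order interpretation of the Shapley value. The first ingredient is to rewrite the worth of an arbitrary sub-coalition in terms of $M_\Omega^S$: by the worth definition~\eqref{eq:worthmultiple} and Lemma~\ref{lem:sup}, maximizing the cost reduction over peer partitions is the same as minimizing the aggregate operational cost, so a sub-coalition with provider set $A\subseteq\bar Z$ and $k$ assisting peers has fluid worth $\sum_{i\in A}\widetilde\Omega_i(0)-M_\Omega^A(k/\eta)$. Because peers are homogeneous, the marginal contribution of the tagged player in a random order depends only on the set $A$ of providers preceding it and the number $k$ of peers preceding it; I would therefore group the $2^{|C|-1}$ Shapley terms by the pair $(A,k)$, with $\binom{x\eta}{k}$ (tagged provider) or $\binom{x\eta-1}{k}$ (tagged peer) identical copies in each group.

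Next I would determine the asymptotics of the grouped Shapley weight $\tfrac{|S|!(|C|-|S|-1)!}{|C|!}$, with $|S|=|A|+k$ and $|C|=|\bar Z|+x\eta$. Setting $u:=k/(x\eta)$ and applying Stirling's formula, the weight collapses to $\tfrac{1}{x\eta}\,u^{|A|}(1-u)^{|\bar Z|-|A|}$ for a tagged peer and to $\tfrac{1}{x\eta}\,u^{|A|}(1-u)^{|\bar Z|-1-|A|}$ for a tagged provider; this is the Bernstein/binomial statement that, given fraction $u$ of the peers ahead of the tagged player, each remaining provider independently precedes it with probability $u$, a weight that becomes the Beta density in the limit. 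Crucially the prefactor $\tfrac{1}{x\eta}$ is precisely the mesh $\Delta u$ of a Riemann partition of $[0,1]$, so as soon as the marginal contribution is written as an $O(1)$ function of $u$, the sum over $k$ is a Riemann sum converging to $\int_0^1(\cdots)\,\ud u$; since $k/\eta=ux$, this supplies the argument $ux$ appearing throughout Fig.~\ref{fig:equation}.

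The two cases then diverge only in the limit of the marginal itself. For a provider the marginal is the \emph{finite} difference $v(A\cup\{p\})-v(A)=\eta\bigl(\widetilde\Omega_p(0)-(M_\Omega^{A\cup\{p\}}(ux)-M_\Omega^A(ux))\bigr)$; the factor $\eta$ is absorbed by the per-user normalization $\widetilde\varphi_p=\varphi_p/\eta$, and the residual constant integrates as $\sum_{A}\int_0^1 u^{|A|}(1-u)^{|\bar Z|-1-|A|}\,\ud u=1$, leaving exactly $\widetilde\Omega_p(0)$ minus the $M_\Omega$-integral in the first line of~\eqref{eq:admultivalue}. For a peer the marginal is the \emph{single-peer} difference $v(A,\tfrac{k+1}{\eta})-v(A,\tfrac{k}{\eta})=\eta\bigl(\widetilde v(A,\tfrac{k+1}{\eta})-\widetilde v(A,\tfrac{k}{\eta})\bigr)$, a difference quotient that is already $O(1)$ and tends to the derivative $-\tfrac{\ud M_\Omega^A}{\ud x}(ux)$, yielding the second line; specializing $\bar Z=\{p,q\}$ and expanding the subset sum then reproduces~\eqref{eq:addualvalue} as a consistency check.

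The main obstacle I anticipate is upgrading these two limit interchanges from heuristic to rigorous. The delicate one is the peer marginal: I must justify that the discrete difference of $M_\Omega^A$ converges to its derivative and that this passage survives inside both the $k$-sum and the $\eta$-limit. Since $M_\Omega^A$ is an infimal convolution of the non-increasing costs $\widetilde\Omega_i$, it inherits their continuity and monotonicity and, under the Lipschitz regularity implicit in the cost model, is differentiable almost everywhere with uniformly bounded difference quotients on $[0,x]$, which furnishes the domination needed for dominated convergence of the Riemann sums; the Beta-type weights are only integrably singular at $u\in\{0,1\}$, and one checks that the Stirling error in the weight is uniform over the admissible range of $k$. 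Recognizing the peer formula as a diagonal Aumann--Shapley integral while the providers remain atomic finally confirms the reading of the theorem as an A--S price formula extended to multiple atomic players.
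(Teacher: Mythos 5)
Your proposal is sound and arrives at the correct formula, but it takes a genuinely different route from the paper. The paper never analyzes the finite-$\eta$ Shapley sum at all: it works axiomatically in the fluid limit, by induction on the size of the provider set. Concretely, it postulates the limit forms of coalition efficiency ({\slshape CE}), $\sum_{p\in\Xi'}\widetilde\varphi_p^{\Xi'}(x)+x\,\widetilde\varphi_n^{\Xi'}(x)=\sum_{p\in\Xi'}\widetilde\Omega_p(0)-M_\Omega^{\Xi'}(x)$, and of Myerson's fairness axiom, $\widetilde\varphi_n^{\Xi'}(x)-\widetilde\varphi_n^{\Xi'\setminus\{p\}}(x)=\frac{\ud}{\ud x}\widetilde\varphi_p^{\Xi'}(x)$; summing the latter over $p\in\Xi'$ and substituting into the former yields an ordinary differential equation for $\sum_{p\in\Xi'}\widetilde\varphi_p^{\Xi'}(x)$, which is solved by the single-provider lemma of Misra \etal\ and then reworked (integration by parts, change of variables, interchange of integration order, a combinatorial identity) into \eqref{eq:admultivalue}, with the induction hypothesis supplying the peer payoffs for provider sets one element smaller, and the provider formula recovered at the end by integrating the fairness relation. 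Your random-order argument instead proves the literal convergence claim of the theorem directly from \eqref{eq:shapleyoriginal}: your grouped-weight asymptotics ($\frac{1}{x\eta}u^{|A|}(1-u)^{|\bar Z|-1-|A|}$ for a tagged provider, $\frac{1}{x\eta}u^{|A|}(1-u)^{|\bar Z|-|A|}$ for a tagged peer) are correct, the Beta-integral identity $\sum_{A\subseteq\bar Z\setminus\{p\}}\int_0^1 u^{|A|}(1-u)^{|\bar Z|-1-|A|}\,\ud u=1$ is exactly what isolates $\widetilde\Omega_p(0)$, and the peer marginal is indeed a difference quotient of $M_\Omega^A$ converging to $-\frac{\ud M_\Omega^A}{\ud x}(ux)$. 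What your route buys is that the $\eta\to\infty$ limit is actually \emph{taken} rather than assumed to commute with the axioms (the paper inherits that presumption from its fluid-axiom framework); what it costs is the analytic overhead you acknowledge: uniform control of the Stirling errors, convergence of the discrete minimization defining the finite-$\eta$ worth to $M_\Omega^A$, and the difference-quotient-to-derivative step. On that last point, be careful that monotonicity of $\widetilde\Omega_p$ alone gives only a.e.\ differentiability, not the uniformly bounded (Lipschitz) difference quotients you invoke — although the paper's formula already presupposes $M_\Omega^S$ differentiable, so this regularity debt is shared rather than specific to your argument. Also, your weights have non-negative exponents, so there is no singularity at $u\in\{0,1\}$ to excuse; that remark can simply be dropped.
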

The following corollaries are immediate as special cases of
  Theorem~\ref{th:advaluecoal}, which we will use in Section~\ref{sec:critique}.
\begin{corollary}[A-D Payoff for Single Provider]\label{cor:advalue}
As $\eta \to \infty$, the A-D payoffs of providers and peers who belong to a single-provider coalition, \ie, ${\bar Z} = \{ p \}$, converge to:
\begin{equation}\label{eq:adsinglevalue}\left\{ \begin{array}{ll}
\widetilde{\varphi}_{p}^{\{p\}} (x) = {\widetilde\Omega}_p (0) - \int_{0}^{1} M_{\Omega}^{\{p\}} (u x) \ud u , & \\
\widetilde{\varphi}_{n}^{\{p\}} (x) = - \int_{0}^{1} u\frac{\ud M_{\Omega}^{\{p\}} }{\ud x}  ( u x) \ud u , &\mbox{for } n \in \bar H  .
\end{array} \right.
\end{equation}
\end{corollary}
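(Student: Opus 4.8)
The plan is to obtain Corollary~\ref{cor:advalue} purely by specializing the general formula \eqref{eq:admultivalue} of Theorem~\ref{th:advaluecoal} to the single-provider coalition $\bar Z = \{p\}$. Since Theorem~\ref{th:advaluecoal} already establishes convergence of the A-D payoffs to \eqref{eq:admultivalue} for an arbitrary provider set $\bar Z$, the convergence claim of the corollary is inherited at once; what remains is an algebraic collapse of the sums when $|\bar Z| = 1$. No new limiting argument is needed, because the passage $\eta \to \infty$ has already been carried out in the proof of the theorem.

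For the provider payoff, I would first note that $\bar Z \setminus \{p\} = \emptyset$, so the sum over $S \subseteq \bar Z \setminus \{p\}$ retains only the single term $S = \emptyset$. With $|S| = 0$ and $|\bar Z| - 1 - |S| = 0$, the binomial weight $u^{|S|}(1-u)^{|\bar Z|-1-|S|}$ reduces to $1$, and using $M_\Omega^\emptyset \equiv 0$ the integrand becomes $M_\Omega^{\{p\}}(ux)$. This yields $\widetilde\varphi_p^{\{p\}}(x) = \widetilde\Omega_p(0) - \int_0^1 M_\Omega^{\{p\}}(ux)\,\ud u$, exactly as stated (the constant term is already $\widetilde\Omega_p(0)$ in the general formula, so no rewriting via $M_\Omega^{\{p\}}(0)$ is required).

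For the peer payoff, $S$ now ranges over both subsets of $\bar Z = \{p\}$, namely $S = \emptyset$ and $S = \{p\}$. The term $S = \emptyset$ carries the factor $\frac{\ud M_\Omega^\emptyset}{\ud x}$, which vanishes identically because $M_\Omega^\emptyset \equiv 0$; the term $S = \{p\}$ has weight $u^{1}(1-u)^{0} = u$, giving $\widetilde\varphi_n^{\{p\}}(x) = -\int_0^1 u\,\frac{\ud M_\Omega^{\{p\}}}{\ud x}(ux)\,\ud u$. Both expressions then coincide with the corollary's statement.

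I do not expect a genuine obstacle here: the argument is a direct substitution, and the only care needed is the bookkeeping of the binomial weights $u^{|S|}(1-u)^{|\bar Z|-1-|S|}$ at $|\bar Z| = 1$ and the repeated observation that every occurrence of the empty-coalition cost $M_\Omega^\emptyset$ drops out. The point worth making explicit is simply that these two cancellations are what reduce the multi-provider formula to the compact single-provider form, so the corollary follows \emph{immediately} as claimed rather than requiring any independent derivation.
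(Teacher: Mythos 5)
Your proposal is correct and matches the paper's approach exactly: the paper states that Corollary~\ref{cor:advalue} is ``immediate as a special case'' of Theorem~\ref{th:advaluecoal}, and your substitution of $\bar Z = \{p\}$ into \eqref{eq:admultivalue} --- with the single term $S=\emptyset$ in the provider sum, the vanishing $M_{\Omega}^{\emptyset}$ contributions, and the weight $u^{|S|}(1-u)^{|\bar Z|-|S|}$ reducing to $u$ for $S=\{p\}$ in the peer sum --- is precisely that specialization. Nothing is missing.
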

\begin{corollary}[A-D Payoff for Dual Providers]\label{cor:addualvalue}
As $\eta \to \infty$, the A-D payoffs of providers and peers who belong to a dual-provider coalition, \ie, ${\bar Z} = \{ p,q \}$, converge to \eqref{eq:addualvalue}.
\end{corollary}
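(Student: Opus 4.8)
The plan is to obtain \eqref{eq:addualvalue} as a direct specialization of Theorem~\ref{th:advaluecoal}: I would set $\bar Z = \{p,q\}$ in the general formula \eqref{eq:admultivalue}, so that $|\bar Z| = 2$, and then enumerate the finitely many subsets appearing in each sum. No new limiting argument is needed, since the $\eta \to \infty$ convergence is already supplied by the theorem; what remains is purely the bookkeeping of the binomial weights $u^{|S|}(1-u)^{|\bar Z|-1-|S|}$ and $u^{|S|}(1-u)^{|\bar Z|-|S|}$ together with the convention $M_\Omega^\emptyset(x) = 0$.

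For the provider payoff I would specialize the first line of \eqref{eq:admultivalue}. With $p \in \bar Z = \{p,q\}$ the outer sum runs over $S \subseteq \bar Z \setminus \{p\} = \{q\}$, hence over exactly the two subsets $S = \emptyset$ and $S = \{q\}$. Since $|\bar Z| - 1 = 1$, the weight $u^{|S|}(1-u)^{1-|S|}$ equals $(1-u)$ for $S = \emptyset$ and $u$ for $S = \{q\}$. Substituting $M_\Omega^\emptyset = 0$ in the $S = \emptyset$ term leaves $\int_0^1 (1-u) M_\Omega^{\{p\}}(ux)\,\ud u$, while the $S = \{q\}$ term contributes $\int_0^1 u\bigl(M_\Omega^{\{p,q\}}(ux) - M_\Omega^{\{q\}}(ux)\bigr)\,\ud u$. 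Collecting these against $\widetilde\Omega_p(0)$ reproduces the first line of \eqref{eq:addualvalue}, and the interchangeability of $p$ and $q$ is immediate by relabeling.

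For the peer payoff I would specialize the second line of \eqref{eq:admultivalue}, where now the sum runs over all of $S \subseteq \bar Z = \{p,q\}$, \ie, the four subsets $\emptyset$, $\{p\}$, $\{q\}$, $\{p,q\}$, with weight $u^{|S|}(1-u)^{2-|S|}$. The $S = \emptyset$ term vanishes because $M_\Omega^\emptyset \equiv 0$ forces $\ud M_\Omega^\emptyset / \ud x \equiv 0$. The singleton terms $S = \{p\}$ and $S = \{q\}$ each carry weight $u(1-u)$ and together yield $\sum_{i \in \{p,q\}} \int_0^1 u(1-u)\,(\ud M_\Omega^{\{i\}}/\ud x)(ux)\,\ud u$, while the full set $S = \{p,q\}$ carries weight $u^2$ and yields $\int_0^1 u^2\,(\ud M_\Omega^{\{p,q\}}/\ud x)(ux)\,\ud u$. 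Summing with the overall minus sign reproduces the second line of \eqref{eq:addualvalue}.

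The main point to watch --- rather than a genuine obstacle --- is the degenerate $S = \emptyset$ contribution, which behaves differently in the two formulas: in the provider formula it survives (contributing the $M_\Omega^{\{p\}}$ integral) because the bracket $M_\Omega^{S \cup \{p\}} - M_\Omega^S$ retains its nonempty part, whereas in the peer formula it disappears entirely. Once the $M_\Omega^\emptyset = 0$ convention is applied consistently, the specialization is a one-to-one matching of terms, so I expect the proof to be a short verification with no analytic content beyond Theorem~\ref{th:advaluecoal}.
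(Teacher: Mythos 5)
Your proposal is correct and is exactly the paper's route: the paper states Corollary~\ref{cor:addualvalue} as an immediate specialization of Theorem~\ref{th:advaluecoal}, and your explicit enumeration of $S \subseteq \{q\}$ for the provider line and $S \subseteq \{p,q\}$ for the peer line, with the convention $M_{\Omega}^{\emptyset} \equiv 0$, is precisely the bookkeeping that makes this ``immediate.'' Your term-by-term match against \eqref{eq:addualvalue}, including the observation that the $S=\emptyset$ term survives in the provider formula but vanishes in the peer formula, is accurate and complete.
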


  Note that our A-D payoff formula in Theorem~\ref{th:advaluecoal}
  generalizes the formula in Misra \etal \cite[Theorem 4.3]{refMisraP2P}
  (\ie, $|{Z}|=1$). It also establishes the A-D values for distinguished
  {\it multiple} atomic players (the providers) and infinitesimal
  players (the peers), in the context of the Aumann-Shapley (A-S) prices
  \cite{refAS} in coalition game theory.


  Our formula for the peers is interpreted as follows: Take the
  second line of \eqref{eq:addualvalue} as an example. Recall the
  definition of the Shapley value \eqref{eq:shapleyoriginal}. The payoff
  of peer $n$ is the {\it marginal} cost reduction $v(S \cup \{n\})-
  v(S)$ that is {\it averaged} over all equally probable arrangements,
  \ie, the orders of players. It is also implied by
  \eqref{eq:shapleyoriginal} that the {\it expectation} of the marginal
  cost is computed under the assumption that the events $|S|=y$ and
  $|S|=y'$ for $y \neq y'$ are {\it equally probable}, \ie,
  $\probability ( |S|=y) = \probability (|S|=y')$. Therefore, in our
  context of infinite player game in Theorem \ref{th:advaluecoal}, for
  every values of $u x$ along the interval $[0, x]$, the subset $S
  \subseteq {\bar Z} \cup \bar H$ contains $u x$ fraction of the
  peers. More importantly, the probability that each provider is a
  member of $S$ is simply $u$ because the size of peers in $S$, $\eta u
  x$, is infinite as $\eta \to \infty$ so that the {\it size} of $S$ is
  not affected by whether a provider belongs to $S$ or not. Therefore,
  the marginal cost reduction of each peer on the condition that both
  providers are contained in $S$ becomes $ - u^2 \frac{\ud M_{\Omega}^{
      \{ p,q \} } }{\ud x } (u x ) $. Likewise, the marginal cost
  reduction of each peer on the condition that only one provider is in
  the coalition is $ - u(1-u) \frac{\ud M_{\Omega}^{\{p\}}}{\ud x } (u
  x)$.


\section{Instability of the Grand Coalition}\label{sec:gcinstability}

In this section, we study the stability of the grand coalition to see if {\it rational} players are willing to form the grand coalition, only under which they can be paid their respective {\it fair} Shapley payoffs. The key message of this section is that
the rational behavior of the providers makes the Shapley
value approach {\it unworkable} because the major premise of the Shapley
value, the grand coalition, is not formed in the multi-provider games.

\begin{figure*}[t!]
\begin{center}
  \subfigure{\label{fig:ex1a}
        \includegraphics[width=4.25cm]{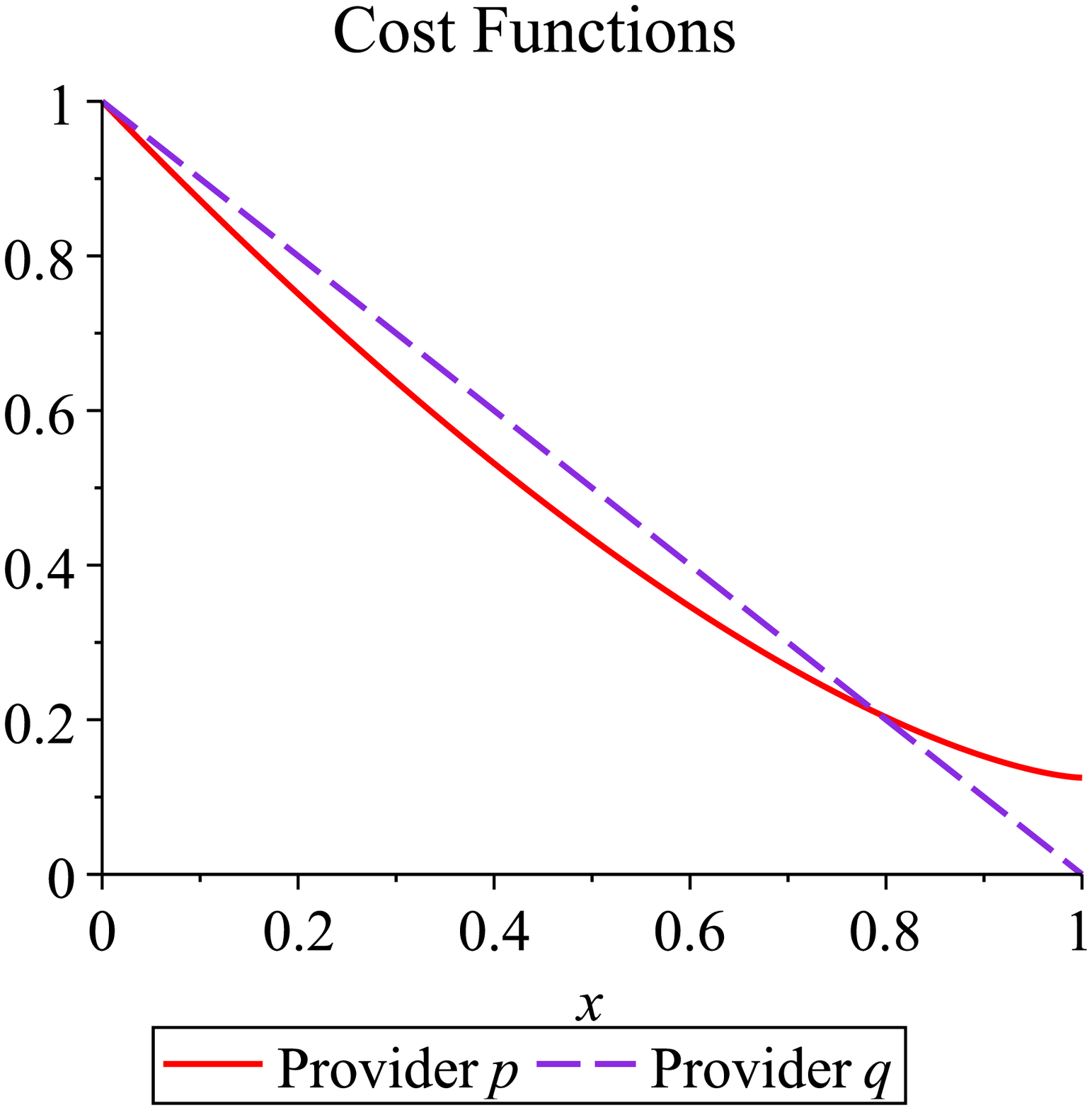}}
 \subfigure{\label{fig:ex1b}
        \includegraphics[width=4.25cm]{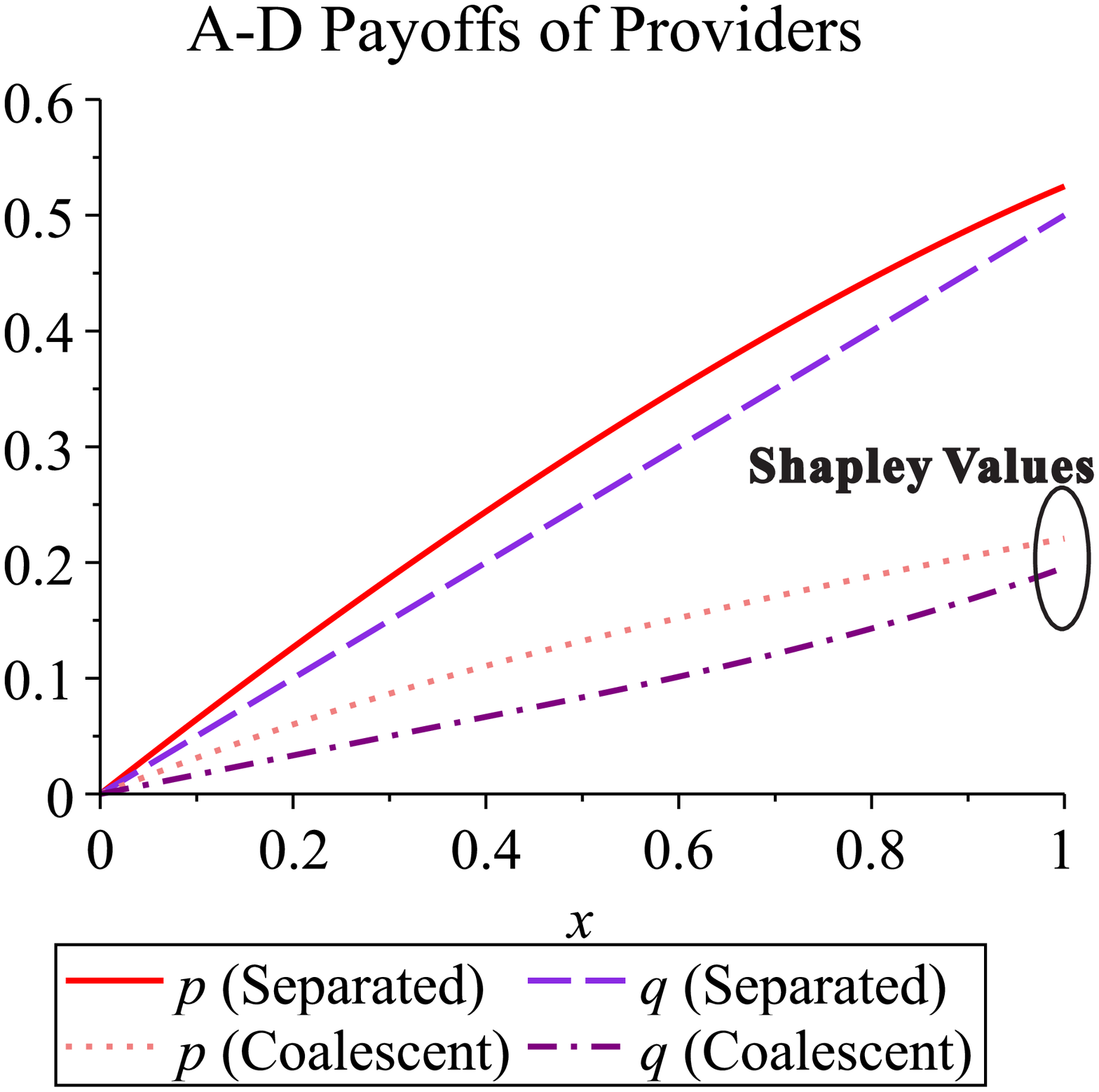}}
  \subfigure{\label{fig:ex1c}
        \includegraphics[width=4.25cm]{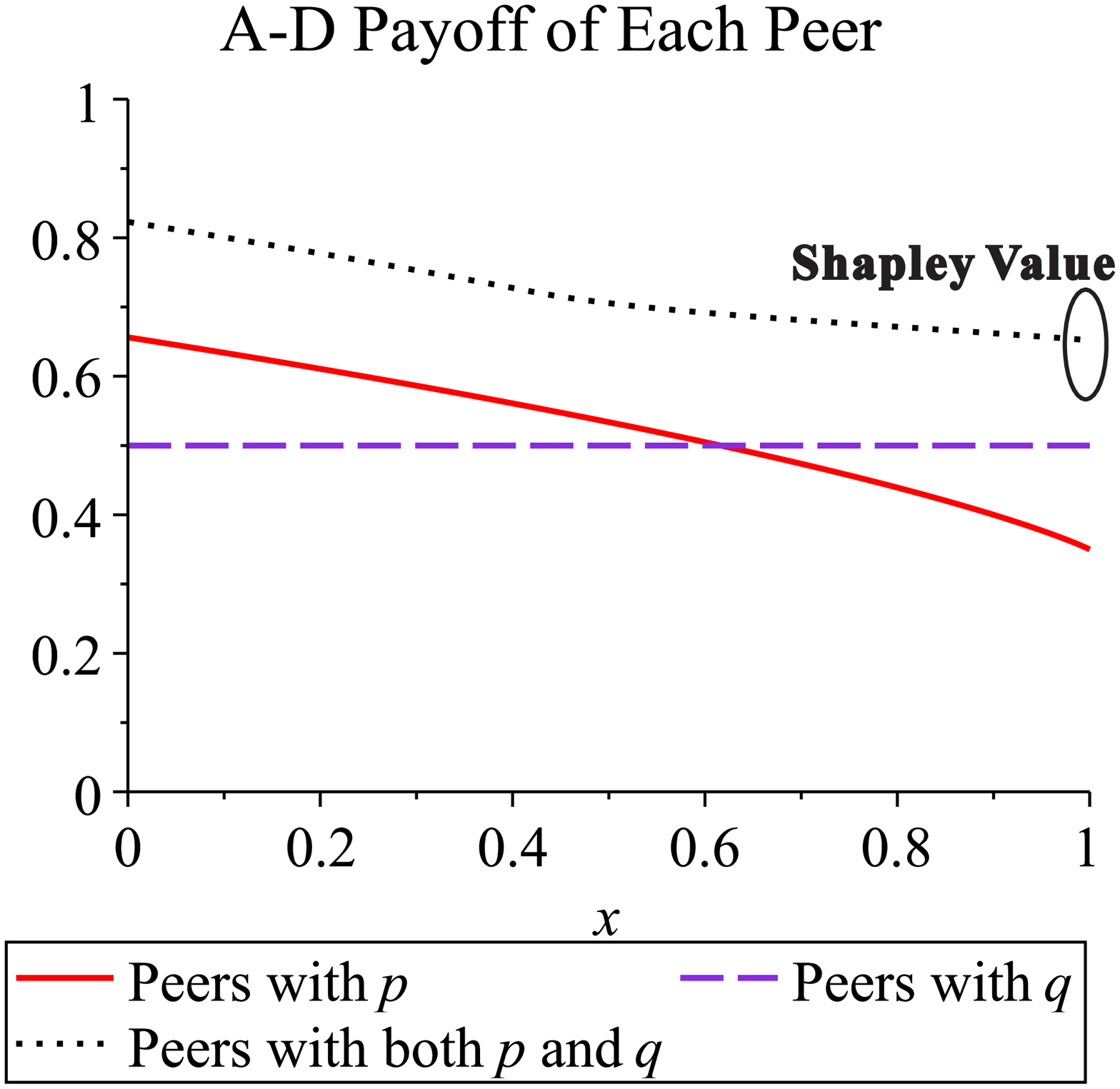}}
\caption{Example \ref{ex:unfairness}: A-D Payoffs of Two Providers and Peers for Convex Cost Functions.} \label{fig:ex1}
\end{center}
\end{figure*}

\subsection{Stability of the Grand Coalition}\label{sec:core}

Guaranteeing the stability of a payoff vector has been an important
topic in coalition game theory.  For the single-provider case,
$|{Z}|=1$, it was shown in \cite[Theorem 4.2]{refMisraP2P} that, if the
cost function is decreasing and concave, the Shapley incentive structure
lies in the core of the game. What if for $|{Z}|\geq 2$? Is the grand
coalition stable for the multi-provider case? Prior to addressing this
question, we first define the following:
\begin{definition}[Noncontributing Provider]\label{def:noncontributing}
A provider $p \in { Z}$ is called {\it noncontributing} if $M_{\Omega}^{ Z}(1) - M_{\Omega}^{{ Z}\setminus \{ p\}}(1) = {\widetilde\Omega}_p(0)$.
\end{definition}
To understand this better, note that the above expression is equivalent to the following:
\begin{align}\label{eq:noncontdef}
  \textstyle  \displaystyle\sum_{i\in { Z}}\textstyle {\widetilde\Omega}_i(0) -
    M_{\Omega}^{ Z}(1)  =  \displaystyle\sum_{i\in { Z}\setminus \{ p \}
    }\textstyle {\widetilde\Omega}_i(0) - M_{\Omega}^{{ Z}\setminus \{ p\}}(1)
\end{align}
which implies that there is no difference in the total cost reduction,
irrespective of whether the provider $p$ is in the provider set or not.
Interestingly, if all cost functions are concave, there exists at least
one noncontributing provider.
\begin{lem}\label{lem:concave}
Suppose $|{ Z}|\geq 2$. If ${\widetilde\Omega}_p(\cdot)$ is concave for all $p\in { Z}$, there exist $|Z|-1$  noncontributing providers.
\end{lem}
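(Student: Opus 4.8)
The plan is to reduce the defining condition of a noncontributing provider to a simple statement about the maximal cost reduction, and to isolate the one genuinely nontrivial fact: under concavity the cost-minimizing allocation concentrates all assisting peers on a single provider. Everything else is bookkeeping on top of that.

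First I would establish the key identity that, for concave (and non-increasing, nonnegative) cost functions,
\[
  M_\Omega^S(x) \;=\; \sum_{i\in S}\widetilde\Omega_i(0)\;-\;\max_{j\in S}\big(\widetilde\Omega_j(0)-\widetilde\Omega_j(x)\big).
\]
The argument is that $y\mapsto\sum_{i\in S}\widetilde\Omega_i(y_i)$ is a sum of concave functions, hence concave, and the feasible set $\{y: y_i\ge 0,\ \sum_{i\in S}y_i\le x\}$ is a simplex. A concave function attains its minimum over a polytope at a vertex, since any point is a convex combination $\sum_k\lambda_k v_k$ of vertices and concavity gives $f\big(\sum_k\lambda_k v_k\big)\ge\sum_k\lambda_k f(v_k)\ge\min_k f(v_k)$. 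The vertices of this simplex are the origin $\mathbf 0$ and the points $x\mathbf e_j$ for $j\in S$; evaluating the objective gives $\sum_i\widetilde\Omega_i(0)$ at $\mathbf 0$ and $\sum_i\widetilde\Omega_i(0)-\big(\widetilde\Omega_j(0)-\widetilde\Omega_j(x)\big)$ at $x\mathbf e_j$, and taking the minimum yields the identity. Writing $\delta_j(x)\defeq\widetilde\Omega_j(0)-\widetilde\Omega_j(x)\ge 0$ (nonnegativity here uses the non-increasing assumption), this says it is optimal to route all assisting peers to the single provider with the largest cost reduction $\delta_j(x)$.

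Next I would substitute this into the equivalent form \eqref{eq:noncontdef} of the noncontributing condition. With $\delta_j\defeq\delta_j(1)$, the left-hand side $\sum_{i\in Z}\widetilde\Omega_i(0)-M_\Omega^Z(1)$ collapses to $\max_{j\in Z}\delta_j$, and the right-hand side $\sum_{i\in Z\setminus\{p\}}\widetilde\Omega_i(0)-M_\Omega^{Z\setminus\{p\}}(1)$ collapses to $\max_{j\in Z\setminus\{p\}}\delta_j$. Hence provider $p$ is noncontributing \emph{precisely} when removing it does not change the maximal cost reduction, i.e. $\max_{j\in Z}\delta_j=\max_{j\in Z\setminus\{p\}}\delta_j$. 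Finally, for the counting: fix a maximizer $p^\ast\in\argmax_{j\in Z}\delta_j$. For every $p\neq p^\ast$ the index $p^\ast$ still lies in $Z\setminus\{p\}$, so $\max_{j\in Z\setminus\{p\}}\delta_j=\delta_{p^\ast}=\max_{j\in Z}\delta_j$, and $p$ is noncontributing. This exhibits the $|Z|-1$ noncontributing providers $Z\setminus\{p^\ast\}$; if the maximum is attained by several providers, even more qualify, which only strengthens the conclusion.

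The main obstacle is the concentration identity for $M_\Omega^S$: all the mathematical content sits in the fact that minimizing a sum of concave functions over the simplex places the mass at a single corner $x\mathbf e_j$. Once that is in hand, the noncontributing condition is exactly the invariance of $\max_j\delta_j$ under deletion of $p$, and the counting is immediate. I would take care to invoke the non-increasing hypothesis only where it is actually used (to sign $\delta_j\ge 0$ and to match the cost-reduction interpretation), since the vertex/corner argument itself relies on concavity alone.
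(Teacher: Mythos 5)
Your proof is correct and takes essentially the same route as the paper: the paper likewise notes that the objective $\sum_{i}\widetilde\Omega_i(y_i)$ is concave as a sum of concave functions, so its minimum over the simplex $Y(x)$ is attained at an extreme point, forcing $y_i=0$ for $|Z|-1$ providers. Your write-up is in fact more complete than the paper's sketch, which cites a nonlinear-programming theorem for the vertex fact and leaves implicit both the identity $M_\Omega^S(x)=\sum_{i\in S}\widetilde\Omega_i(0)-\max_{j\in S}\bigl(\widetilde\Omega_j(0)-\widetilde\Omega_j(x)\bigr)$ and the final counting step identifying $Z\setminus\{p^\ast\}$ as the noncontributing providers.
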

To prove this, recall the definition of $M_{\Omega}^{ Z}(\cdot)$:
$$ \textstyle M_{\Omega}^{{ Z}} (x) = \min_{y\in Y(x)}   \sum_{i\in { Z}} {\widetilde\Omega}_i (y_i) $$ $$ \textstyle \mbox{where}~ Y(x)\defeq \{ (y_1,\cdots,y_{|{ Z}|}) ~\big\vert~  \sum_{i\in { Z}} y_i \leq x ,~ y_i \geq 0 \}.$$
  Since the summation of concave functions is concave and the minimum of a concave function  over a convex feasible region $Y(x)$ is an {\it extreme} point of $Y(x)$ as shown in \cite[Theorem 3.4.7]{refNonlinear}, we can see that the solutions of the above minimization are the extreme points of $\{ (y_1, \cdots,y_{|{ Z}|}) ~\vert~ \sum_{i\in { Z}} y_i \leq x ,~ y_i \geq 0\}$, which in turn imply $y_i=0$ for $|Z|-1$ providers in $Z$. Note that the condition $|Z|\geq 2$ is {\it necessary} here.

 We are ready to state the following theorem, a direct consequence of Theorem \ref{th:advaluecoal}. Its proof is in Appendix \ref{proof:th:adnotcore}.
\begin{theorem}[Shapley Payoff Not in the Core]\label{th:adnotcore}
If there exists a noncontributing provider, the Shapley payoff for the game $({ Z}\cup  H, v)$ does not lie in the core.
\end{theorem}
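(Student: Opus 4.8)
The plan is to prove non-membership in the core (Definition~\ref{def:core}) by exhibiting a single coalition whose core constraint is violated. Let $p$ be the noncontributing provider and set $K \defeq (Z\setminus\{p\})\cup H$, the coalition of all peers together with every provider except $p$ (note that, by non-constancy of $\widetilde\Omega_p$, a noncontributing provider can exist only when $|Z|\ge 2$, so $K$ still contains a provider and $v(K)$ is well defined). Since passing from $\varphi$ to the fluid quantities $\widetilde\varphi$ of Theorem~\ref{th:advaluecoal} merely rescales payoffs and worths by a common factor $\eta$, I carry out the two computations with the fluid formulas and read the conclusion back for $(Z\cup H,v)$.

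\emph{Worth identity.} In the fluid limit the worth of a coalescent coalition with provider set $\bar Z$ and all peers assisting ($x=1$) equals $\sum_{i\in\bar Z}\widetilde\Omega_i(0)-M_\Omega^{\bar Z}(1)$, as follows from \eqref{eq:worthmultiple}. Evaluating this at $\bar Z=Z$ and at $\bar Z=Z\setminus\{p\}$ and subtracting, the defining relation \eqref{eq:noncontdef} of a noncontributing provider gives $\tilde v(Z\cup H)=\tilde v(K)$; that is, $p$ brings no additional worth, so $v(K)=v(Z\cup H)$.

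\emph{Positive payoff of $p$.} The crux is that the noncontributing $p$ is nonetheless paid a strictly positive amount, $\widetilde\varphi_p^{Z}(1)>0$. Starting from the first line of \eqref{eq:admultivalue} with $\bar Z=Z$ and $x=1$, feasibility of the allocation $y_p=0$ gives $M_\Omega^{S\cup\{p\}}(u)-M_\Omega^{S}(u)\le\widetilde\Omega_p(0)$ for every $S\subseteq Z\setminus\{p\}$ and every $u$. The binomial weights satisfy $\sum_{S\subseteq Z\setminus\{p\}}u^{|S|}(1-u)^{|Z|-1-|S|}=1$ for all $u$, so the subtracted integral in \eqref{eq:admultivalue} is at most $\widetilde\Omega_p(0)$, whence $\widetilde\varphi_p^{Z}(1)\ge 0$. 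Strictness is forced by the term $S=\emptyset$, whose marginal is $M_\Omega^{\{p\}}(u)=\widetilde\Omega_p(u)$: since $\widetilde\Omega_p$ is non-increasing and, by our standing assumption, non-constant on $[0,1]$, we have $\widetilde\Omega_p(u)<\widetilde\Omega_p(0)$ on a set of positive measure where the weight $(1-u)^{|Z|-1}$ is also positive. Hence the subtracted integral is strictly below $\widetilde\Omega_p(0)$ and $\widetilde\varphi_p^{Z}(1)>0$, i.e.\ $\varphi_p(Z\cup H,v)>0$.

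\emph{Violation of the core.} Coalition Efficiency (Axiom~\ref{ax:ce}) applied to the grand coalition gives $\sum_{i\in Z\cup H}\varphi_i=v(Z\cup H)$, and since $K$ omits only $p$,
\begin{equation*}
\sum_{i\in K}\varphi_i(Z\cup H,v)=v(Z\cup H)-\varphi_p(Z\cup H,v)<v(Z\cup H)=v(K).
\end{equation*}
Thus $K$ violates the core inequality $\sum_{i\in K}\phi_i\ge v(K)$, so the Shapley payoff is not in the core. The one genuinely delicate step is the strict positivity $\widetilde\varphi_p^{Z}(1)>0$: the telescoping of the binomial weights alone yields only the weak bound ``$\ge 0$'', and one must single out the $S=\emptyset$ contribution and invoke non-constancy of $\widetilde\Omega_p$ to sharpen it; the worth identity and the efficiency bookkeeping are then routine.
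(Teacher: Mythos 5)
Your proposal is correct and follows essentially the same route as the paper's proof: both reduce the core violation to the constraint for the coalition $(Z\setminus\{p\})\cup H$, use the noncontributing property to make $p$'s marginal worth vanish, and then establish $\widetilde\varphi_p^{Z}(1)>0$ via the bound $M_\Omega^{S\cup\{p\}}(u)-M_\Omega^{S}(u)\le\widetilde\Omega_p(0)$ (feasibility of $y_p=0$) made strict for $S=\emptyset$ by non-constancy of $\widetilde\Omega_p$. The only difference is organizational — you separate the worth identity, positivity, and efficiency bookkeeping into explicit steps, while the paper carries the marginal-worth term symbolically through the same algebra.
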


It follows from Lemma \ref{lem:concave} that, if all operational cost functions are concave and $|{ Z}|\geq 2$, the Shapley payoff does not lie in the core.
This result appears to be in good agreement with our usual intuition. If
there is a provider who does not contribute to the coalition at all in
the sense of \eqref{eq:noncontdef} and is still being paid due to her
potential for imaginary contribution assessed by the Shapley formula
\eqref{eq:shapleyoriginal}, which is not actually exploited in the
current coalition, other players may improve their payoff sum by
expelling the noncontributing provider.

The condition $|{ Z}|\geq 2$ plays an essential role in the theorem. For
$|{ Z}|\geq 2$, the concavity of the cost functions leads to the Shapley
value not lying in the core, whereas, for the case $|{ Z}|=1$, the
concavity of the cost function is proven to make the Shapley incentive
structure lie in the core \cite[Theorem 4.2]{refMisraP2P}.

\subsection{Convergence to the Grand Coalition}\label{sec:convergence}

The notion of the core lends itself to the stability analysis of the
grand coalition {\it on the assumption} that the players are already in
the equilibrium, \ie, the grand coalition. However,
Theorem~\ref{th:adnotcore} still leaves further questions unanswered.
In particular, for the non-concave cost functions, it is unclear if the
Shapley value is not in the core, which is still an open problem. We
rather argue here that, whether the Shapley value lies in the core or
not, the grand coalition is unlikely to occur by showing that the grand
coalition is not a global attractor under some conditions.

To study the convergence of a game with coalition structure to the grand coalition, let us recall Definition \ref{def:stability}.
It is interesting that, though the notion of stability was not used in \cite{refMisraP2P}, one main argument of this work was that the system with one provider would converge to a full sharing mode, \ie, the grand coalition, hinting the importance of the following convergence result with multiple providers. The proof of the following theorem is given in Appendix \ref{proof:th:convergencetocore}.

\begin{theorem}[A-D Payoff Doesn't Lead to Grand Coalition]\label{th:convergencetocore}
Suppose $|{ Z}|\geq 2$ and ${\widetilde\Omega}_p (y) $ is not constant in the interval $y \in [0, x]$ for any $p \in { Z}$ where $x=|\bar H|/| H|$. The following holds for all $p\in { Z} $ and $n \in \bar H$.
 \begin{compactitem}
 \item The A-D payoff of provider $p$ in coalition $\{p \} \cup \bar H$ is larger than that in all coalition $T \cup \bar H$ for $\{p\} \subsetneq T \subseteq { Z}$.
 \item The A-D payoff of peer $n$ in coalition $\{p\} \cup \bar H$ is smaller than that in all coalition $T\cup \bar H$ for $\{p\} \subsetneq T \subseteq { Z}$.
 \end{compactitem}
\end{theorem}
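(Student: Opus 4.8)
The plan is to read both statements directly off the closed-form payoffs. By Theorem~\ref{th:advaluecoal} and Corollary~\ref{cor:advalue}, each payoff is an integral over $u\in(0,1)$ against the weights $u^{|S|}(1-u)^{|T|-1-|S|}$ (providers) or $u^{|S|}(1-u)^{|T|-|S|}$ (peers), and for each fixed $u$ these are exactly the law of the random provider set $S$ in which every member of $T$ is present independently with probability $u$; in particular $\sum_{S}u^{|S|}(1-u)^{\cdots}=1$. Thus both $\widetilde{\varphi}_{p}^{T}$ and $\widetilde{\varphi}_{n}^{T}$ are averages of marginal quantities of $M_{\Omega}^{S}$, and each bullet becomes an inequality between such averages for $T$ and for $\{p\}$. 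I would first record two structural facts about $M_{\Omega}^{S}$. Fact (a): for every $S$ and every $t$, $M_{\Omega}^{S\cup\{p\}}(t)-M_{\Omega}^{S}(t)\ge \widetilde{\Omega}_p(t)=M_{\Omega}^{\{p\}}(t)$, because in any feasible allocation for $M_{\Omega}^{S\cup\{p\}}(t)$ every coordinate is at most $t$, so $\widetilde{\Omega}_p(y_p)\ge\widetilde{\Omega}_p(t)$ and $\sum_{i\in S}\widetilde{\Omega}_i(y_i)\ge M_{\Omega}^{S}(\sum_i y_i)\ge M_{\Omega}^{S}(t)$ by monotonicity. Fact (b): the coalition worth $W^{S}(t)\defeq\sum_{i\in S}\widetilde{\Omega}_i(0)-M_{\Omega}^{S}(t)$ is non-decreasing in the set $S$ (since Fact (a) gives $M_{\Omega}^{S\cup\{p\}}(t)-M_{\Omega}^{S}(t)\le\widetilde{\Omega}_p(0)$).

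For the provider bullet I would subtract the two formulas and, using $\sum_{S}u^{|S|}(1-u)^{|T|-1-|S|}=1$ to absorb the single-provider term, obtain
\[
\widetilde{\varphi}_{p}^{\{p\}}(x)-\widetilde{\varphi}_{p}^{T}(x)=\int_{0}^{1}\sum_{S\subseteq T\setminus\{p\}} u^{|S|}(1-u)^{|T|-1-|S|}\Big(M_{\Omega}^{S\cup\{p\}}(ux)-M_{\Omega}^{S}(ux)-M_{\Omega}^{\{p\}}(ux)\Big)\,\ud u .
\]
By Fact (a) every bracket is non-negative, so the difference is $\ge 0$ termwise. For the strict inequality I would isolate the term $S=\{q\}$ for some $q\in T\setminus\{p\}$ (which exists since $|T|\ge2$); its weight $u(1-u)^{|T|-2}$ is positive on $(0,1)$ and its bracket is $M_{\Omega}^{\{p,q\}}(ux)-\widetilde{\Omega}_p(ux)-\widetilde{\Omega}_q(ux)$. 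Equality $M_{\Omega}^{\{p,q\}}(t)=\widetilde{\Omega}_p(t)+\widetilde{\Omega}_q(t)$ at a given $t$ forces a split $t=s+(t-s)$ on which $\widetilde{\Omega}_p$ is flat on $[s,t]$ and $\widetilde{\Omega}_q$ is flat on $[t-s,t]$; since both cost functions are non-constant on $[0,x]$ (so $\widetilde{\Omega}_i(0)>\widetilde{\Omega}_i(x)$), such plateaus cannot coexist for almost every $t\in(0,x)$, and the strict inequality survives on a set of positive measure. Making this measure-theoretic strictness fully rigorous — ruling out that the two-provider optimum collapses to the separate evaluation throughout $(0,x)$ — is the first place I expect to spend effort.

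For the peer bullet the naive analogue of Fact (a) is \emph{false}: the marginal reduction $-\ud M_{\Omega}^{S}/\ud x$ need not increase with $S$ pointwise once the cost functions are non-convex (one can exhibit a concave $\widetilde{\Omega}_q$ for which $M_{\Omega}^{\{p,q\}}-M_{\Omega}^{\{q\}}$ is \emph{increasing} on a subinterval), so $\widetilde{\varphi}_{n}^{T}$ and $\widetilde{\varphi}_{n}^{\{p\}}$ cannot be compared term by term. Instead I would rewrite the peer formula through integration by parts in $u$, using $\ud M_{\Omega}^{S}(ux)/\ud u=x\,(\ud M_{\Omega}^{S}/\ud x)(ux)$, to trade the $x$-derivatives (which are not set-monotone) for the worth values $W^{S}$ (which are, by Fact (b)). The boundary terms vanish except the one from $S=T$, giving an identity of the form $x\,\widetilde{\varphi}_{n}^{T}(x)=W^{T}(x)-R^{T}(x)$, where $R^{T}(x)$ is a signed diagonal average $\sum_{S\subseteq T}\int_{0}^{1}\frac{\ud}{\ud u}\big(u^{|S|}(1-u)^{|T|-|S|}\big)W^{S}(ux)\,\ud u$ of sub-coalition worths; for $T=\{p\}$ this collapses to $x\,\widetilde{\varphi}_{n}^{\{p\}}(x)=W^{\{p\}}(x)-\int_{0}^{1}W^{\{p\}}(ux)\,\ud u$.

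The comparison $\widetilde{\varphi}_{n}^{T}>\widetilde{\varphi}_{n}^{\{p\}}$ then reduces to bounding $R^{T}(x)-R^{\{p\}}(x)$ by the worth increase $W^{T}(x)-W^{\{p\}}(x)$, which I would attack by building $T$ up from $\{p\}$ one provider at a time; adding $q$ to $A$ yields the clean increment $x\big(\widetilde{\varphi}_{n}^{A\cup\{q\}}-\widetilde{\varphi}_{n}^{A}\big)(x)=\psi_A(x)-\sum_{S_0\subseteq A}\int_{0}^{1}\frac{\ud}{\ud u}\big(u^{|S_0|+1}(1-u)^{|A|-|S_0|}\big)\psi_{S_0}(ux)\,\ud u$, where $\psi_{S_0}\defeq W^{S_0\cup\{q\}}-W^{S_0}\ge 0$ is the non-negative marginal worth of $q$. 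Controlling this weighted combination of the $\psi_{S_0}$ — extracting its positivity from the set-monotonicity of $W^{S}$ despite the sign changes of the weight derivatives, which is exactly where the numerics show the ``newcomer-alone'' contribution dominating the possibly-negative higher-order terms — is the main obstacle of the whole proof, and closing it (together with an efficiency cross-check via Axiom~\ref{ax:ce}, $\sum_{i\in T}\widetilde{\varphi}_{i}^{T}+x\,\widetilde{\varphi}_{n}^{T}=W^{T}$, linked to the already-proved provider bullet) is what would finish the theorem.
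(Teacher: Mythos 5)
Your provider bullet is correct and is essentially the paper's own argument: after absorbing the singleton term via $\sum_{S\subseteq T\setminus\{p\}}u^{|S|}(1-u)^{|T|-1-|S|}=1$, everything rests on the inequality $M_{\Omega}^{S\cup\{p\}}(t)-M_{\Omega}^{S}(t)\ge M_{\Omega}^{\{p\}}(t)$, which is exactly the inequality the paper proves (by the same monotonicity reasoning, phrased there as the two cases $y_p=0$ and $y_p>0$). Your caveat about strictness is legitimate, but it is shared by the paper, whose own strictness sentence is garbled (it invokes ``$S=\emptyset$'', for which the inequality is an identity); both arguments tacitly need more regularity than bare monotonicity.

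The genuine gap is the peer bullet, which you leave open --- and you should know that it cannot be closed: under the theorem's stated hypotheses that bullet is \emph{false}, so the ``main obstacle'' you ran into is not a technical difficulty but the actual state of affairs. Take $Z=\{q,s\}$, $x=1$, $\widetilde\Omega_q(y)=1-y$, and $\widetilde\Omega_s(y)=1$ for $y\le 1-\epsilon$, $\widetilde\Omega_s(y)=(1-y)/\epsilon$ for $y>1-\epsilon$; both are non-increasing, non-negative, continuous, and non-constant on $[0,1]$. Feeding peers to $s$ never beats feeding them to $q$, so $M_{\Omega}^{\{q,s\}}(y)=2-y$, and Corollaries \ref{cor:advalue} and \ref{cor:addualvalue} give
\begin{align}
\widetilde\varphi_{n}^{\{s\}}(1)&=\int_{1-\epsilon}^{1}\tfrac{u}{\epsilon}\,\ud u=1-\tfrac{\epsilon}{2},\nonumber\\
\widetilde\varphi_{n}^{\{q,s\}}(1)&=\tfrac{1}{3}+\tfrac{1}{6}+\int_{1-\epsilon}^{1}\tfrac{u(1-u)}{\epsilon}\,\ud u=\tfrac{1}{2}+\tfrac{\epsilon}{2}-\tfrac{\epsilon^{2}}{3},\nonumber
\end{align}
so for small $\epsilon$ the peer is strictly \emph{better} off alone with $s$ than in the two-provider coalition, contradicting the second bullet, while the provider bullet still holds ($\widetilde\varphi_{s}^{\{s\}}=\epsilon/2>\epsilon^{2}/6=\widetilde\varphi_{s}^{\{q,s\}}$ and $\widetilde\varphi_{q}^{\{q\}}=1/2>1/2-\epsilon/2+\epsilon^{2}/6=\widetilde\varphi_{q}^{\{q,s\}}$). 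This also explains why you could not reproduce the paper's reasoning: the paper's own proof of this bullet is a non sequitur. Writing $W^{S}(x)\defeq\sum_{i\in S}\widetilde\Omega_i(0)-M_{\Omega}^{S}(x)$, the paper claims that $W^{\{p\}}(x)\le W^{T}(x)$ together with the provider bullet implies the peer bullet via {\slshape CE}, but
\begin{equation*}
x\bigl(\widetilde\varphi_{n}^{T}(x)-\widetilde\varphi_{n}^{\{p\}}(x)\bigr)=\bigl(W^{T}(x)-W^{\{p\}}(x)\bigr)+\bigl(\widetilde\varphi_{p}^{\{p\}}(x)-\widetilde\varphi_{p}^{T}(x)\bigr)-\sum_{i\in T\setminus\{p\}}\widetilde\varphi_{i}^{T}(x),
\end{equation*}
and the discarded last sum is strictly positive; in the example above it is precisely what flips the sign. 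So your instinct that the ``possibly-negative higher-order terms'' can dominate was right: your integration-by-parts/induction plan (and any other plan) can only succeed under additional hypotheses on the cost functions that the theorem does not impose.
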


In plain words, a provider, who is in cooperation with a peer set, will receive the highest dividend when she cooperates only with the peers excluding other providers whereas each peer wants to cooperate with as many as possible providers. It is surprising that, for the multiple provider case, \ie, $|{ Z}| \geq 2$, each provider benefits from forming a single-provider coalition {\it whether} the cost function is concave {\it or not}. There is no {\it positive} incentives for providers to cooperate with each other under the implementation of A-D payoffs. On the contrary, a peer always looses by leaving the grand coalition.



Upon the condition that each provider begins with a single-provider
coalition with a sufficiently large number of peers, one cannot reach the grand
coalition because some single-provider coalitions are already {\it
  stable} in the sense of the stability in Definition
\ref{def:stability}. That is, the grand coalition is not the global
attractor. For instance, take $\mathcal{P} = \{ \{p \} \cup H , \cdots \}$ as the current coalition structure where all peers are possessed by provider $p$. Then it follows from Theorem \ref{th:convergencetocore} that players cannot make any transition from $\mathcal{P}$ to $\{ \Phi \cup H , \cdots \}$ where $\Phi \subseteq Z$ is any superset of $\{p\}$ because provider $p$ will not agree to do so.

\begin{figure*}[t!]
\begin{center}
  \subfigure{\label{fig:ex2a}
        \includegraphics[width=4.25cm]{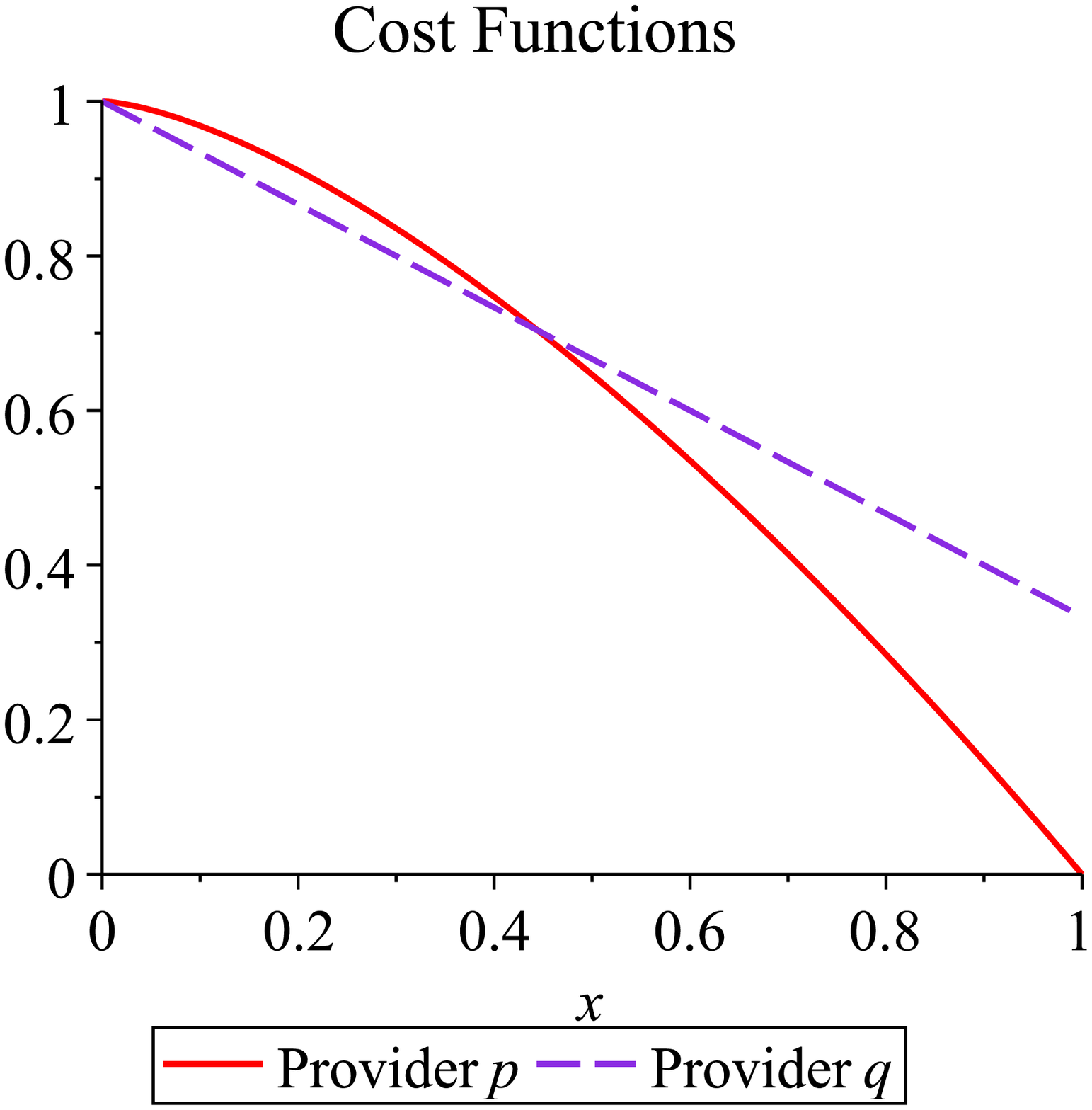}}
 \subfigure{\label{fig:ex2b}
        \includegraphics[width=4.25cm]{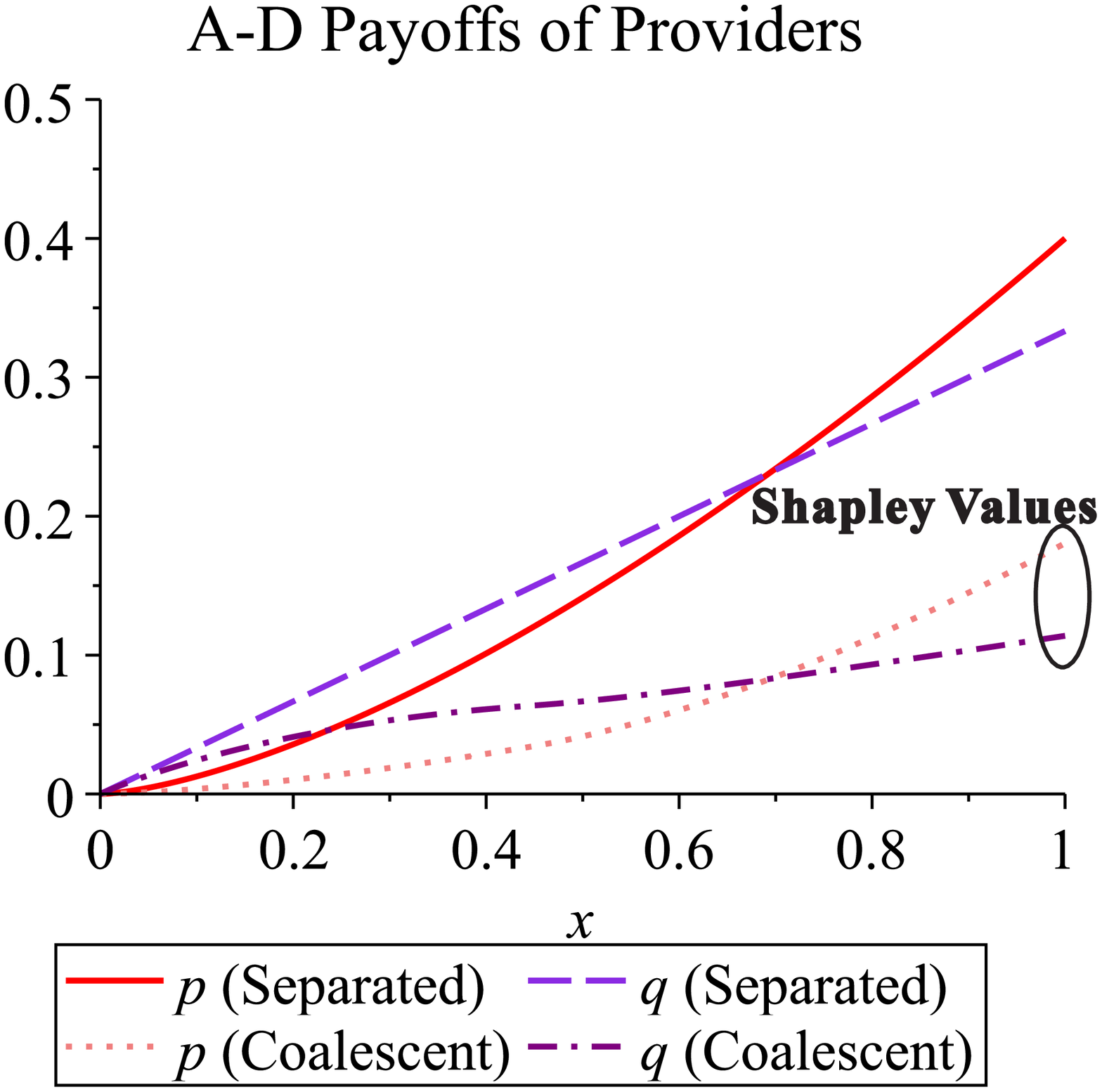}}
  \subfigure{\label{fig:ex2c}
        \includegraphics[width=4.25cm]{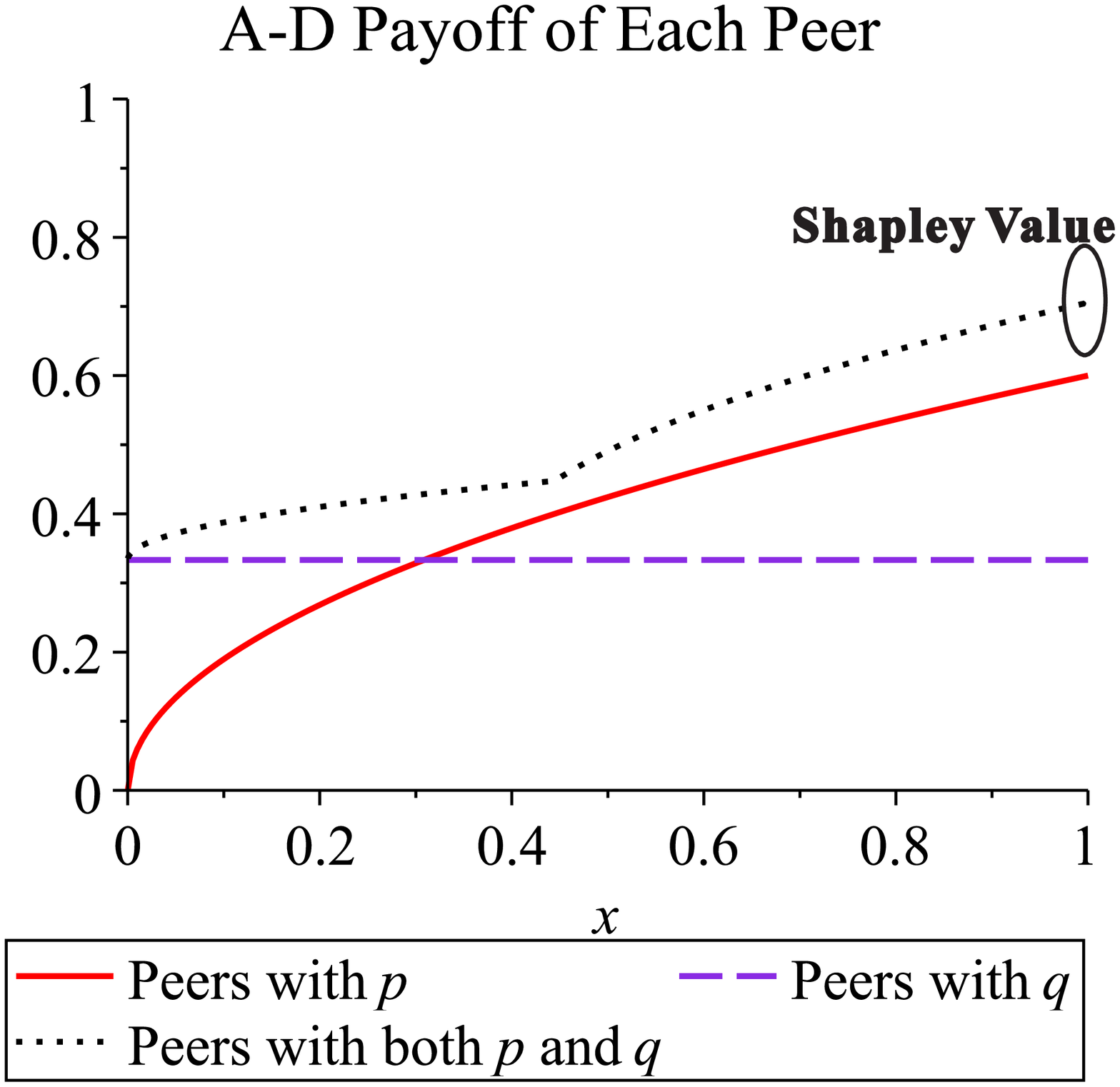}}
\caption{Example \ref{ex:monopoly}: A-D Payoffs of Two Providers and Peers for Concave Cost Functions.} \label{fig:ex2}
\end{center}
\end{figure*}

\section{Critique of A-D Payoff for Separate Providers}\label{sec:critique}

The discussion so far has focused on the stability of the grand
coalition.  The result in Theorem \ref{th:adnotcore} suggests that if
there is a noncontributing (free-riding) provider, which is true even
for concave cost functions for multiple providers, the grand coalition
will not be formed. The situation is aggravated by Theorem
\ref{th:convergencetocore}, stating that {\em single-provider
  coalitions} (\ie, the separated case) will persist if providers are
rational. We now illustrate the weak points of the A-D payoff under the
single-provider coalitions with three representative examples.

\subsection{Unfairness and Monopoly}\label{sec:fairness_monopoly}

\begin{ex}[Unfairness] \label{ex:unfairness} Suppose that there are two
  providers, \ie, ${Z}=\{p,q\}$, with ${\widetilde\Omega}_p
  (x)=7(x-1)^{1.5}/8+1/8 $ and ${\widetilde\Omega}_q (x)=1 - x $, both
  of which are decreasing and {\it
    convex}. 
  All values are shown in Fig. \ref{fig:ex1} as functions of $x$. In
  line with Theorem \ref{th:convergencetocore}, provider $p$ is paid more
  than her Shapley value, whereas peers are paid less than theirs.

  We can see that each peer $n$ will be paid $21/32$
  ($\widetilde{\varphi}_{n}^{\{p\}} (0)$) when he is contained by the
  coalition $\{p,n \}$ and the payoff decreases with the number of peers
  in this coalition.  On the other hand, provider $p$ wants to be
  assisted by as many peers as possible because
  $\widetilde{\varphi}_{p}^{\{p\}} (x)$ is increasing in $x$.  If it is
  possible for $n$ to prevent other peers from joining the coalition, he
  can get $21/32$.  However, it is more likely in real systems that no
  peer can kick out other peers, as discussed in \cite[Section
  5.1]{refMisraP2P} as well.  Thus, $p$ will be assisted by $x=0.6163$
  fraction of peers, which is the unique solution of
  $\widetilde{\varphi}_{n}^{\{p\}} (x)= \widetilde{\varphi}_{n}^{\{q\}}
  (x)$ while $q$ will be assisted by $1-x=0.3837$ fraction of peers.
\end{ex}

\smallskip
\begin{ex}[Monopoly]\label{ex:monopoly} Consider a two-provider system ${Z}=\{p,q\}$ with ${\widetilde\Omega}_p (x)=1 - x^{3/2} $ and ${\widetilde\Omega}_q (x)=1 - 2x/3 $, both of which are decreasing and {\it concave}.
Similar to Example \ref{ex:unfairness}, we can obtain $\widetilde{\varphi}_{p}^{\{p\}} (x) = 2x^{3/2}/5$, $\widetilde{\varphi}_{q}^{\{q\}} (x) = x/3$, $\widetilde{\varphi}_{n}^{\{p\}} (x) = 3x^{1/2}/5$ and $\widetilde{\varphi}_{n}^{\{q\}} (x) = 1/3$.
All values including the Shapley values are shown in Fig. \ref{fig:ex2}. Not to mention unfairness in line with Example \ref{ex:unfairness} and Theorem \ref{th:convergencetocore}, provider $p$ {\it monopolizes} the whole peer-assisted services. No provider has an incentive to cooperate with other provider. It can be seen that all peers will assist provider $p$ because $\widetilde{\varphi}_{n}^{\{p\}} (x) > \widetilde{\varphi}_{n}^{\{q\}} (x)$ for $x>25/81$. Appealing to Definition \ref{def:stability}, if the providers are initially separated, the coalition structure will converge to the service monopoly by $p$. In line with Lemma \ref{lem:concave} and Theorem \ref{th:adnotcore}, even if the grand coalition is supposed to be the initial condition, it is not stable in the sense of the core. The noncontributing provider (Definition \ref{def:noncontributing}) in this example is $q$.

\end{ex}

\subsection{Instability of A-D Payoff Mechanism}\label{sec:instability}

The last example illustrates the A-D payoff can even induce an analog of the limit cycle in nonlinear systems, \ie, a closed trajectory having the property that other trajectories spirals into it as time approaches infinity.



\smallskip
\begin{ex}[Oscillation]\label{ex:twotwo} Let us consider a game with two providers and two peers where $N= \{ p_1, p_2, n_1, n_2 \}$.
If $\{ n_1 \}$, $\{ n_2 \}$ and $\{n_1,n_2\}$ assist the content distribution of $p_1$,
the reduction of the distribution cost is respectively 10\$, 9\$ and 11\$ per month.
However, the hard disk maintenance cost incurred from a peer is 5\$.
In the meantime, if $\{ n_1 \}$, $\{ n_2 \}$ and $\{n_1,n_2\}$ assist the content distribution of $p_2$,
the reduction of the distribution cost is respectively 6\$, 3\$ and 13\$ per month.
In this case, the hard disk maintenance cost incurred from a peer is supposed to be 2\$ due to smaller contents of $p_2$ as opposed to those of $p_1$.

\end{ex}

\begin{figure}[t!]
\centering
\centerline{\includegraphics[width=8cm]{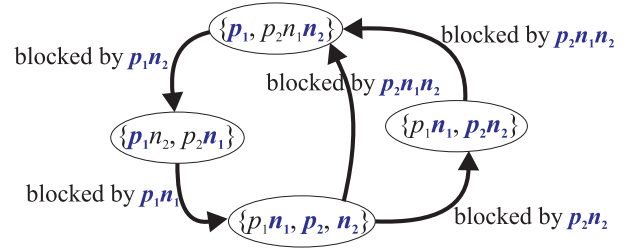}}
\caption{Example \ref{ex:twotwo}: A-D Payoff Leads to Oscillatory Coalition Structure.} \label{fig:oscillation}
\end{figure}

For simplicity, we omit the computation of
the A-D payoffs for all coalition
structures and stability analysis (see \ifthenelse{\boolean{arxiv}}{Appendix \ref{comp:ex:twotwo}}{Appendix of
  \cite{refJWYYGameNets} and Table 1 in \cite{refJWYYGameNets}} for details).
\ifthenelse{\boolean{arxiv}}{Table \ref{table:ad} contains A-D payoffs
  (and Shapley payoffs for the grand coalition) and blocking coalitions
  $C \subseteq N$ for any coalition structure where, for notational
  simplicity, we adopt a simplified expression for coalitional structure
  $\mathcal{P}$: a coalition $\{a, b , c\} \in \mathcal{P} $ is denoted
  by $ abc $ and each singleton set $\{i\}$ is denoted by $i$.  We first
  observe that the Shapley payoff of this example does not lie in the
  core.

  As time tends to infinity, the A-D payoff exhibits an oscillation of
  the partition $\mathcal{P}$ consisting of the four recurrent coalition
  structures as shown in Fig. \ref{fig:oscillation}.}{We first observe
  that the Shapley payoff of this example does not lie in the core. As
  time tends to infinity, the A-D payoff exhibits an oscillation of the
  partition $\mathcal{P}$ consisting of the four recurrent coalition
  structures as shown in Fig. \ref{fig:oscillation}, where, for
  notational simplicity, we adopt a simplified expression for
  coalitional structure $\mathcal{P}$: a coalition $\{a, b , c\} \in
  \mathcal{P} $ is denoted by $ abc $ and each singleton set $\{i\}$ is
  denoted by $i$. The evolution of coalition structure is governed by a simple rule: if there exist {\em blocking} coalitions (See Definition \ref{def:stability}), then arbitrary one of them will be formed.}

Let us begin with the partition $\{ p_1, p_2 n_1 n_2 \}$.
Player $p_1$ could have achieved the maximum payoff if he had formed a
coalition only with $n_1$.  However, player $n_1$ will remain in the
current coalition because he does not improve away from the current
coalition.  Instead, Player $n_2$ breaks the
coalition $p_2 n_1 n_2$ so that $n_2$ and $p_1$ can form coalition $p_1 n_2$ for their benefit. As
soon as the coalition $p_2 n_1 n_2$ is broken, $p_1$ betrays $n_2$ to
increase his payoff by colluding with $n_1$.
It is not clear how this behavior will be in large-scale systems, as
reported in the literature \cite{refTuticAD}.






  \setcounter{tempcounter}{\value{equation}}
\setcounter{equation}{0}
\renewcommand{\theequation}{FluidChi}
\begin{figure*}[t]
\begin{minipage}{\textwidth}
\begin{align}\label{eq:chigeneral}
\widetilde\chi_i^{\bar Z}(x)  & \textstyle = \widetilde\varphi_i^{Z}(1)  + \frac{w_i'}{x+\sum_{j\in \bar Z} w_j} \left( \sum_{j\in  \bar Z} \Omega_j(0) - M_\Omega^{\bar Z}(x) -  \left(x \widetilde\varphi_n^Z (1) + \sum_{j\in \bar Z} \widetilde\varphi_j^Z (1)\right) \right) ~ \mbox{where } w_i'= \left\{ \begin{array}{ll}
w_i, & \mbox{for } i \in \bar Z , \\
1, & \mbox{for } i \in \bar H ,
\end{array}
\right.
\end{align}
\end{minipage}
\put(-5,-19){\line(-1,0){508}}
\vspace{1mm}
\caption{Fluid $\chi$ payoff formula for multi-provider coalitions.}\label{fig:equation2}
\end{figure*}
\renewcommand{\theequation}{\arabic{equation}}
\setcounter{equation}{\value{tempcounter}}



\section{A Fair, Bargaining, and Stable Payoff Mechanism for Peer-Assisted Services}\label{sec:chivalue}

The key messages from the examples in Section~\ref{sec:critique} imply
that the A-D value of the separate case gives rise to unfairness,
monopoly, and even oscillation. Also, it turns out that some players'
coalition worth exceeds their Shapley payoffs which they are paid in the
grand coalition (Theorem \ref{th:adnotcore}). Thus, the Shapley payoff
scheme does not seem to be executable in practice because it is impossible
to make all players happy, unequivocally.  That being said, the fairness of
profit-sharing and the opportunism of players are difficult to stand together. Then, it is more reasonable to come up with a
compromising payoff mechanism that {\em (i)} forces players to {\it apportion}
the difference between the coalition worth and the sum of their fair shares,
{\em (ii)} grant providers a limited right of {\it bargaining}, and
{\em (iii)} {\it stabilize} the whole system. We will use a slightly different
notion of payoff mechanism, called $\chi$ value, originally proposed by
Casajus \cite{refChiValue}.

\subsection{An Axiomatic Characterization of $\chi$ Value}\label{sec:chiaxiomatic}

The $\chi$ value is characterized by a similar set of axioms
used for the A-D value. The only difference is that (i) {\slshape NP} is {\em weakened} to {\slshape GNP}, causing a deficiency in axiomatic characterization, which is made up by {\slshape WSP:}
\begin{axiom}[Grand Coalition Null Player, {\slshape GNP}]\label{ax:gnp}
If $v(K \cup \{ i \}) = v(K)$ for all $K \subseteq N$, then $ \phi_i (N,v,\{ N \}) =0 $.
\end{axiom}
\begin{axiom}[Weighted Splitting, {\slshape WSP}]\label{ax:wsp} If
  $\mathcal{P}'$ is finer than $\mathcal{P}$ (\ie, $C'(i) \subseteq
  C(i)$, $\forall i \in N$) and $j \in \mathcal{P}'(i)$,
\begin{align}
 \displaystyle  \frac{\phi_i (N,\!v ,\! \mathcal{P}) \!- \! \phi_i (N,\! v,\! \mathcal{P}')}{w_i}   \! = \!  \frac{\phi_j (N,\! v,\! \mathcal{P}) \!- \! \phi_j (N, \! v, \! \mathcal{P}')}{w_j}. \nonumber
 \end{align}
\end{axiom}
The cornerstone of $\chi$ value is the very observation that, as the grand coalition $\mathcal{P}=\{ N \}$ is broken into two or more coalitions, player $i$ now has another option to ally with other coalitions than $C(i) \in \mathcal{P} $ and this {\it outside option} must be assessed. To {\bf allow} the assessment of the outside options, it is inevitable to weaken {\slshape NP} (See Section \ref{sec:axiomatic}) to {\slshape GNP}, by satisfying only which, a player may receive positive payoff so far as he contributes to the worth of the grand coalition, even though he does not to that of the current coalition, \ie, {\slshape NP}. In the end, it is all about how to {\bf valuate} the outside option, the $\chi$ value's choice of which is to stick to the Shapley value by equally dividing the difference between the coalition worth and the sum of Shapley values, \ie, {\slshape WSP} for $\mathcal{P}=\{ N \}$.

Recalling the definition
$\varphi_{K} (N,v) = \sum_{i \in K} \varphi_i (N,v) $ in Section
\ref{sec:pri}, we present the following theorem (see
\cite{refChiValue,refCasajusNash} for the proof):
\begin{theorem}[$\chi$ Value]
  \label{thm:chivalue}
  The $\chi$ value is uniquely characterized
  by {\slshape CE}, {\slshape CS}, {\slshape ADD}, {\slshape GNP}
  and {\slshape WSP} as follows:
\begin{equation}\label{eq:chiformula}
\chi_i(N,v,\mathcal{P}) = \varphi_i  + \frac{w_i}{\sum_{k \in C(i)} w_k } \left( v(C(i)) - \varphi_{C(i)}\right)
\end{equation}
\end{theorem}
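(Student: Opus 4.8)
The plan is to prove the two halves of the characterization separately: that the explicit expression in \eqref{eq:chiformula} satisfies {\slshape CE}, {\slshape CS}, {\slshape ADD}, {\slshape GNP} and {\slshape WSP}, and conversely that these five axioms admit no other value. I will write $\phi$ for a generic value obeying the axioms, reserve $\varphi$ for the Shapley value, and abbreviate $W_S=\sum_{k\in S}w_k$. For the existence half, observe that {\slshape ADD}, {\slshape GNP} and {\slshape CS} are essentially inherited from the Shapley term $\varphi_i$: at the grand coalition the correction term vanishes because {\slshape CE} forces $\varphi_{C(i)}=v(C(i))$, so the formula reduces to $\varphi_i$ there, and the additivity, null-player and symmetry properties of $\varphi$ carry over. {\slshape CE} for a general coalition $C$ is a one-line telescoping: summing \eqref{eq:chiformula} over $i\in C$ gives $\varphi_C+(v(C)-\varphi_C)=v(C)$. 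The only genuinely structural check is {\slshape WSP}: for $\mathcal{P}'$ finer than $\mathcal{P}$ and $i,j$ in a common coalition $D\in\mathcal{P}'$ (hence in a common coalition $C\in\mathcal{P}$, since $D\subseteq C(i)$ and partitions force $C(j)=C(i)=C$), a direct substitution shows that $(\chi_i(N,v,\mathcal{P})-\chi_i(N,v,\mathcal{P}'))/w_i=(v(C)-\varphi_C)/W_C-(v(D)-\varphi_D)/W_D$, and the right-hand side is independent of the particular player, so it equals the analogous ratio for $j$.

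For uniqueness I would first pin the value down at the grand coalition and then propagate outward with {\slshape WSP}. When $\mathcal{P}=\{N\}$, the axiom {\slshape GNP} coincides with the classical null-player axiom {\slshape NP}, and {\slshape CE}, {\slshape CS}, {\slshape ADD}, {\slshape NP} restricted to the single coalition $N$ are precisely Shapley's four axioms. Hence any value obeying the five axioms must agree with the Shapley value there, i.e. $\phi_i(N,v,\{N\})=\varphi_i(N,v)$. If one wants full rigor rather than citing the classical result, {\slshape ADD} reduces the claim to the unanimity games, on which {\slshape CS} and {\slshape NP} fix every coordinate.

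The second step uses {\slshape WSP} with the coarse partition $\{N\}$ and the fine partition an arbitrary $\mathcal{P}$: for any $i,j$ lying in a common coalition $C\in\mathcal{P}$, {\slshape WSP} gives $(\varphi_i-\phi_i(N,v,\mathcal{P}))/w_i=(\varphi_j-\phi_j(N,v,\mathcal{P}))/w_j$, so this ratio is a constant $\lambda_C$ depending only on $C$, whence $\phi_i(N,v,\mathcal{P})=\varphi_i-\lambda_C w_i$ for every $i\in C$. Summing over $i\in C$ and invoking {\slshape CE} yields $\varphi_C-\lambda_C W_C=v(C)$, i.e. $\lambda_C=(\varphi_C-v(C))/W_C$; substituting back reproduces \eqref{eq:chiformula} exactly. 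This determines the value on every coalition of every partition, giving uniqueness.

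The main obstacle is not any single computation but getting the {\slshape WSP} bookkeeping right in both directions. In the existence half one must verify {\slshape WSP} between two \emph{arbitrary} nested partitions, not merely relative to the grand coalition, and confirm that the displayed ratio is genuinely player-independent; in the uniqueness half the subtlety is that {\slshape WSP} only constrains \emph{differences} of payoffs, so by itself it cannot fix the value — it has to be anchored by the grand-coalition computation of the previous paragraph and then closed up by {\slshape CE}. A secondary point to watch is the compatibility of {\slshape CS} with the weight system: the correction term $w_i(v(C)-\varphi_C)/W_C$ respects {\slshape CS} only when interchangeable players sharing a coalition carry equal weights, so the characterization tacitly presumes a weight system that is symmetric on symmetric players.
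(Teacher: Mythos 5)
Your proof is correct, and it is worth noting that the paper itself does not prove Theorem~\ref{thm:chivalue} at all: it defers to \cite{refChiValue,refCasajusNash}, so your argument is a self-contained reconstruction of what those references establish. Your two-step structure is the standard (and the right) one: first, at $\mathcal{P}=\{N\}$ the axioms {\slshape CE}, {\slshape CS}, {\slshape ADD}, {\slshape GNP} collapse exactly to Shapley's four axioms, which pins $\phi(N,v,\{N\})=\varphi(N,v)$; second, since every partition is finer than $\{N\}$, applying {\slshape WSP} to the pair $(\{N\},\mathcal{P})$ makes the ratio $(\varphi_i-\phi_i(N,v,\mathcal{P}))/w_i$ constant on each $C\in\mathcal{P}$, and {\slshape CE} then solves for that constant, reproducing \eqref{eq:chiformula}; the existence direction is the routine verification you carry out, with the only substantive check being {\slshape WSP} between two arbitrary nested partitions, which your player-independent expression $\left(v(C)-\varphi_C\right)/W_C-\left(v(D)-\varphi_D\right)/W_D$ handles correctly. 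Your closing caveat is also a genuine observation about the theorem statement rather than a flaw in your argument: with heterogeneous weights the correction term $w_i\left(v(C(i))-\varphi_{C(i)}\right)/\sum_{k\in C(i)}w_k$ can assign different payoffs to interchangeable players, so the weighted formula satisfies {\slshape CS} only when symmetric players sharing a coalition carry equal weights --- which holds in the paper's intended use ($w_i=1$ for all peers, with providers generically asymmetric), and which disappears entirely in the unweighted case $w_i\equiv 1$ of \cite{refChiValue}. The paper's statement glosses over this compatibility condition, and flagging it is a useful addition.
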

where $\varphi_i $ is Shapley value of player $i$ for non-partitioned
game $(N,v)=(N,v, \{ N \} )$.

\subsection{Fluid $\chi$ Value for Multi-Provider Coalitions}\label{sec:insensitivechi}

Recall $N= Z \cup H$, $\bar Z \subseteq Z$, $\bar H \subseteq H$ and $x
= |\bar H|/\eta$. To compute the $\chi$ payoff for the multiple provider
case, we first establish in the following theorem\footnotemark~a fluid
$\chi$ value in line with the analysis in Section
\ref{sec:insensitivead} with the limit axioms:
\begin{theorem}[$\chi$ Payoff for Multiple Providers]\label{th:chivaluecoal}
As $\eta$ tends to infinity, the $\chi$ payoffs of providers and peers under an arbitrary coalition $C=\bar Z \cup \bar H$ converge to \eqref{eq:chigeneral} in Fig. \ref{fig:equation2} where the Shapley payoffs $ \widetilde\varphi_i^{Z}(1)$ are given in \eqref{eq:admultivalue} in Fig. \ref{fig:equation}.
\end{theorem}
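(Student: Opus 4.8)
The plan is to descend from the exact, finite-population $\chi$ value provided by Theorem~\ref{thm:chivalue} and pass to the fluid limit $\eta\to\infty$ ingredient by ingredient. Specializing \eqref{eq:chiformula} to the coalition under study, $C(i)=C=\bar Z\cup\bar H$ with $|\bar H|=x\eta$, gives
\begin{equation*}
\chi_i(N,v,\mathcal{P}) = \varphi_i + \frac{w_i}{\sum_{k\in C} w_k}\bigl(v(C)-\varphi_C\bigr),
\end{equation*}
so it suffices to identify the $\eta\to\infty$ behavior of four constituents: the grand-coalition Shapley baseline $\varphi_i$, the coalition worth $v(C)$, the aggregate Shapley payoff $\varphi_C=\sum_{k\in C}\varphi_k$, and the weight ratio. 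Since the bracketed correction is a single scalar common to every member of $C$, the computation factors cleanly: establish the limit of the scalar excess $v(C)-\varphi_C$ once, then distribute it by the player-specific weights.

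First I would handle the two ``level'' quantities. The baseline $\varphi_i$ is, by definition, the value of player $i$ in the \emph{non-partitioned} game $(N,v,\{N\})$, i.e.\ the A-D value of the full coalition with provider set $Z$ and every peer assisting; Theorem~\ref{th:advaluecoal} applied at $\bar Z=Z$ and $x=1$ therefore gives $\frac1\eta\varphi_p\to\widetilde\varphi_p^Z(1)$ for providers and $\varphi_n\to\widetilde\varphi_n^Z(1)$ for peers. For the worth, I would invoke \eqref{eq:worthmultiple}: $v(C)$ is the maximal cost reduction obtained by optimally splitting the $x\eta$ peers among the providers of $\bar Z$, so after normalization $\frac1\eta v(C)\to \sum_{j\in\bar Z}\widetilde\Omega_j(0)-M_\Omega^{\bar Z}(x)$, directly from the definition of $M_\Omega^{\bar Z}$ in Theorem~\ref{th:advaluecoal}. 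Finally, homogeneity of the peers and axiom \emph{CS} force all $\varphi_n$ to coincide, so $\varphi_C=\sum_{j\in\bar Z}\varphi_j+x\eta\,\varphi_n$ and hence $\frac1\eta\varphi_C\to\sum_{j\in\bar Z}\widetilde\varphi_j^Z(1)+x\widetilde\varphi_n^Z(1)$; this avoids any uniform-convergence argument over the $x\eta$ peer terms.

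The delicate step, and the one I expect to be the crux, is the weight factor, since the limit of $\frac{w_i}{\sum_{k\in C}w_k}$ hinges on how the peer weights scale with $\eta$. The correct convention is that the collective peer weight is normalized to the assisting fraction, i.e.\ each peer carries weight $1/\eta$ so that $\sum_{n\in\bar H}w_n=|\bar H|/\eta=x$, while providers keep their fixed weights $w_j$; this is precisely the meaning of $w_i'=1$ for $i\in\bar H$ and $w_i'=w_i$ for $i\in\bar Z$ in \eqref{eq:chigeneral}. With this scaling $\sum_{k\in C}w_k\to x+\sum_{j\in\bar Z}w_j$, and writing $D$ for the normalized excess $\sum_{j\in\bar Z}\widetilde\Omega_j(0)-M_\Omega^{\bar Z}(x)-\bigl(x\widetilde\varphi_n^Z(1)+\sum_{j\in\bar Z}\widetilde\varphi_j^Z(1)\bigr)$, the redistribution term contributes $\frac1\eta\cdot\frac{w_p}{\sum_{k}w_k}\bigl(v(C)-\varphi_C\bigr)\to \frac{w_p}{x+\sum_{j\in\bar Z}w_j}\,D$ for a provider (payoff scaled by $1/\eta$) and $\frac{1/\eta}{\sum_{k}w_k}\bigl(v(C)-\varphi_C\bigr)\to \frac{1}{x+\sum_{j\in\bar Z}w_j}\,D$ for a peer (payoff already $O(1)$). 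Adding the level-limits of $\varphi_i$ then reproduces \eqref{eq:chigeneral} verbatim in both cases. The one point requiring genuine care is keeping the two payoff normalizations consistent—provider payoffs scaled by $1/\eta$, peer payoffs left unscaled—so that the common $O(\eta)$ bracket $v(C)-\varphi_C$ lands with the right order on each side; everything else is substitution of the earlier fluid limits.
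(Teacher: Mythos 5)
Your proposal is correct and is essentially the argument the paper intends: the paper states Theorem \ref{th:chivaluecoal} without a separate appendix proof, treating it as the direct substitution of the fluid limits from Theorem \ref{th:advaluecoal} (applied at $\bar Z = Z$, $x=1$), the worth \eqref{eq:worthmultiple}, and the weight normalization into the finite $\chi$ formula \eqref{eq:chiformula} of Theorem \ref{thm:chivalue}. Your explicit treatment of the two delicate points the paper leaves implicit --- the peer-weight scaling that makes $\sum_{n\in\bar H} w_n \to x$ (equivalently, only weight ratios matter) and the consistency between the $1/\eta$-normalized provider payoffs and the unnormalized peer payoffs against the common $O(\eta)$ excess $v(C)-\varphi_C$ --- is exactly what is needed to recover \eqref{eq:chigeneral} in both cases.
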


\footnotetext{In order to compute $\chi$ payoff of player $i$, we need
  to know not only the current coalition $C(i)$ but also Shapley values
  of players in $C(i)$. However, $\chi$ payoff still satisfies
  Definition \ref{def:ci}. Therefore, we can compute the payoff of
  player $i$ in coalition $C(i)$ irrespective of other coalitions.}

To intuitively interpret $\chi$ value, it is crucial to know the roles
of Axiom {\slshape WSP} and its weights $w_i$. In our context, because
of fairness between peers, it is more reasonable to set $w_i = 1$ for $i
\in H$. It does not make sense to differentiate payoffs between peers
due to the peer-homogeneity assumption in Section \ref{sec:worthpeer}.
On the contrary, we will clarify in Sections \ref{sec:chifair} and
\ref{sec:chibargaining} why the weights of providers $w_i $, $i \in Z$
do not necessarily have to be $1$. The essential difference between A-D
value and $\chi$ value lies in {\slshape WSP}.

\smallskip
\noindent {\bf\underline{Interpretation of WSP}}: It implies that, if
peer $i$ loses, say $\Delta_i$, when the coalition structure changes,
\eg, from the grand coalition $\mathcal{P} = \{ N \}$ to a finer
coalition structure $\mathcal{P}' \neq \{ N \}$, the provider $p \in
C(i)$ will lose $\Delta_i \times
w_p$. 
There are two implications of this weighted splitting. First, since the
payoff of each player $i$ is computed based on the baseline, \ie, the
Shapley value, and the surplus or deficit incurred by formation of the
coalition $C'(i)$ are {\em equally} distributed for $w_p=1$, $\chi$
value leads to a fair share of the profit. Secondly, now a provider may
{\em bargain} with peers over the dividend rate by setting $w_p$ to any
positive number. We elaborate on these two implications in the following
subsections.


\subsection{Fairness: Surplus-Sharing}\label{sec:chifair}

On the basis of the first implication of {\slshape WSP}, $\chi$ value
is {\it fairer} than A-D value in the following sense:
\begin{definition}[Surplus-Sharing]\label{def:pt}
A value $\phi$ of game $(N,v,\mathcal{P})$ is {\it surplus-sharing} if the following condition holds:
  if the coalition worth of coalition $C \in \mathcal {P}$ is greater than,
  equal to, or less than the sum of Shapley values of players in $C$,
  \ie, $\sum_{i \in C} \phi_i (N,v,\mathcal{P}) \gtreqqless \sum_{i \in
    C} \varphi_i(N,v)$, then the payoff of player $i \in C$ is greater
  than, equal to, or less than the Shapley value of player $i$,
  respectively, \ie, $ \phi_i (N,v,\mathcal{P}) \gtreqqless
  \varphi_i(N,v)$, for all $i \in C$ and for all $C \in \mathcal{P}$.
\end{definition}

Since we proved in Theorem \ref{th:convergencetocore} that, for $|Z|
\geq 2$, the payoff of provider $p$ in coalition $\{ p \} \cup H$
exceeds her Shapley value and that of peer $n \in H$ is smaller than
his, it is clear from this definition that A-D value is {\it not}
surplus-sharing for $|Z| \geq 2$, whereas $\chi$ value is
surplus-sharing for any $Z$, \eg, see \eqref{eq:chiformula} and
\eqref{eq:chigeneral}. For reference, both A-D and $\chi$ values are
surplus-sharing if $|Z|=1$.

\begin{figure}[t!]
\begin{center}
  \subfigure{\label{fig:cex1a}
        \includegraphics[width=4.25cm]{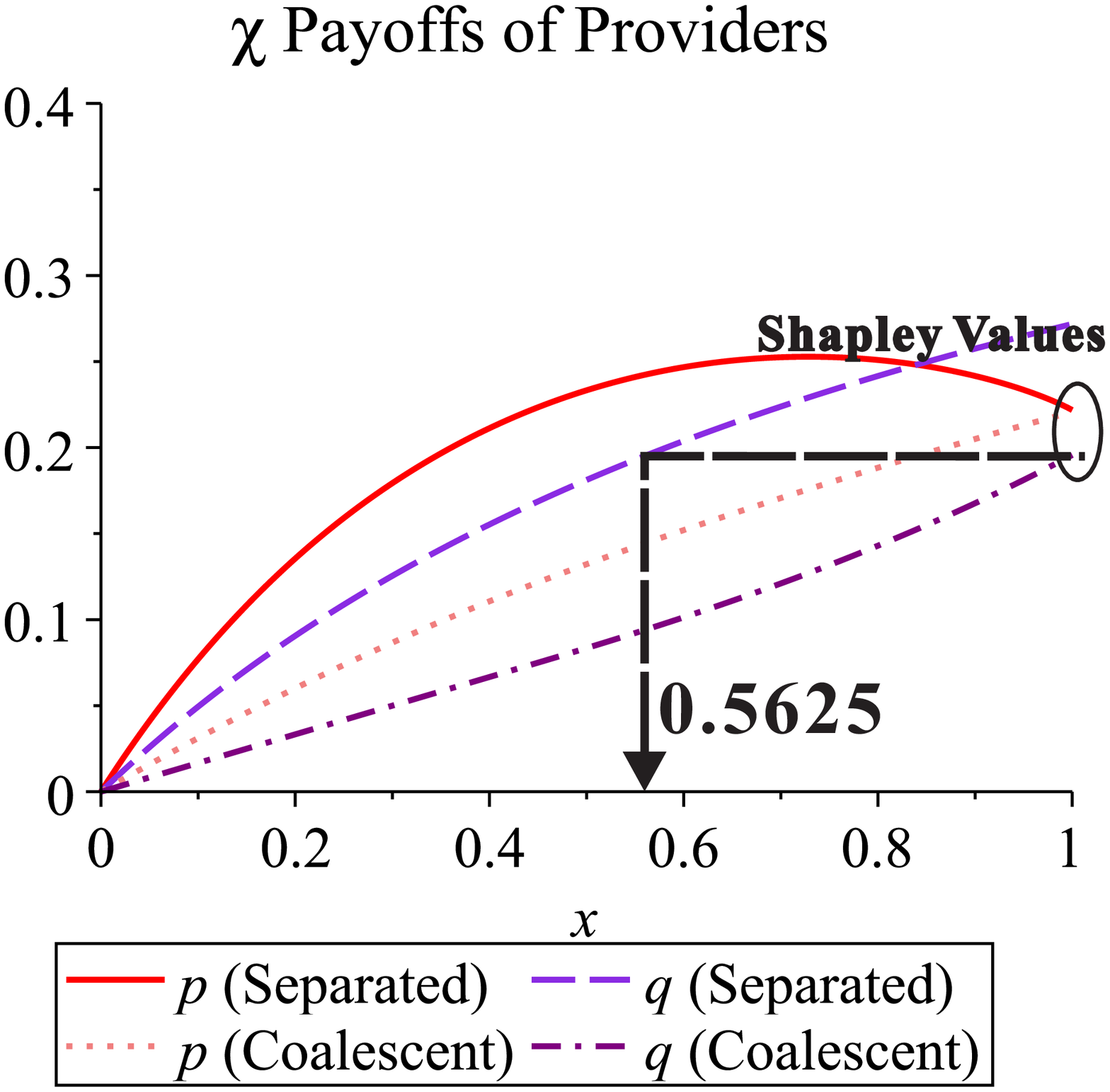}}
 \subfigure{\label{fig:cex1b}
        \includegraphics[width=4.25cm]{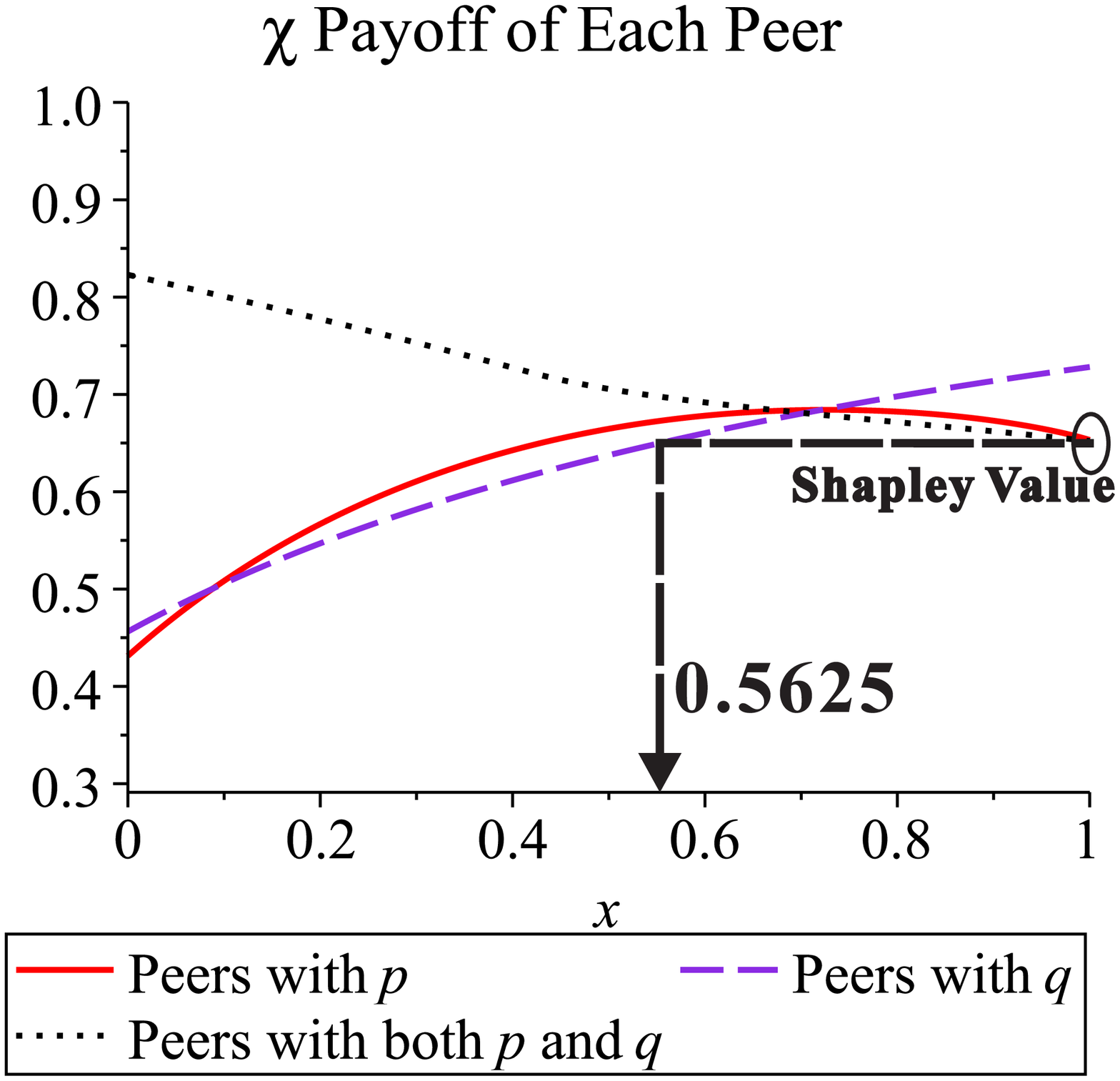}}
\caption{Example \ref{ex:unfairness}: $\chi$ Payoffs of Two Providers and Peers for Convex Cost Functions with $w_p=w_q=1$.} \label{fig:cex1}
\end{center}
\end{figure}

\begin{figure}[t!]
\begin{center}
 \subfigure{\label{fig:cex2a}
        \includegraphics[width=4.25cm]{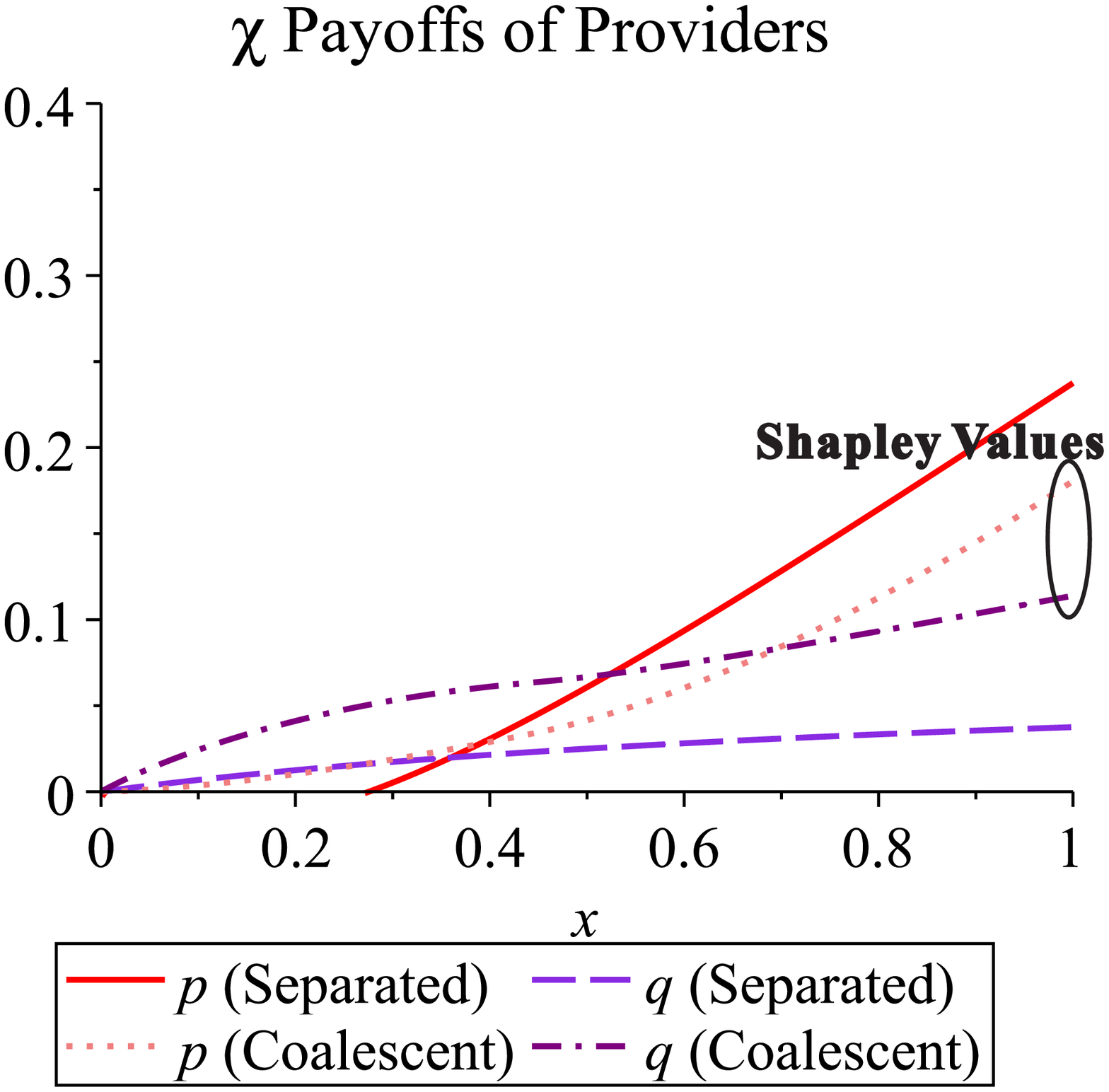}}
 \subfigure{\label{fig:cex2b}
        \includegraphics[width=4.25cm]{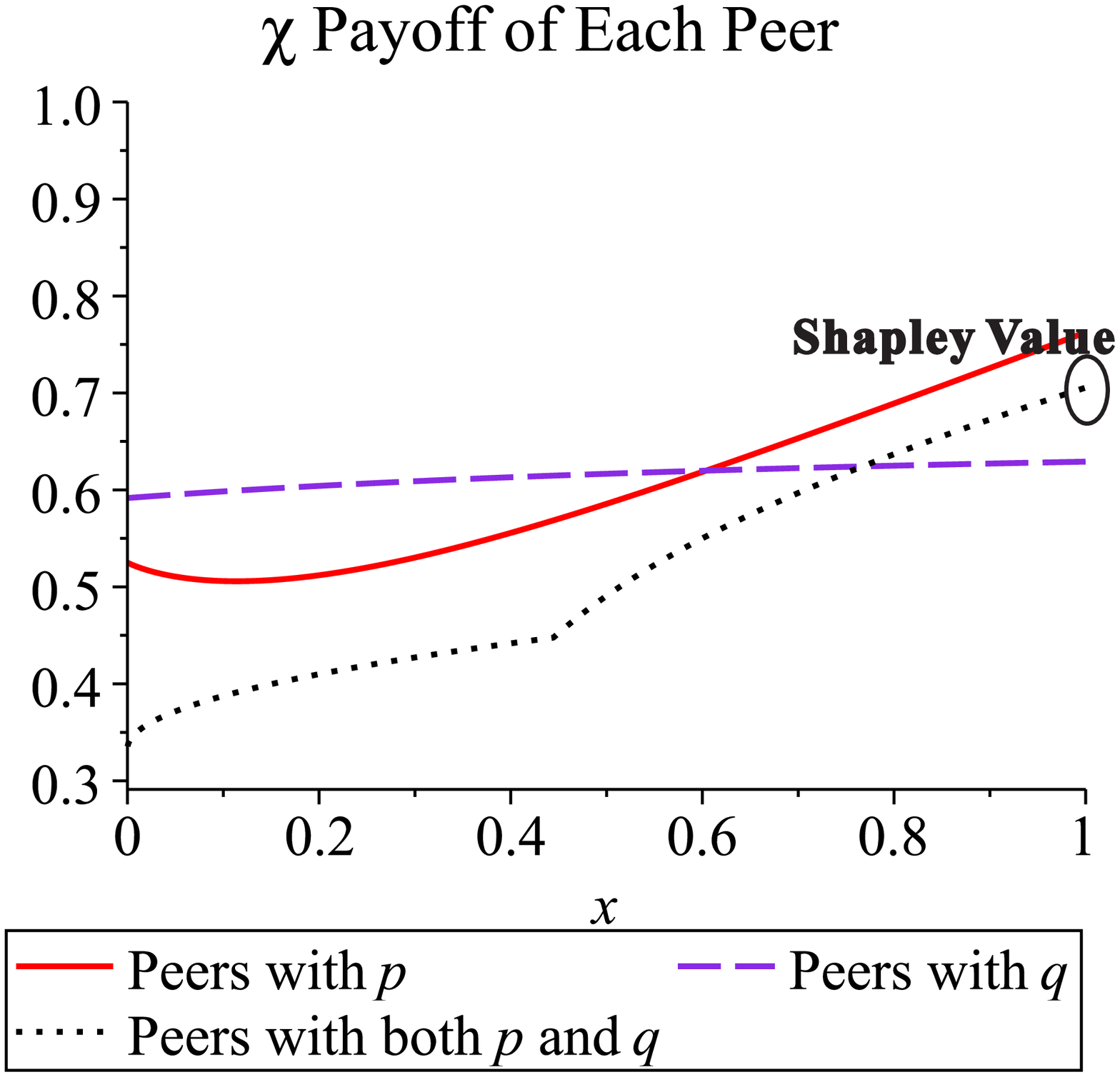}}
\caption{Example \ref{ex:monopoly}: $\chi$ Payoffs of Two Providers and Peers for Concave Cost Functions with $w_p=w_q=1$.} \label{fig:cex2}
\end{center}
\end{figure}

The corresponding $\chi$ payoffs of Examples \ref{ex:unfairness} and
\ref{ex:monopoly} for $w_i = 1$, $\forall i \in Z$, are shown in
Figs. \ref{fig:cex1} and \ref{fig:cex2}. As was the case of the A-D
payoffs in Examples \ref{ex:unfairness} and \ref{ex:monopoly}, the grand
coalitions are not stable. However, due to the surplus-sharing
property of the $\chi$ payoff, whenever the coalition worth is larger
than the Shapley sum of players in the coalition, {\it all} players in
the coalition are paid more and {\it vice versa}. For instance, we can
see from Fig. \ref{fig:cex1} that if the coalition is formed by provider
$q$ and $x>0.5625$ fraction of peers, all members of the coalition are
paid more than their respective Shapley payoffs.

As shown in Fig. \ref{fig:cex2}, the monopoly phenomenon of Example
\ref{ex:monopoly} for the case of A-D payoff is still observed for the
case of $\chi$ value. Regarding Example \ref{ex:unfairness}, as shown in Fig. \ref{fig:cex1}, $\chi$ payoff
even induces the monopoly by $q$, which did not
exist for the case of A-D payoff.

\setlength{\tabcolsep}{0pt}

\begin{table*}[t!]

\caption{Example \ref{ex:twotwo}: $\chi$ Payoff and Blocking Coalition $C$} \label{table:chi}
\small
\centering
\scalebox{0.80}{
\begin{tabular}{|c||c|c|c|c|c|}
\hline  & $\{ p_1 p_2, n_1 n_2 \}$ & $\{ p_1 p_2, n_1, n_2 \} $ & $\{ p_1 n_1, p_2, n_2 \} $ & $\{ p_1, p_2, n_1, n_2 \} $ & $\{ p_1 n_1,  p_2 n_2 \} $ \\
\hline $\chi_{p_1}$ & -1 & -1 & 5/3=1.67 & 0& 5/3=1.67 \\
\hline $\chi_{p_2}$ & 1 & 1 & 0 & 0  &  7/6=1.17  \\
\hline $\chi_{n_1}$ & 1/2=0.5 & 0 & 10/3=3.33 & 0& 10/3=3.33 \\
\hline $\chi_{n_2}$ & -1/2=-0.5 & 0 & 0 & 0 & -1/6=-0.17 \\
\hline \multirow{3}{*}{$C$} & \pink{$p_1$,$n_2$,$p_1 n_1$,$p_2 n_2$,$p_1 n_2$,$p_2 n_1$} & \pink{$p_1$,$p_1 n_1$,$p_1 n_2$,$p_2 n_1$} &  & \pink{$p_1 n_1$,$p_1 n_2$,$p_2 n_1$} &  \\
 & \pink{$p_1 p_2 n_1 n_2$,$p_1 p_2 n_1$,$p_1 p_2 n_2$} & \pink{$p_1 p_2 n_1 n_2$,$p_1 p_2 n_1$,$p_1 p_2 n_2$} & \mycyan{$-$} & \pink{$p_1 p_2 n_1 n_2$,$p_1 p_2 n_1$,$p_1 p_2 n_2$} & \pink{$n_2$} \\
  & \pink{$p_1 n_1 n_2$,$p_2 n_1 n_2$,$p_1 p_2 n_1 n_2$} & \pink{$p_1 n_1 n_2$,$p_2 n_1 n_2$,$p_1 p_2 n_1 n_2$} & & \pink{$p_2 n_1 n_2$,$p_1 p_2 n_1 n_2$} &  \\
\hline
\hline  & $\{ p_1 p_2 n_2, n_1 \} $   & $\{ p_1 p_2 n_1, n_2 \} $ & $\{ p_2 n_1 n_2, p_1 \} $ & $\{ p_1 p_2 n_1 n_2 \} $ & $\{ p_1 n_2, p_2, n_1 \} $  \\
\hline $\chi_{p_1}$  & 4/9=0.44 & 4/9=0.44 & 0 & 7/6 = 1.17 & 5/3=1.67 \\
\hline $\chi_{p_2}$  & 22/9=2.44 & 22/9=2.44 & 32/9=3.56 &  19/6 = 3.17 & 0 \\
\hline $\chi_{n_1}$  & 0 & 19/9=2.11 & 29/9=3.22 &  17/6 = 2.83 & 0 \\
\hline $\chi_{n_2}$  & 10/9=1.11 & 0  & 20/9=2.22 &  11/6 = 1.83 & 7/3=2.33 \\
\hline $C$ & \pink{$p_1 n_1 $,$p_1 p_2 n_1 n_2$,$p_2 n_1 n_2$,$p_1 n_2 $} & \pink{$p_1 n_1 $,$p_2 n_1 n_2$,$p_1 n_2$} & \pink{$p_1 n_2$, $p_1 n_1$}& \pink{$p_1 n_1$,$p_2 n_1 n_2$,$p_1 n_2$}  & \pink{$p_2 n_1$} \\
\hline
\hline  & $\{ p_1 n_1 n_2 , p_2 \} $  & $\{ p_1, p_2, n_1 n_2 \} $ & $\{ p_1, n_1, p_2 n_2 \}$ & $\{ p_1, n_2, p_2 n_1 \} $ & $\{ p_1 n_2 , p_2 n_1 \} $ \\
\hline $\chi_{p_1}$ & -4/9=-0.44  & 0 & 0       & 0 & 5/3=1.67\\
\hline $\chi_{p_2}$ & 0      & 0 & 7/6=1.17 & 13/6=2.17 & 13/6=2.17 \\
\hline $\chi_{n_1}$ & 11/9=1.22  & 1/2=0.5 & 0  &  11/6=1.83     & 11/6=1.83  \\
\hline $\chi_{n_2}$ & 2/9=0.22  & -1/2=-0.5 & -1/6=-0.17 & 0 & 7/3=2.33  \\
\hline \multirow{3}{*}{$C$}  & \pink{$p_1$,$p_1 n_1$,$p_1 n_2$,$p_2 n_1$,} & \pink{$n_2$,$p_1 n_1$,$p_2 n_2$,$p_1 n_2$,$p_2 n_1$} & \pink{$p_1 n_1$,$p_1 n_2 $,$p_2 n_1$,} &  \pink{$p_1 n_1$,$p_1 n_2$,} &  \\
&  \pink{$p_2 n_1 n_2$,$p_1 p_2 n_1 n_2$} & \pink{$p_1 p_2 n_1 n_2$,$p_1 p_2 n_1$,$p_1 p_2 n_2$}  & \pink{$p_1 p_2 n_1 n_2$,$p_1 p_2 n_1$}  & \pink{$p_1 p_2 n_1 n_2 $,$p_1 p_2 n_2$} & \mycyan{$-$} \\
&   \pink{$p_1 p_2 n_1$,$p_1 p_2 n_2$}  &  \pink{$p_2 n_1 n_2$,$p_1 p_2 n_1 n_2$} & \pink{$p_2 n_1 n_2 $,$p_1 p_2 n_2$,$n_2$}  &  \pink{$p_2 n_1 n_2 $,$p_1 p_2 n_1$} &  \\
\hline
\end{tabular}
}

\end{table*}

\subsection{Bargaining over the Dividend Rate}\label{sec:chibargaining}

Another implication of {\slshape WSP} is that a provider bargains with
peers over the division of the profit and loss by setting $w_i$ to a
nonnegative real value. For instance, consider the case when the
coalition worth exceeds the Shapley sum of players in the coalition,
\eg, $v(C(p)) > \varphi_{C(p)} $ in \eqref{eq:chiformula}, where $p\in
Z$ is the only provider in coalition $C(p)$. In this case, a provider
may award an extra bonus to peers by setting $w_p < 1$, or make more
profit by setting $w_p >1 $. For the coalition worth smaller than the
sum of Shapley payoffs, a provider may compensate peers
for loss by using $w_p >1$. Setting $w_p=1$ guarantees the fair
profit-sharing between provider $p$ and peers, whereas provider $p$ may
be willing to use $w_p \neq 1 $ for bargaining.

Although $w_p$ can be viewed as a {\it flexible knob} to balance the
fairness of the system and the bargaining powers of providers,
regulators need to control the providers by introducing upper and lower
bounds on $w_p$ which may depend on whether $v(C(p)) >
\varphi_{C(p)} $ or not, because $w_p$ have opposite meanings for the
two cases. For example, providers may use weights satisfying the
following condition:
\begin{eqnarray*}
\left\{ \begin{array}{ll}
w_p \geq \underline{w}_p , & \mbox{if } v(C(p)) < \varphi_{C(p)}, \\
0 \leq w_p \leq \overline{w}_p , & \mbox{if } v(C(p)) \geq \varphi_{C(p)} . \\
\end{array} \right.
\end{eqnarray*}
Two bounds, $\underline{w}_p$ and $\overline{w}_p$ can be viewed as a
preventive measure taken by the authorities to avoid {\it unfair
  rivalries} between providers.

\begin{figure}[t!]
\begin{center}
  \subfigure{\label{fig:wex1a}
        \includegraphics[width=4.25cm]{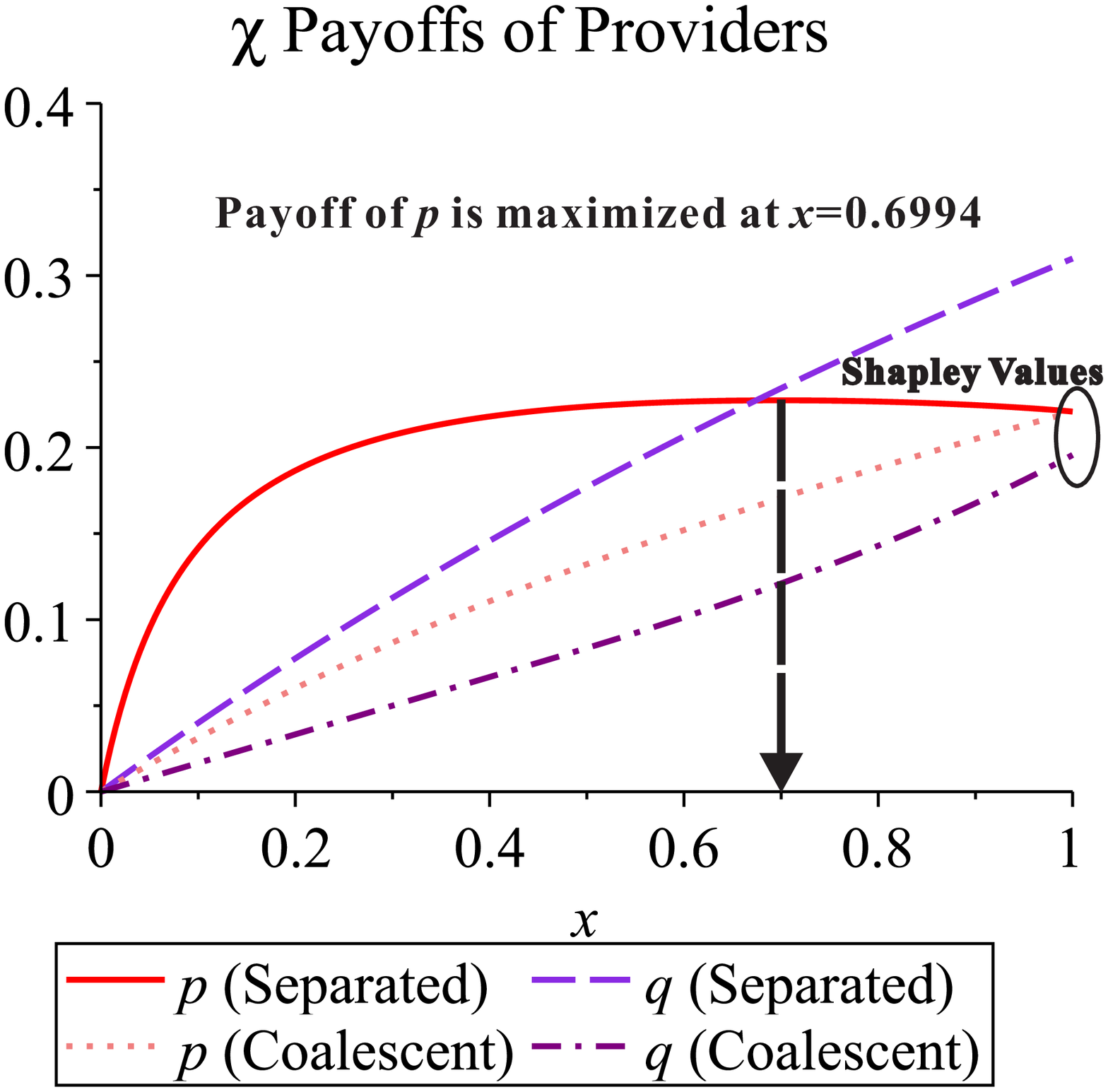}}
 \subfigure{\label{fig:wex1b}
        \includegraphics[width=4.25cm]{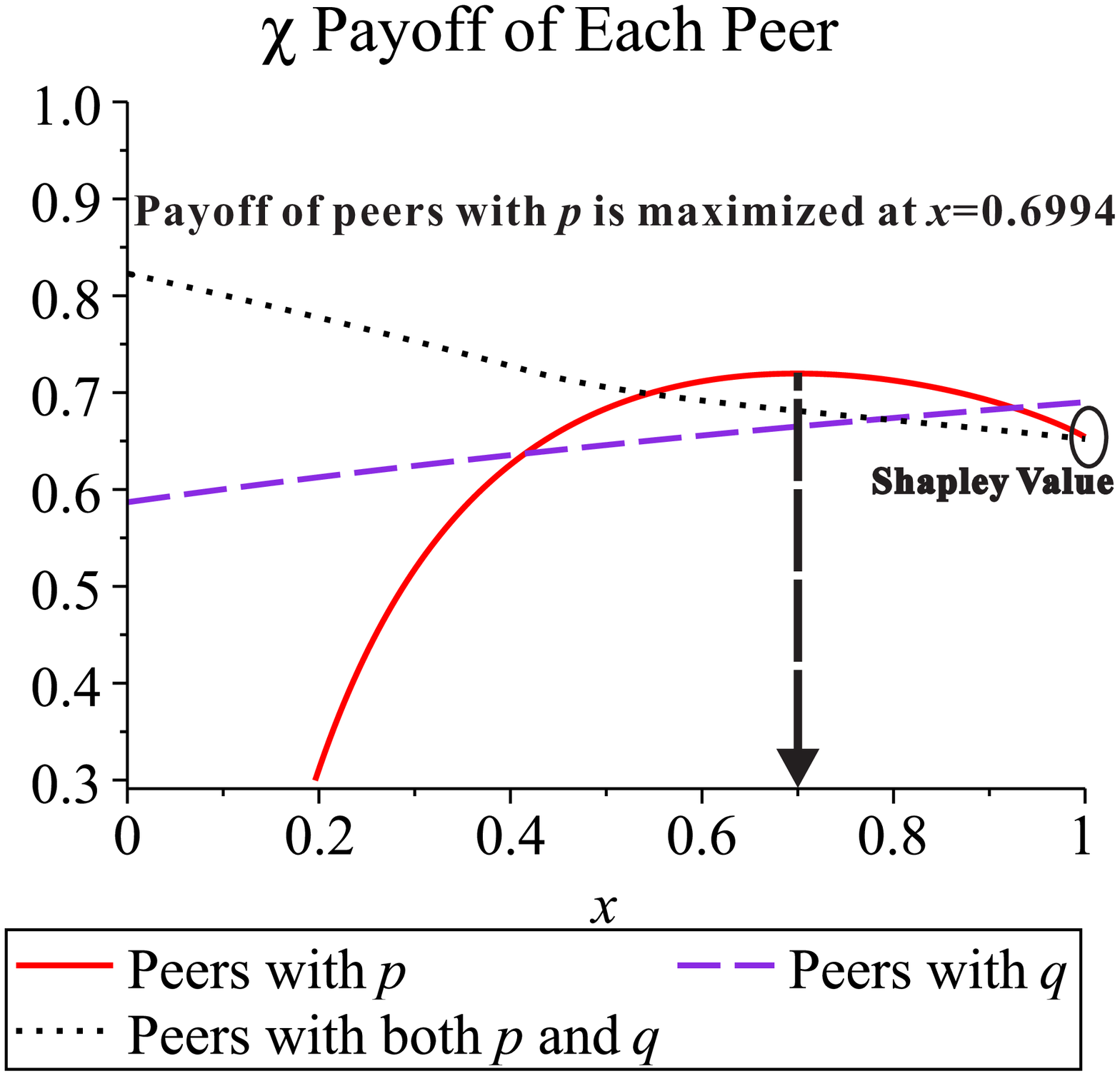}}
\caption{Example \ref{ex:unfairness}: $\chi$ Payoffs of Two Providers and Peers for Convex Cost Functions with $w_p=0.1$ and $w_q=3$.} \label{fig:wex1}
\end{center}
\end{figure}

Adopting non-identical weights $w_p=0.1$ and $w_q = 3$, we revisit
Example \ref{ex:unfairness}. Unlike Fig. \ref{fig:cex1} where provider
$q$ monopolizes all peers because $\widetilde\chi_q^{\{q\}}(1)$ and
$\widetilde\chi_i^{\{q\}}(1)$ for $i \in H$ is the biggest possible
payoffs for $q$ and any peers, the monopoly for this set of weights is
broken as shown in Fig. \ref{fig:wex1}. Now providers $p$ and $q$ will
possess $0.6994$ and $0.3006$ fraction of peers, respectively. It is
remarkable that the $\chi$ payoffs are still surplus-sharing as in
Figs. \ref{fig:cex1} and \ref{fig:cex2}.

\subsection{Stability of Coalition Structures}\label{sec:chistability}

The $\chi$ value of the game in Example \ref{ex:twotwo} with equal
weights $w_i =1$, for all $i \in N$, is shown in Table
\ref{table:chi}. As discussed in \cite{refChiValue}, {\slshape NP} is
not suitable for a value reflecting outside options. For example, let us
consider the partition $\{ p_1 p_2 , n_1, n_2 \}$. For the case of the
A-D value, payoffs of both providers $p_1$ and $p_2$ are $0$. However,
as we observe from Example \ref{ex:twotwo}, the best $p_1$ can do is
to ally with $n_1$ to reduce her operational cost by $v(\{ p_1, n_1\}) =
5$ whereas the best $p_2$ can do to reduce hers by $v(\{ p_2, n_1, n_2
\})=9$. In other words, $p_1$ should release $p_2$ so that $p_2$ can create her
worth because $p_2$ has a {\it worthier} outside option, to reflect which, $\chi$ value implementation ``punishes'' $p_1$ by giving
her a {\it negative} payoff $\chi_{p_1} = -1$.

We also observe from Table \ref{table:chi} that players who can be better off by leaving the current coalition are paid {\it more} than others. For example, consider the partition $\{ p_1 n_2, p_2, n_1 \}$. For the case of A-D payoffs, $p_1$ and $n_2$ received the same payoff $2$ (See \ifthenelse{\boolean{arxiv}}{Table \ref{table:ad}}{Table 1 in \cite{refJWYYGameNets}}). However, in Table \ref{table:chi}, $n_2$ is paid more than $p_1$ because $n_2$ has the potential for creating the worthiest coalition $p_1 p_2 n_1 n_2$ or $p_2 n_1 n_2$, \ie, $v(\cdot)= 9$. Though $n_2$ will not be able to break the partition $\{ p_1 n_2, p_2, n_1 \}$ according to the stability defined in Definition \ref{def:stability}, $n_2$ is paid more than $p_1$ essentially for its {\it assessed} potential. In this case, the final form of coalition structure after its endogenous evolution is the state $\{ p_1 n_2, p_2 n_1 \}$.
There are now two {\it absorbing} states $\{ p_1 n_1, p_2, n_2 \}$ and $\{ p_1 n_2, p_2 n_1 \}$, as shown in Table \ref{table:chi}, which are stable in the sense of Definition \ref{def:stability}. On the contrary, there does not exist any stable state for the case of A-D payoff as shown in Fig. \ref{fig:oscillation} (See also Section \ref{sec:instability} and \ifthenelse{\boolean{arxiv}}{Table \ref{table:ad}}{Table 1 in \cite{refJWYYGameNets}}).

A more general result \cite[Theorem 6.1]{refChiValue} is that, if we
adopt $\chi$ value to distribute the profit of the peer-assisted
services, the system always has at least one stable coalition structure,
irrespective of the number of providers. It it also remarkable that the
following theorem holds without any restriction on operational cost
$\widetilde\Omega_p(\cdot)$, whereas we assumed that
$\widetilde\Omega_p(\cdot)$ is non-increasing in
Section~\ref{sec:issuemultiple}.

\begin{theorem}[Stability of $\chi$ Payoff]\label{th:chistable} For $\chi$ value, there always exists a stable coalition structure $\mathcal{P}$.
\end{theorem}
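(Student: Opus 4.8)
The plan is to use the closed form \eqref{eq:chiformula} together with coalition independence (Definition~\ref{def:ci}, which $\chi$ satisfies) to recast the question as a hedonic problem in which \emph{all} players rank coalitions in the same order, and then to build a stable partition greedily. By Theorem~\ref{thm:chivalue} and the fact that $\varphi_i$ is the Shapley value of the \emph{fixed} grand-coalition game $(N,v)$ (hence independent of $\mathcal{P}$), the payoff of player $i$ depends on $\mathcal{P}$ only through its own coalition. Writing $C$ for a generic coalition, $\varphi_C := \sum_{j\in C}\varphi_j$, and restricting to positive weights, I record
\[
\chi_i(C) = \varphi_i + \frac{w_i}{\sum_{k\in C} w_k}\,\big(v(C) - \varphi_C\big),
\]
and define the coalition index
\[
\rho(C) := \frac{v(C) - \varphi_C}{\sum_{k\in C} w_k}.
\]

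The crux of the proof is the next observation. For any player $i$ and any two coalitions $C,C'$ both containing $i$, we have $\chi_i(C) > \chi_i(C')$ if and only if $\rho(C) > \rho(C')$: the baseline $\varphi_i$ is common to both expressions and cancels, and the factor $w_i>0$ divides out. Crucially the surviving comparison no longer mentions $i$, so \emph{every} player ranks the coalitions it could join according to the single scalar $\rho$. This is exactly the common-ranking property, and it is what makes a stable structure exist despite the negative results (unfairness, monopoly, oscillation) established earlier for the A-D value.

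With a common ranking in hand I construct $\mathcal{P}^\ast$ greedily. Set $R_0=N$; at step $t$ select a nonempty $C_t\subseteq R_{t-1}$ maximizing $\rho$ over all nonempty subsets of $R_{t-1}$ (ties broken arbitrarily) and put $R_t=R_{t-1}\setminus C_t$, stopping when $R_t=\emptyset$. To verify that $\mathcal{P}^\ast=\{C_1,\dots,C_m\}$ is stable in the sense of Definition~\ref{def:stability}, suppose for contradiction that some coalition $C$ blocks it. By the common ranking this means $\rho(C) > \rho(C(i))$ for every $i\in C$. Let $t(i)$ be the step at which $i$ was assigned, set $t^\ast=\min_{i\in C} t(i)$, and pick $i^\ast$ attaining the minimum, so $C(i^\ast)=C_{t^\ast}$. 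Every member of $C$ is assigned at a step $\geq t^\ast$, hence none is removed before step $t^\ast$, so $C\subseteq R_{t^\ast-1}$; thus $C$ was an admissible candidate at step $t^\ast$ and the greedy rule gives $\rho(C_{t^\ast}) \geq \rho(C)$. But blocking forces $\rho(C) > \rho(C(i^\ast))=\rho(C_{t^\ast})$, a contradiction. Hence no blocking coalition exists and $\mathcal{P}^\ast$ is stable.

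The only nontrivial step is the cancellation yielding the common ranking; everything afterward is the standard greedy / minimal-index argument and uses \emph{no} property of $v$ whatsoever, which is precisely why the statement holds with no restriction on the cost functions $\widetilde\Omega_p(\cdot)$. The single point demanding care is the degenerate weight $w_i=0$ (a provider who, by \eqref{eq:chiformula}, always receives exactly its Shapley value): such a player can never strictly improve and therefore never belongs to a blocking coalition, so it may be dropped from the ranking without affecting the conclusion. I expect the remaining obstacle in a fully written proof to be purely bookkeeping, namely stating the reduction so that it cleanly matches the blocking condition of Definition~\ref{def:stability} under coalition independence.
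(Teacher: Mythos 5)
Your proof is correct, but it is worth noting how it relates to the paper's treatment: the paper does not prove Theorem~\ref{th:chistable} at all --- it simply invokes \cite[Theorem 6.1]{refChiValue} (Casajus), so your argument is a genuinely self-contained reconstruction. The key step you isolate --- that under \eqref{eq:chiformula} every player with $w_i>0$ ranks coalitions by the single scalar $\rho(C) = \bigl(v(C)-\varphi_C\bigr)/\sum_{k\in C} w_k$, since the Shapley baseline $\varphi_i$ is partition-independent and $w_i$ cancels --- is exactly the ``common ranking property'' (in the spirit of Farrell--Scotchmer partnership games) that underlies the cited result, and your greedy/minimal-index construction is the standard way to convert a common ranking into a partition that no coalition can block in the sense of Definition~\ref{def:stability}. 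Two small points to make explicit in a polished write-up: (i) the argument needs finiteness of $N$ so that the maximizer of $\rho$ over nonempty subsets exists at each greedy step (this matches the setting of Definition~\ref{def:stability} and of \cite{refChiValue}; the fluid-limit formula \eqref{eq:chigeneral} is a separate matter), and (ii) your handling of $w_i=0$ is right --- such a player receives $\varphi_i$ in every coalition, never strictly improves, hence never joins a blocking coalition --- but one should also note that coalitions whose total weight is zero must be excluded (or weights taken strictly positive, as in \cite{refChiValue}) for $\rho$ to be well defined; in the paper's setting this is automatic because peers carry weight $1$. With those caveats, your proof is sound and correctly explains why no assumption on $\widetilde\Omega_p(\cdot)$ or on $v$ is needed.
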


Also, it follows from \cite[Corollary 6.4]{refChiValue} that the
instability of the grand coalition cannot be improved:
\begin{corollary}[Stability of Grand Coalition Preserved]\label{cor:corechi} The grand coalition of $\chi$ value is stable {\bf if and only if} the Shapley value lies in the core.
\end{corollary}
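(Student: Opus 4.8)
The plan is to show that both sides of the claimed equivalence reduce to the \emph{same} family of inequalities $\sum_{i\in K}\varphi_i \ge v(K)$ for all $K\subseteq N$. First I would specialize the $\chi$ formula \eqref{eq:chiformula} to the grand coalition $\mathcal{P}=\{N\}$. There $C(i)=N$ for every player $i$, and the efficiency of the Shapley value gives $\varphi_N=\sum_{k\in N}\varphi_k=v(N)$, so the surplus term $v(C(i))-\varphi_{C(i)}=v(N)-\varphi_N$ vanishes and $\chi_i(N,v,\{N\})=\varphi_i$. Thus the $\chi$ payoff in the grand coalition is exactly the Shapley payoff; this identity is the anchor that ties the two notions of stability together.

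Next I would unfold Definition~\ref{def:stability} for the grand coalition. Because $\chi$ value is coalition independent (the footnote after Theorem~\ref{th:chivaluecoal}), the payoff $\chi_i(N,v,\{C,\cdots\})$ of a member $i\in C$ depends only on its own coalition $C$ and, by \eqref{eq:chiformula}, equals $\varphi_i+\frac{w_i}{\sum_{k\in C}w_k}\bigl(v(C)-\varphi_C\bigr)$. Hence a coalition $C$ blocks $\{N\}$ precisely when, for every $i\in C$,
\[\varphi_i+\frac{w_i}{\sum_{k\in C}w_k}\bigl(v(C)-\varphi_C\bigr) > \chi_i(N,v,\{N\})=\varphi_i,\]
that is, $\frac{w_i}{\sum_{k\in C}w_k}\bigl(v(C)-\varphi_C\bigr)>0$.

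The crucial simplification---and the only place where any care is needed---is that the weights $w_k$ are positive (as assumed throughout for $\chi$ value), so the prefactor $\frac{w_i}{\sum_{k\in C}w_k}$ is positive for every member and the sign of the whole expression is governed solely by $v(C)-\varphi_C$. Consequently the ``for all $i\in C$'' quantifier collapses: $C$ blocks $\{N\}$ if and only if $v(C)>\varphi_C=\sum_{i\in C}\varphi_i$, independently of the particular member. Taking the contrapositive over all $C\subseteq N$, the grand coalition is stable if and only if $\sum_{i\in C}\varphi_i\ge v(C)$ for every $C\subseteq N$.

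Finally I would match this against the core (Definition~\ref{def:core}). Core membership of $\varphi$ requires the efficiency identity $\sum_{i\in N}\varphi_i=v(N)$, which the Shapley value satisfies automatically, together with $\sum_{i\in K}\varphi_i\ge v(K)$ for all $K\subseteq N$---exactly the stability characterization just obtained. Therefore the grand coalition of the $\chi$ value is stable if and only if the Shapley value lies in the core. I do not expect a serious obstacle: the entire content is the vanishing of the surplus term in the grand coalition together with the sign bookkeeping, and the one point worth flagging explicitly is the positivity of the weights, which is what lets the member-wise blocking condition reduce to the single scalar inequality $v(C)>\varphi_C$.
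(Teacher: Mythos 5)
Your proof is correct, and it takes a genuinely different route from the paper: the paper offers no argument of its own for Corollary~\ref{cor:corechi}, obtaining it purely as a citation of \cite[Corollary 6.4]{refChiValue}, whereas you reprove the equivalence from scratch using only ingredients already present in the text. Your chain of observations --- that Shapley efficiency kills the surplus term in \eqref{eq:chiformula} at $\mathcal{P}=\{N\}$, so $\chi_i(N,v,\{N\})=\varphi_i$; that coalition independence (the footnote to Theorem~\ref{th:chivaluecoal}) makes $\chi_i(N,v,\{C,\cdots\})=\varphi_i+\frac{w_i}{\sum_{k\in C}w_k}\left(v(C)-\varphi_C\right)$ well defined for any deviating coalition $C$ regardless of how $N\setminus C$ is partitioned; and that positive weights collapse the member-wise blocking test of Definition~\ref{def:stability} to the single scalar inequality $v(C)>\varphi_C$ --- recovers exactly the core inequalities of Definition~\ref{def:core}, since the Shapley value satisfies the efficiency constraint $\sum_{i\in N}\varphi_i=v(N)$ automatically. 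What your argument buys is a self-contained, checkable derivation in the paper's own notation; what the citation buys the authors is brevity and the backing of the general result in \cite{refChiValue}.

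One caveat deserves to be stated as a hypothesis rather than as ``assumed throughout'': the positivity of the weights. Axiom {\slshape WSP} divides by $w_i$, so the axiomatic characterization in Theorem~\ref{thm:chivalue} does presuppose nonzero weights, and there your step is legitimate; but Section~\ref{sec:chibargaining} of this paper explicitly permits $w_p=0$ (``$0\le w_p\le\overline{w}_p$''). If a player with zero weight belonged to every coalition $C$ violating the core inequalities, that player would never strictly improve, no coalition could block $\{N\}$, and the grand coalition would be stable even though the Shapley value lies outside the core --- breaking the ``only if'' direction of your argument. So the corollary, and your proof of it, should carry the standing assumption $w_i>0$ for all $i\in N$, which is precisely the point where your proof needs it.
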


To summarize, even if we adopt $\chi$ value, the instability of the
grand coalition for the Shapley payoff which we observed in Theorem
\ref{th:adnotcore} remains {\it unchanged}. However, it is
guaranteed that there exists a stable coalition structure for $\chi$ value.


\section{Application to Delay-Tolerant Networks}\label{sec:app}

In this section, we present a concrete example of the peer-assisted
services in delay-tolerant networks where mobile users share certain
contents with each other in a peer-to-peer fashion \cite{refAgeGossip}:
whenever two mobile users meet, a user whose content is more recent
pushes it to the other whose content is outdated. We consider here a
single class case, using the method in \cite{refAgeGossip}.

We assume that there exist two providers, $p$ and $q$, whose contents
differ. Users who are subscribing
to the content of a provider are assumed to assist the provider in any case.  The fraction of
users subscribing to each provider is denoted by $x_p^0$ and $x_q^0$. As discussed in Section \ref{sec:worthpeer}, we
also assume that a non-subscribing user is allowed to assist at most one provider.
Suppose that the content providers $p$ and $q$ push content
updates to users, who are assisting providers, with the rate $\mu_p$ and
$\mu_q $, respectively, and every user meets other users with the
aggregate rate $\lambda$.
\begin{figure}[t!]
  \centering
  \centerline{\includegraphics[width=5cm]{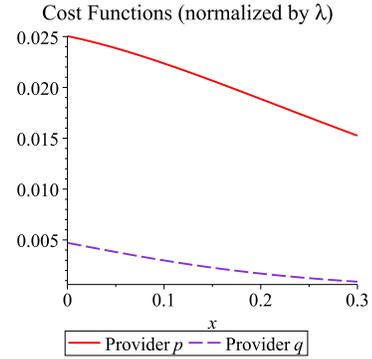}}
  \caption{Cost Functions of Two Providers in the Delay-Tolerant Network.} \label{fig:app}
\end{figure}
Then it follows from the analysis in \cite[Section 5.1]{refAgeGossip}
that, if $x_p \geq x_p^0 $ fraction of users are assisting provider $p$, for a user who is subscribing to
provider $p$, the expected age of the content and the outage probability that the age is larger than $G_p^{\rm max}$ are:
$$
\bar G_p = \frac{1}{x_p \lambda} \ln \frac{x_p \lambda + \mu_p}{\mu_p} ,\quad
P_p^C = \frac{{x_p \lambda + \mu_p }}{{x_p \lambda} + {\mu_p} {\rm e}^{(\mu_p + x_p \lambda) G_p^{\rm max} } }.
$$
The above two expressions can be easily derived by using integration by
parts.
A provider may guarantee subscribers a certain level of quality
of service by imposing constraints such as {\em (i)} $\bar G_p \leq 1
\textrm{min}$ or {\em (ii)} $ P_p^C \leq 0.01$ for $G_p^{\rm max} = 10
\textrm{min}$, of which we use the former here.

For instance, the cost function of provider $p$ can be computed by
solving the following optimization problem over $\mu_p$:
\begin{align}
\textstyle  \min_{\mu_p} x_p \mu_p \quad \textrm{subject to}: ~ \bar G_p \leq g_p \nonumber
\end{align}
where $x_p \mu_p $ corresponds to the average cost per user. The solution of this problem yields provider $p$'s cost function:
$$
{\widetilde\Omega}_p (x) \defeq x \mu_p^* = \frac{x^2 \lambda}{\exp\left( x \lambda g_p \right) - 1}
$$
where we dropped the subscript $p$ from $x_p$. Suppose $x_p^0 = 0.4$ and
$x_q^0 = 0.3$. If providers $p$ and $q$ use $g_p = 5 /{\lambda} $ and
$g_q = 10/{\lambda} $, \ie, provider $p$ has decided to maintain a lower
average age of the content than that of provider $q$, we get the cost
functions ${\widetilde\Omega}_p (x+x_p^0) / \lambda $ and
${\widetilde\Omega}_q (x+x_q^0) / \lambda $ as shown in
Fig. \ref{fig:app}. By computing the equations in \eqref{eq:addualvalue}
and \eqref{eq:chigeneral}, it is not difficult to see that provider $p$
monopolizes the remaining fraction of users, $1- x_p^0 - x_q^0 = 0.3$,
whether we adopt the A-D payoff or $\chi$ payoff. Nonetheless, users can
receive more under the $\chi$ payoff than under the A-D payoff due to
the surplus-sharing property discussed in Section \ref{sec:chifair}.


\section{Concluding Remarks and Future Work}\label{sec:conc}


A quote from an interview of BBC iPlayer with CNET UK \cite{refiPlayer}:
``{\em Some people didn't like their upload bandwidth being used. It was
  clearly a concern for us, and we want to make sure that everyone is
  happy, unequivocally, using iPlayer.}''

In this paper, we have first studied the incentive structure in
peer-assisted services with multiple providers, where the popular
Shapley value based scheme might be in conflict with the pursuit of
profits by rational content providers and peers.  The key messages from
our analysis are summarized as: First, even though it is fair to
pay peers more because they become relatively more useful as the number
of peer-assisted services increases, the content providers will not
admit that peers should receive their fair shares. The providers tend to
persist in single-provider coalitions. In the sense of the classical
stability notion, called `core', the cooperation would have been broken
even if we had begun with the grand coalition as the initial
condition. Second, we have illustrated yet another problems when we use
the Shapley-like incentive for the exclusive single-provider coalitions.
These results suggest that the profit-sharing system, Shapley value, and
hence its fairness axioms, are not compatible with the selfishness of
the content providers. We have proposed an alternate, realistic
incentive structure in peer-assisted services, called $\chi$ value,
which reflects a trade-off between fairness and rationality of
individuals. Moreover, the weights of $\chi$ value can serve as a
flexible knob to enable providers to bargain with peers over the
dividend rate {\it at the same time} as a preventive measure to avoid
cutthroat or unfair competition between providers.
However, we recognize the limitation of these results, which are based on the assumption that there is no additional cost reduction other than that achieved from the peer-partition optimization. We surmise that providers in cooperation can make further expenses cut by pooling and optimizing their resources, and traffic engineering, which will {\em transform} their cost functions. The question remains open how the ramifications of this type of cooperation can be quantified in peer-assisted services.

\section*{Acknowledgements}
This work was supported by 
KRCF (Korea Research Council of Fundamental Science and Research). The authors would like to thank the anonymous reviewers for their valuable comments to improve the quality of the paper.

\bibliographystyle{IEEEtran}
\bibliography{IEEEabrv,bib2}

\begin{thebibliography}{10}
\providecommand{\url}[1]{#1}
\csname url@samestyle\endcsname
\providecommand{\newblock}{\relax}
\providecommand{\bibinfo}[2]{#2}
\providecommand{\BIBentrySTDinterwordspacing}{\spaceskip=0pt\relax}
\providecommand{\BIBentryALTinterwordstretchfactor}{4}
\providecommand{\BIBentryALTinterwordspacing}{\spaceskip=\fontdimen2\font plus
\BIBentryALTinterwordstretchfactor\fontdimen3\font minus
  \fontdimen4\font\relax}
\providecommand{\BIBforeignlanguage}[2]{{%
\expandafter\ifx\csname l@#1\endcsname\relax
\typeout{** WARNING: IEEEtran.bst: No hyphenation pattern has been}%
\typeout{** loaded for the language `#1'. Using the pattern for}%
\typeout{** the default language instead.}%
\else
\language=\csname l@#1\endcsname
\fi
#2}}
\providecommand{\BIBdecl}{\relax}
\BIBdecl

\bibitem{refValanciusGreen}
V.~Valancius, N.~Laoutaris, L.~Massouli{\'e}, C.~Diot, and P.~Rodriguez,
  ``Greening the {I}nternet with {N}ano {D}ata {C}enters,'' in \emph{Proc.
  {ACM} {CoNEXT}}, Dec. 2009.

\bibitem{refChaP2PTV}
M.~Cha, P.~Rodriguez, S.~Moon, and J.~Crowcroft, ``On next-generation
  telco-managed {P2P} {TV} architectures,'' in \emph{Proc. {USENIX} {IPTPS}},
  Feb. 2008.

\bibitem{refRNCOS}
RNCOS, ``Global {IPTV} market forecast to 2014,'' \emph{Market Research
  Report}, Feb. 2011.

\bibitem{refMisraP2P}
V.~Misra, S.~Ioannidis, A.~Chaintreau, and L.~Massouli{\'e}, ``Incentivizing
  peer-assisted services: A fluid {S}hapley value approach,'' in \emph{Proc.
  {ACM} {S}igmetrics}, Jun. 2010.

\bibitem{refParkP2P}
H.~Park, R.~Ratzin, and M.~{van der Schaar}, ``Peer-to-peer networks --
  protocols, cooperation and competition,'' \emph{Streaming Media
  Architectures, Techniques, and Applications: Recent Advances, \textup{IGI
  Global}}, pp. 262--294, 2011.

\bibitem{refSaadTutorial}
W.~Saad, Z.~Han, M.~Debbah, A.~Hj{\o}rungnes, and T.~Ba{\c s}ar, ``Coalitional
  game theory for communication networks,'' \emph{{IEEE} Signal Processing
  Mag.}, vol.~26, no.~5, pp. 77--97, 2009.

\bibitem{refSaadHedonic}
W.~Saad, Z.~Han, T.~Ba{\c s}ar, M.~Debbah, and A.~Hj{\o}rungnes, ``Hedonic
  coalition formation for distributed task allocation among wireless agents,''
  \emph{{\rm accepted for publication in} {IEEE} Trans. Mobile Comput.}, Oct.
  2010.

\bibitem{refHartEndogenous}
S.~Hart and M.~Kurz, ``Endogenous formation of coalitions,''
  \emph{Econometrica}, vol.~51, pp. 1047--1064, 1983.

\bibitem{refTuticAD}
A.~Tutic, ``The {A}umann-{D}r{\`e}ze value, the {W}iese value, and stability: A
  note,'' \emph{International Game Theory Review}, vol.~12, no.~2, pp.
  189--195, 2010.

\bibitem{refCasajusStability}
A.~Casajus, ``On the stabiilty of coalition structures,'' \emph{Economics
  Letters}, vol. 100, no.~2, pp. 271--274, 2008.

\bibitem{refAS}
R.~Aumann and L.~Shapley, \emph{Values of Non-Atomic Games}.\hskip 1em plus
  0.5em minus 0.4em\relax Princeton University Press, 1974.

\bibitem{refChiValue}
A.~Casajus, ``Outside options, componenet efficiency, and stability,''
  \emph{Games and Economic Behavior}, vol.~65, pp. 49--61, 2009.

\bibitem{refShapley}
L.~Shapley, \emph{A Value for n-Person Games}.\hskip 1em plus 0.5em minus
  0.4em\relax In H. W. Kuhn and A. W. Tucker, editors, Contribution to the
  Theory of Games II, {vol. 28} of Annals of Mathematics Studies, Princeton
  University Press, 1953.

\bibitem{refADValue}
R.~Aumann and J.~{D}r{\`e}ze, ``Cooperative games with coalition structures,''
  \emph{International Journal of Game Theory}, vol.~3, pp. 217--237, 1974.

\bibitem{refPeleg}
B.~Peleg and P.~Sudh{\"o}lter, \emph{Introduction to the Theory of Cooperative
  Games}, 2nd~ed.\hskip 1em plus 0.5em minus 0.4em\relax Springer-Verlag, 2007.

\bibitem{refCoreBring}
R.~{van den Brink} and G.~{van der Laan}, ``Core concepts for share vectors,''
  \emph{Social Choice and Welfare}, vol.~18, pp. 759--784, 2001.

\bibitem{refOwen}
G.~Owen, \emph{Game Theory}, 3rd~ed.\hskip 1em plus 0.5em minus 0.4em\relax
  Academic Press, 1995.

\bibitem{refHardAFR}
E.~Pinheiro, W.~Weber, and L.~A. Barroso, ``Failure trends in a large disk
  drive population,'' in \emph{Proc. {USENIX} {FAST}}, Feb. 2007.

\bibitem{refNonlinear}
M.~S. Bazaraa, H.~D. Sherali, and C.~M. Shetty, \emph{Nonlinear Programming:
  Theory and Algorithms}, 2nd~ed.\hskip 1em plus 0.5em minus 0.4em\relax John
  Wiley \& Sons Inc., 1993.

\bibitem{refJWYYGameNets}
\BIBentryALTinterwordspacing
J.~Cho and Y.~Yi, ``On the {Shapley-like} payoff mechanisms in peer-assisted
  services with multiple content providers,'' in \emph{Proc. GameNets}, Apr.
  2011. [Online]. Available: \url{http://arxiv.org/abs/1012.2332/}
\BIBentrySTDinterwordspacing

\bibitem{refCasajusNash}
A.~Casajus and A.~Tutic, \emph{Nash bargaining, {S}hapley threats, and outside
  options}.\hskip 1em plus 0.5em minus 0.4em\relax Working paper, Chair of
  Microeconomics, University of Leipzig, Germany, 2008.

\bibitem{refAgeGossip}
A.~Chaintreau, J.-Y. {Le Boudec}, and N.~Ristanovic, ``The age of gossip:
  Spatial mean field regime,'' in \emph{Proc. {ACM} {S}igmetrics}, Jun. 2009.

\bibitem{refiPlayer}
\BIBentryALTinterwordspacing
N.~Lanxon, ``{iPlayer} uncovered: What powers the {BBC}'s epic creation?''
  \emph{CNET UK}, May 2009. [Online]. Available:
  \url{http://crave.cnet.co.uk/software/iplayer-uncovered-what-powers-the-bbcs-epic-creation-49302215/}
\BIBentrySTDinterwordspacing

\bibitem{refMyerson}
R.~Myerson, ``Graphs and cooperation in games,'' \emph{Mathematics of
  Operstions Research}, vol.~2, pp. 225--229, 1977.

\end{thebibliography}

\appendix
\section{Appendix}



\subsection{Proof of Theorem \ref{th:advaluecoal}}\label{proof:th:advaluecoal}
Recall that we use notation $\widetilde{\varphi}^{{\bar Z}} (x)$ to denote $\widetilde{\varphi} ({\bar Z} \cup \bar H, v)$. We use the mathematical induction to prove this theorem. The equation \eqref{eq:admultivalue} holds for $|{\bar Z}|=0$ and ${\bar Z} = \emptyset$ (empty set) because we have from \eqref{eq:admultivalue} that there is no provider to pay and $\widetilde{\varphi}_{n}^{\emptyset } (x) = 0 $ for $n\in \bar H$.

 Now we {\bf assume} that \eqref{eq:admultivalue} holds for all $\Xi \subsetneq {\bar Z}$ such that $|\Xi|\leq \xi$ where $\xi \geq 0$. To prove Theorem \ref{th:advaluecoal}, it {\it suffices} to show that \eqref{eq:admultivalue} also holds for all $\Xi'\subseteq {\bar Z}$ such that $|\Xi'|=\xi+1$. To this end, we first apply Axiom {\slshape CE}. As $\eta$ tends to infinity while $x$ remains unchanged, for $p \in \Xi'$ and $n \in \bar H$, Axiom {\slshape CE} for the partition $\{ \Xi' \cup \bar H \}$ can be rewritten as follows:
\begin{equation}\label{eq:ceapplq}
\textstyle \sum_{p \in \Xi'} \widetilde{\varphi}_{p}^{\Xi'} (x ) + x \widetilde{\varphi}_{n}^{\Xi'} (x) = \sum_{p\in \Xi'} {\widetilde\Omega}_p (0) - M_{\Omega}^{\Xi'} (x)
\end{equation}
which is the {\it normalized} (which we did in \eqref{eq:normalizecost}) total coalition worth created by the coalition $ \Xi' \cup \bar H $.
 Another axiom we apply is Axiom {\slshape FAIR} (fairness) which was used by Myerson \cite{refMyerson} to characterize the Shapley value. It follows from {\slshape FAIR} that
 \begin{align}\label{eq:fairnp}
\textstyle \widetilde{\varphi}_{n}^{\Xi'} (x) - \widetilde{\varphi}_{n}^{\Xi'\setminus \{ p \} } (x )= \frac{\ud}{\ud x} \widetilde{\varphi}_{p}^{\Xi'} ( x ) ,\quad \mbox{for all } p \in \Xi' .
\end{align}
Summing up \eqref{eq:fairnp} for all $p \in \Xi'$ and dividing the sum by $|\Xi'|=\xi+1$, we obtain
\begin{align}
 \widetilde{\varphi}_{n}^{\Xi'} (x) & \textstyle= \frac{1}{\xi+1}  \sum_{p \in \Xi'} \left( \widetilde{\varphi}_{n}^{\Xi'\setminus \{ p \} } (x )+\frac{\ud}{\ud x} \widetilde{\varphi}_{p}^{\Xi'} ( x ) \right) \nonumber \\
 & \textstyle= \frac{1}{\xi+1}  \sum_{p \in \Xi'} \widetilde{\varphi}_{n}^{\Xi'\setminus \{ p \} } (x )+ \frac{1}{\xi+1}  \frac{\ud}{\ud x} \sum_{p \in \Xi'} \widetilde{\varphi}_{p}^{\Xi'} ( x ) . \label{eq:peeradvaluetemp}
\end{align}
Plugging \eqref{eq:peeradvaluetemp} into \eqref{eq:ceapplq}, we obtain
\begin{align}
& \textstyle (\xi+1) \sum_{p \in \Xi'} \widetilde{\varphi}_{p}^{\Xi'} (x ) + {x} \frac{\ud}{\ud x} \sum_{p \in \Xi'} \widetilde{\varphi}_{p}^{\Xi'} ( x ) \nonumber \\
= & \textstyle (\xi+1) \left( \sum_{p\in \Xi'} {\widetilde\Omega}_p (0) - M_{\Omega}^{\Xi'} (x) \right)  - {x} \sum_{p \in \Xi'} \widetilde{\varphi}_{n}^{\Xi'\setminus \{ p \} } (x ). \label{eq:ceapplq2}
\end{align}
Since we know the form of $\widetilde{\varphi}_{n}^{\Xi'\setminus \{p\}} (x )$ for all $p \in \Xi'$ from the assumption ($\because |\Xi'\setminus\{ p \}|=\xi$), \eqref{eq:ceapplq2} is an ordinary differential equation of the function $\sum_{p \in \Xi'} \widetilde{\varphi}_{p}^{\Xi'} (x )$. Denote the RHS of \eqref{eq:ceapplq2} by $G(x)$. Appealing to \cite[Lemma 3]{refMisraP2P}, we get
\begin{align}
& \textstyle\sum_{p \in \Xi'} \widetilde{\varphi}_{p}^{\Xi'} (x )   = \int_{0}^{1} u^\xi G(u x) \ud u \nonumber \\
&\textstyle= \sum_{p\in \Xi'} {\widetilde\Omega}_p (0) - \int_{0}^{1} u^\xi (\xi+1) M_{\Omega}^{\Xi'} (ux) \ud u \nonumber \\
& \textstyle- \int_{0}^{1} u^{\xi+1} {x} \sum_{p \in \Xi'} \widetilde{\varphi}_{n}^{\Xi'\setminus \{ p \} } (u x ) \ud u , \nonumber\\
& \textstyle\frac{\ud}{\ud x} \sum_{p \in \Xi'} \widetilde{\varphi}_{p}^{\Xi'} (x ) =  - (\xi+1)\int_{0}^{1} u^{\xi+1}  \frac{\ud M_{\Omega}^{\Xi'}}{\ud x} (ux) \ud u \nonumber \\
& \textstyle- \int_{0}^{1} u^{\xi+1} \sum_{p \in \Xi'} \widetilde{\varphi}_{n}^{\Xi'\setminus \{ p \} } (u x ) \ud u \nonumber \\
& \textstyle -  \int_{0}^{1} u^{\xi+2} {x} \sum_{p \in \Xi'} \frac{\ud  \widetilde{\varphi}_{n}^{\Xi'\setminus \{ p \} }}{\ud x} (u x ) \ud u \label{eq:tempequ0} \\
 & \textstyle=  - (\xi+1)\int_{0}^{1} u^{\xi+1}  \frac{\ud M_{\Omega}^{\Xi'}}{\ud x} (ux) \ud u \nonumber \\
 & \textstyle + (\xi+1) \int_{0}^{1} u^{\xi+1} \sum_{p \in \Xi'} \widetilde{\varphi}_{n}^{\Xi'\setminus \{ p \} } (u x ) \ud u -  \sum_{p \in \Xi'} \widetilde{\varphi}_{n}^{\Xi'\setminus \{ p \} } .
 \label{eq:tempequ1}
\end{align}
where the last expression follows by integrating the last term of \eqref{eq:tempequ0} by parts. From \eqref{eq:peeradvaluetemp} and \eqref{eq:tempequ1}, $\widetilde{\varphi}_{n}^{\Xi'} (x)$ is rearranged as
 \begin{align}
\widetilde{\varphi}_{n}^{\Xi'} (x) & \textstyle =  - \int_{0}^{1} u^{\xi+1}  \frac{\ud M_{\Omega}^{\Xi'}}{\ud x} (ux) \ud u \nonumber \\
& \textstyle + \int_{0}^{1} u^{\xi+1} \sum_{p \in \Xi'} \widetilde{\varphi}_{n}^{\Xi'\setminus \{ p \} } (u x ) \ud u  . \label{eq:tempequ2}
\end{align}
From the assumption, $\widetilde{\varphi}_{n}^{\Xi'\setminus \{ p \} } (x )$ is given by \eqref{eq:admultivalue} for ${\bar Z} = \Xi'\setminus \{ p \}$, which is plugged into the last term of \eqref{eq:tempequ2} to yield
\begin{align}
& \textstyle\int_{0}^{1} u^{\xi+1} \sum_{p \in \Xi'} \widetilde{\varphi}_{n}^{\Xi'\setminus \{ p \} } (u x ) \ud u  \nonumber \\
& \textstyle = \!-\! \displaystyle \sum_{p\in \Xi'} \sum_{S \subseteq \Xi' \setminus \{p\}} \textstyle \int_{0}^{1} \! \int_{0}^{1} (ut)^{|S|} (u-ut)^{\xi-|S|} \frac{\ud M_{\Omega}^S}{\ud x} (utx) u \ud t \ud u. \label{eq:tempagain}
\end{align}
To reduce the double integral of \eqref{eq:tempagain}, we use the following fact:
\begin{align}
 & \textstyle \int_{0}^{1}  \int_{0}^{1}  (ut)^{|S|} (u-ut)^{\xi-|S|} f (utx) u\ud t \ud u \nonumber \\
 & \textstyle = \int_{0}^{1}  \int_{0}^{u}  \tau^{|S|} (u-\tau)^{\xi-|S|} f (\tau x) \ud \tau \ud u \nonumber \\
 & \textstyle = \int_{0}^{1}  \int_{\tau}^{1}  \tau^{|S|} (u-\tau)^{\xi-|S|} f (\tau x) \ud u \ud \tau \nonumber \\
& \textstyle =  \int_{0}^{1}  \frac{1}{\xi+1-|S|}  \tau^{|S|} (1-\tau)^{\xi+1-|S|} f (\tau x) \ud \tau \label{eq:tempagain2}
\end{align}
where we used the change of variable $ut=\tau$ and changed the order of the double integration with respect to $u$ and $\tau$. Plugging \eqref{eq:tempagain2} into \eqref{eq:tempagain} yields
\begin{align}
& \textstyle \int_{0}^{1} u^{\xi+1} \sum_{p \in \Xi'} \widetilde{\varphi}_{n}^{\Xi'\setminus \{ p \} } (u x ) \ud u \nonumber \\
 &= - \sum_{p\in \Xi'} \sum_{S \subseteq \Xi' \setminus \{p\}} \textstyle \frac{1}{\xi+1-|S|} \int_{0}^{1}   \scriptstyle u^{|S|} (1-u)^{\xi+1-|S|} \textstyle \frac{\ud M_{\Omega}^S}{\ud x} (ux) \ud u \nonumber \\
& \textstyle = -\sum_{S \subseteq \Xi' , S\neq \Xi'} \textstyle \int_{0}^{1}   u^{|S|} (1-u)^{\xi+1-|S|} \frac{\ud M_{\Omega}^S}{\ud x} (ux) \ud u . \label{eq:tempagain3}
\end{align}
where the last equality holds because
\begin{align}
 & \textstyle \sum_{p\in \Xi'} \sum_{S \subseteq \Xi' \setminus \{p\}} f( S) :  \sum_{S \subseteq \Xi',S\neq \Xi' } f( S) \nonumber \\
 & \textstyle =   (\xi+1) \cdot \binom{\xi}{|S|} : \binom{\xi+1}{|S|} = \xi+1-|S|:1 . \nonumber
 \end{align}
Plugging \eqref{eq:tempagain3} into \eqref{eq:tempequ2} establishes the following desired result:
 \begin{align}
\textstyle\widetilde{\varphi}_{n}^{\Xi'} (x)  =  -\sum_{S \subseteq \Xi'} \int_{0}^{1}   u^{|S|} (1-u)^{\xi+1-|S|} \frac{\ud M_{\Omega}^S}{\ud x} (ux) \ud u   \label{eq:peerest}
\end{align}
from which it follows
 \begin{align}
& \textstyle \widetilde{\varphi}_{n}^{\Xi'} (x) - \widetilde{\varphi}_{n}^{\Xi'\setminus \{ p \} } (x )  = \nonumber \\
& \textstyle  - \sum_{S \subseteq \Xi'} \int_{0}^{1}   u^{|S|} (1-u)^{\xi+1-|S|} \frac{\ud M_{\Omega}^S}{\ud x} (ux) \ud u \nonumber \\
& \textstyle + \sum_{S \subseteq \Xi' \setminus \{ p \} } \int_{0}^{1}   u^{|S|} (1-u)^{\xi-|S|} \frac{\ud M_{\Omega}^S}{\ud x} (ux) \ud u \nonumber .
\end{align}
The first term of the RHS can be decomposed into the following:
\begin{align}
& \textstyle-\sum_{S \subseteq \Xi'\setminus \{ p \} } \int_{0}^{1}   u^{|S|+1} (1-u)^{\xi+1-(|S|+1)} \frac{\ud M_{\Omega}^{S \cup \{ p \}}}{\ud x} (ux) \ud u \nonumber \\
& \textstyle- \sum_{S \subseteq \Xi'\setminus \{ p \} } \int_{0}^{1}   u^{|S|} (1-u)^{\xi+1-|S|} \frac{\ud M_{\Omega}^{S}}{\ud x} (ux) \ud u . \nonumber
\end{align}
Thus, we can obtain
 \begin{align}
& \textstyle \widetilde{\varphi}_{n}^{\Xi'} (x) - \widetilde{\varphi}_{n}^{\Xi'\setminus \{ p \} } (x ) = \nonumber \\
& \textstyle  \!-\! \displaystyle \sum_{S \subseteq \Xi' \setminus \{ p \} } \textstyle \int_{0}^{1}   \scriptstyle u^{|S|+1} (1-u)^{\xi-|S|} \textstyle \left( \frac{\ud M_{\Omega}^{S \cup \{ p \}}}{\ud x} \scriptstyle (ux) \textstyle - \frac{\ud M_{\Omega}^S}{\ud x} \scriptstyle (ux) \textstyle \right) \ud u \label{eq:idontknow} .
\end{align}
Integrating \eqref{eq:fairnp} with respect to $x$ and from \eqref{eq:idontknow}, we get
\begin{align}
&\textstyle \widetilde{\varphi}_{p}^{\Xi'} (x) = \nonumber \\
&- \sum_{S \subseteq \Xi' \setminus \{ p \} } \textstyle \int_{0}^{1}   \scriptstyle u^{|S|} (1-u)^{\xi-|S|} \textstyle \left[ M_{\Omega}^{S \cup \{ p \}} \scriptstyle (y) \textstyle - M_{\Omega}^S \scriptstyle (y) \textstyle  \right]^{y=ux}_{y=0} \ud u . \nonumber
\end{align}
Because $ M_{\Omega}^{S \cup \{ p \}}  (0) \textstyle - M_{\Omega}^S  (0) = {\widetilde\Omega}_p(0) $, the above equation combined with \eqref{eq:peerest} finally establishes that \eqref{eq:admultivalue} also holds for all $\Xi'\subseteq {\bar Z}$ where $|\Xi'|=\xi+1$, hence completing the proof.

\subsection{Proof of Theorem \ref{th:adnotcore}}\label{proof:th:adnotcore}

To prove the theorem, we need to show that the condition for the core in Definition \ref{def:core} is violated, implying that it suffices to show the following:
\begin{align}
\textstyle \widetilde\varphi_p^{ Z}(1) & \textstyle > \displaystyle \sum_{i\in { Z}} \textstyle {\widetilde\Omega}_i(0) \!- \! M_{\Omega}^{Z}(1) \!-\! \left( \! \displaystyle \sum_{i\in { Z} \setminus \{p\} } \textstyle {\widetilde\Omega}_i(0) - M_{\Omega}^{{ Z}\setminus \{p \}}(1) \! \right) \nonumber \\
& \textstyle = {\widetilde\Omega}_p(0) - \left( M_{\Omega}^{ Z}(1) - M_{\Omega}^{{ Z}\setminus \{p \}}(1) \right) .\label{eq:thethird}
\end{align}
This means that the payoff of $p \in { Z}$ is greater than the marginal increase of the limit worth, \ie,
$$\textstyle \lim_{\eta \to \infty} \frac{1}{\eta}v({ Z}\cup  H) - \lim_{\eta \to \infty} \frac{1}{\eta}v(({ Z}\setminus\{p\})\cup  H) .$$
Subtracting the RHS of \eqref{eq:thethird} from the LHS of \eqref{eq:thethird} and using the expression of $\widetilde\varphi_p^{ Z}(1)$ in  \eqref{eq:admultivalue}, we have
\begin{align}
& \textstyle M_{\Omega}^{ Z}(1) - M_{\Omega}^{{ Z}\setminus \{p \}}(1) \nonumber \\
& \textstyle - \displaystyle \sum_{S \subseteq { Z} \setminus \{p\}} \textstyle \int_{0}^{1} u^{|S|} (1-u)^{|{ Z}|-1-|S|} \left( M_{\Omega}^{S \cup \{p\}} \scriptstyle (u) \textstyle- M_{\Omega}^{S } \scriptstyle (u) \textstyle \right)  \ud u . \label{eq:toshowadnotcore}
\end{align}
We see from Definition \ref{def:noncontributing} that $M_{\Omega}^{ Z}(1) - M_{\Omega}^{{ Z}\setminus \{ p \}}(1) = {\widetilde\Omega}_{p}(0)$. From the assumption, there exists a noncontributing provider which we denote by $p$.
To show that \eqref{eq:toshowadnotcore} is strictly positive, we rewrite the last factor of the integrand as follows:
\begin{align}
 & \textstyle M_{\Omega}^{S \cup \{ p \}} (y) - M_{\Omega}^{S} (y) = \nonumber \\
 & \textstyle  \min \left\{  \sum_{i\in S \cup \{ p \}} {\widetilde\Omega}_i (y_i) ~\big\vert~ \sum_{i\in S \cup \{ p \} } y_i \leq y , y_i \geq 0 \right\} \nonumber \\
 & \textstyle - \min \left\{  \sum_{i\in S  } {\widetilde\Omega}_i (y_i) ~\big\vert~ \sum_{i\in S} y_i \leq y , y_i \geq 0 \right\}  \nonumber
\end{align}
where the first term in the RHS can be rearranged as
\begin{align}
&\textstyle \min \left\{  \sum_{i\in S \cup \{ p \}} {\widetilde\Omega}_i (y_i) ~\big\vert~ \sum_{i\in S \cup \{ p \} } y_i \leq y , y_i \geq 0 \right\}   \nonumber \\
& \textstyle \leq  {\widetilde\Omega}_p (0) + \min \left\{  \sum_{i\in S } {\widetilde\Omega}_i (y_i) ~\big\vert~ \sum_{i\in S } y_i \leq y , y_i \geq 0 \right\} \nonumber
\end{align}
where the inequality holds from that ${\widetilde\Omega}_i (y)$, $i \in { Z}$, are non-increasing. It can be easily seen that the inequality holds by considering two cases $y_p=0$ and $y_p>0$. The inequality becomes {\it strict} when $S=\emptyset$ over some interval in $[0,x]$ whose length is positive due to the assumption that ${\widetilde\Omega}_p (y) $ is not constant in the interval $y \in [0, x]$ and non-increasing. From this, we see that \eqref{eq:toshowadnotcore} is greater than
\begin{align}\nonumber
 {\widetilde\Omega}_{p}(0) - \sum_{S \subseteq { Z} \setminus \{p\}} \int_{0}^{1} u^{|S|} (1-u)^{|{ Z}|-1-|S|} {\widetilde\Omega}_p(0)  \ud u =0
\end{align}
which establishes \eqref{eq:thethird}, hence completing the proof.

\subsection{Proof of Theorem \ref{th:convergencetocore}}\label{proof:th:convergencetocore}
To prove Theorem \ref{th:convergencetocore}, it suffices to show that the following is positive for $\{p \}  \subsetneq T$ such that $T\subseteq { Z}$:
\begin{align}
& \textstyle \varphi_p^{\{ p \} } (x) - \varphi_p^{ T } (x) = -  \int_{0}^{1} M_{\Omega}^{\{p\}} \scriptstyle (ux) \textstyle   \ud u  \nonumber \\
& \displaystyle \sum_{S \subseteq T \setminus \{p\}} \textstyle \int_{0}^{1} u^{|S|} (1-u)^{|T|-1-|S|} \left( M_{\Omega}^{S \cup \{p\}} \scriptstyle (ux) \textstyle- M_{\Omega}^{S } \scriptstyle (ux) \textstyle \right)  \ud u \label{eq:showthisth2}
\end{align}
which implies that the payoff of $p$ when it is the only provider of the coalition is larger than that with other providers $ T \setminus \{ p \} $. To this end, we first observe that, for $y\leq x$,
\begin{align}
& \textstyle M_{\Omega}^{S \cup \{ p \}} (y) - M_{\Omega}^{S} (y) = \nonumber \\
& \textstyle  \min \left\{  \sum_{i\in S \cup \{ p \}} {\widetilde\Omega}_i (y_i) ~\big\vert~ \sum_{i\in S \cup \{ p \} } y_i \leq y , y_i \geq 0 \right\} \nonumber \\
& \textstyle - \min \left\{  \sum_{i\in S  } {\widetilde\Omega}_i (y_i) ~\big\vert~ \sum_{i\in S} y_i \leq y , y_i \geq 0 \right\} . \nonumber
\end{align}
Here the first term in the RHS can be rearranged as
\begin{align}
& \textstyle \min \left\{  \sum_{i\in S \cup \{ p \}} {\widetilde\Omega}_i (y_i) ~\big\vert~ \sum_{i\in S \cup \{ p \} } y_i \leq y , y_i \geq 0 \right\} \nonumber \\
& \textstyle \geq   M_{\Omega}^{\{p\}} (y)  + \min \left\{  \sum_{i\in S } {\widetilde\Omega}_i (y_i) ~\big\vert~ \sum_{i\in S } y_i \leq y , y_i \geq 0 \right\} \nonumber
\end{align}
where the inequality holds from that $M_{\Omega}^{\{i\}}(y)$, $i \in T$, are non-increasing. It can be easily seen that the inequality holds by considering two cases $y_p=0$ and $y_p>0$. The inequality becomes {\it strict} when $S=\emptyset$ over some interval in $[0,x]$ whose length is positive due to the assumption that ${\widetilde\Omega}_p (y) $ is not constant in the interval $y \in [0, x]$ and non-increasing. From this inequality, we have $M_{\Omega}^{S \cup \{ p \}} (y) - M_{\Omega}^{S} (y) \geq M_{\Omega}^{\{p\}} (y)$ and the inequality is strict over some interval of positive length. Plugging this relation into \eqref{eq:showthisth2} yields $\varphi_p^{\{ p \} } (x) - \varphi_p^{ T } (x)  > 0 $.

Note that from \eqref{eq:ceapplq}, we have:
\begin{align}
& \textstyle  \lim_{\eta \to\infty} v(\{p\} \cup \bar H) /\eta \textstyle = {\widetilde\Omega}_p(0) - M_{\Omega}^{\{p\}} (x) \nonumber \\
 & \textstyle \leq \sum_{i\in T} {\widetilde\Omega}_i (0) - M_{\Omega}^{T} (x) = \lim_{\eta \to\infty} v(T \cup \bar H)/\eta \nonumber
 \end{align}
which, when combined with $\varphi_p^{\{ p \} } (x) > \varphi_p^{ T } (x)$, implies the second part of the theorem.

\ifthenelse{\boolean{arxiv}}{\subsection{Computation of the A-D Payoff in Example \ref{ex:twotwo}}\label{comp:ex:twotwo}

From the description of the cost reduction and the hard disk maintenance cost incurred from peers, we can compute the {\it net} cost reduction for all possible coalitions.
For example, if $n_1 $ and $n_2$ help $p_1$, the coalition worth becomes $ v(\{n_1, n_2, p_1\}) = 11 \$ - 5 \$ - 5  \$ = 1 \$ . $
In a similar way, we have the following result:
$$\hat v(S)= \left\{ \begin{array}{ll}
0, & \mbox{if } S \mbox{ is not profitable}, \\
5, & \mbox{if } S=\{ p_1, n_1\} , \\
4, & \mbox{if } S=\{ p_1, n_2\}, \\
1, & \mbox{if } S=\{ p_1, n_1, n_2\} ,\\
4, & \mbox{if } S=\{ p_2, n_1\}, \\
1, & \mbox{if } S=\{ p_2, n_2\}, \\
9, & \mbox{if } S=\{ p_2, n_1, n_2 \}. \\
\end{array} \right.
$$
Using the same expression \eqref{eq:worthmultiple} as in Section \ref{sec:issuemultiple}, it is easy to see that the coalition worths for coalescent provider cases are $v(\{ p_1, p_2, n_1 \}) = 5$, $v(\{ p_1, p_2, n_2 \}) = 4$ and $v(\{ p_1, p_2, n_1, n_2 \}) = 9$.

 Suppose that peers continue to form a new coalition $C'$ when they can improve away from the current coalition $C$. That is, if there is a blocking coalition $C'$ (Definition \ref{def:stability}), they will betray $C$. It is easy to see that almost all coalition structures are {\it transient}. Note that, as discussed in \cite{refHartEndogenous}, one can get two types of resulting coalition structures when a player departs from a coalition. For example, if player $ n_1 $ departs from her coalition in  $ \{ p_1 n_1 n_2, p_2 \} $ to form a coalition with $ p_2 $, we may get either $ \{ p_1,  n_2, p_2 n_1 \} $ or $ \{ p_1  n_2, p_2 n_1 \} $. To obtain Table \ref{table:ad}, we assumed only the latter case to simplify the exposition.

\setlength{\tabcolsep}{1pt}

\begin{table*}[t!]
\centering \small \caption{Example \ref{ex:twotwo}: A-D Payoff and Blocking Coalition $C$} \label{table:ad}
\begin{tabular}{|c||c|c|c|c|c|}
\hline  & $\{ p_1 p_2, n_1 n_2 \}$ & $\{ p_1 p_2, n_1, n_2 \} $ & $\{ p_1, p_2, n_1 n_2 \} $ & $\{ p_1, p_2, n_1, n_2 \} $ & $\{ p_1 n_1,  p_2 n_2 \} $ \\
\hline $\varphi_{p_1}$ & 0 & 0 & 0 & 0& 5/2=2.5 \\
\hline $\varphi_{p_2}$ & 0 & 0 & 0 & 0  &  1/2=0.5  \\
\hline $\varphi_{n_1}$ & 0 & 0 & 0 & 0& 5/2=2.5 \\
\hline $\varphi_{n_2}$ & 0 & 0 & 0 & 0 & 1/2=0.5 \\
\hline $C$ & \multicolumn{4}{c|}{\pink{$p_1 n_1$,$p_1 n_2$,$p_2 n_1$,$p_2 n_2$,$p_1 p_2 n_1 n_2$,$p_2 n_1 n_2$}} & \pink{$p_2 n_1 n_2 $} \\
\hline recurrent & \bfseries{X} & \bfseries{X} & \bfseries{X} & \bfseries{X} & \bfseries{O} \\
\hline
\hline  & $\{ p_1 p_2 n_1 n_2 \} $  & $\{ p_1 p_2 n_1, n_2 \} $ & $\{ p_2 n_1 n_2, p_1 \} $ & $\{ p_1 p_2 n_2, n_1 \} $ & $\{ p_1 n_2, p_2, n_1 \} $  \\
\hline $\varphi_{p_1}$  & 7/6 = 1.17 & 7/6=1.17 & 0 & 5/3=1.67 & 2 \\
\hline $\varphi_{p_2}$  & 19/6 = 3.17 & 2/3=0.67 & 23/6=3.83 &  1/6=0.17 & 0 \\
\hline $\varphi_{n_1}$  & 17/6 = 2.83 & 19/6=3.17 & 10/3=3.33 & 0 & 0 \\
\hline $\varphi_{n_2}$  & 11/6 = 1.83 & 0  & 11/6=1.83 & 13/6=2.17 & 2 \\
\hline $C$ & \multicolumn{3}{c|}{ \pink{$p_1 n_2$} } & \multicolumn{2}{c|}{ \pink{$p_1 n_1 $,$p_2 n_1$}}   \\
\hline recurrent & \bfseries{X} & \bfseries{X} & \bfseries{O} & \bfseries{X} & \bfseries{X} \\
\hline
\hline  & $\{ p_1 n_1 n_2 , p_2 \} $  & $\{ p_1 n_1, p_2, n_2 \} $ & $\{ p_1, n_1, p_2 n_2 \}$ & $\{ p_1 n_2 , p_2 n_1 \} $ &  $\{ p_1, n_2, p_2 n_1 \} $\\
\hline $\varphi_{p_1}$ & 11/6=1.83  & 5/2=2.5 & 0       & 2 &  0\\
\hline $\varphi_{p_2}$ & 0      & 0 & 1/2=0.5 & 2 &  2\\
\hline $\varphi_{n_1}$ & -1/6=-0.17  & 5/2=2.5 & 0       & 2 &  2 \\
\hline $\varphi_{n_2}$ & -2/3=-0.67  & 0& 1/2=0.5 & 2 &  0 \\
\hline \multirow{3}{*}{$C$}  & \pink{$p_1 n_1$,$p_1 n_2$,$p_2 n_1$,} & & \pink{$p_1 n_1$,$p_1 n_2 $,$p_2 n_1$,} &  & \pink{$p_1 n_1$,$p_1 n_2$,} \\
&  \pink{$p_2 n_2$,$p_2 n_1 n_2$,} & \pink{$p_2 n_1 n_2 $,$p_2 n_2$}  & \pink{$p_1 p_2 n_1 n_2$,}  & \pink{$p_1 n_1$} & \pink{$p_1 p_2 n_1 n_2 $} \\
&   \pink{$n_1$, $n_2$, $n_1 n_2 $}  &   & \pink{$p_2 n_1 n_2 $}  &  & \pink{$p_2 n_1 n_2 $} \\
\hline recurrent & \bfseries{X} & \bfseries{O} & \bfseries{X} & \bfseries{O} & \bfseries{X} \\
\hline
\end{tabular}
\end{table*}
}{}

\begin{IEEEbiography}[{\includegraphics[width=1in,height=1.25in,keepaspectratio]{./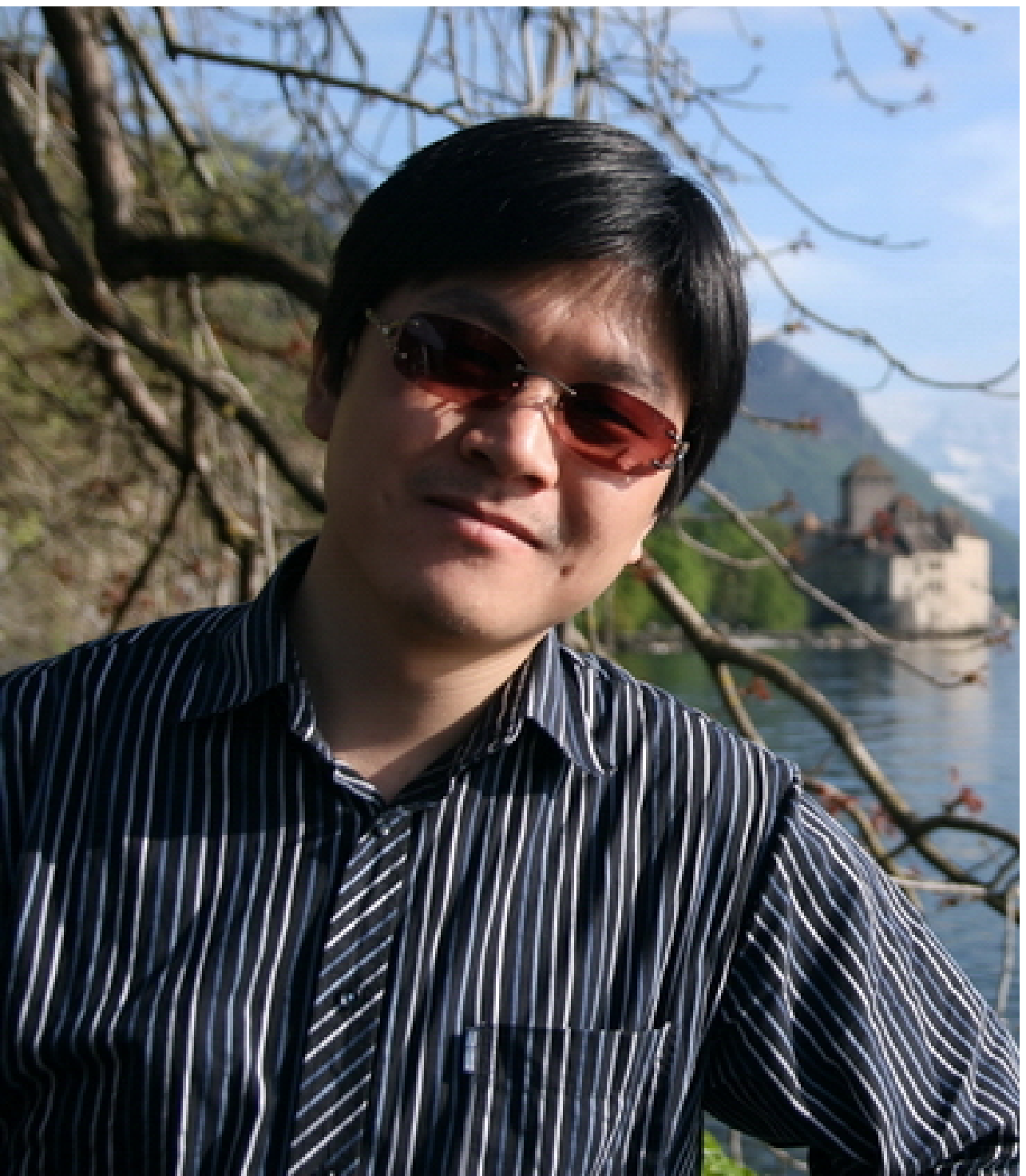}}]{Jeong-woo Cho}
received his B.S., M.S., and Ph.D. degrees in Electrical Engineering and Computer Science from KAIST, Daejeon, South Korea, in 2000, 2002, and 2005, respectively. From September 2005 to July 2007, he was with the Telecommunication R\&D Center, Samsung Electronics, South Korea, as a Senior Engineer. From August 2007 to August 2010, he held postdoc positions in the School of Computer and Communication Sciences, \'Ecole Polytechnique F\'ed\'erale de Lausanne (EPFL), Switzerland, and at the Centre for Quantifiable Quality of Service in Communication Systems, Norwegian University of Science and Technology (NTNU), Trondheim, Norway. He is now an assistant professor in the School of Information and Communication Technology at KTH Royal Institute of Technology, Stockholm, Sweden. His current research interests include performance evaluation in various networks such as peer-to-peer network, wireless local area network, and delay-tolerant network.
\end{IEEEbiography}

\begin{IEEEbiography}[{\includegraphics[width=1in,height=1.25in,keepaspectratio]{./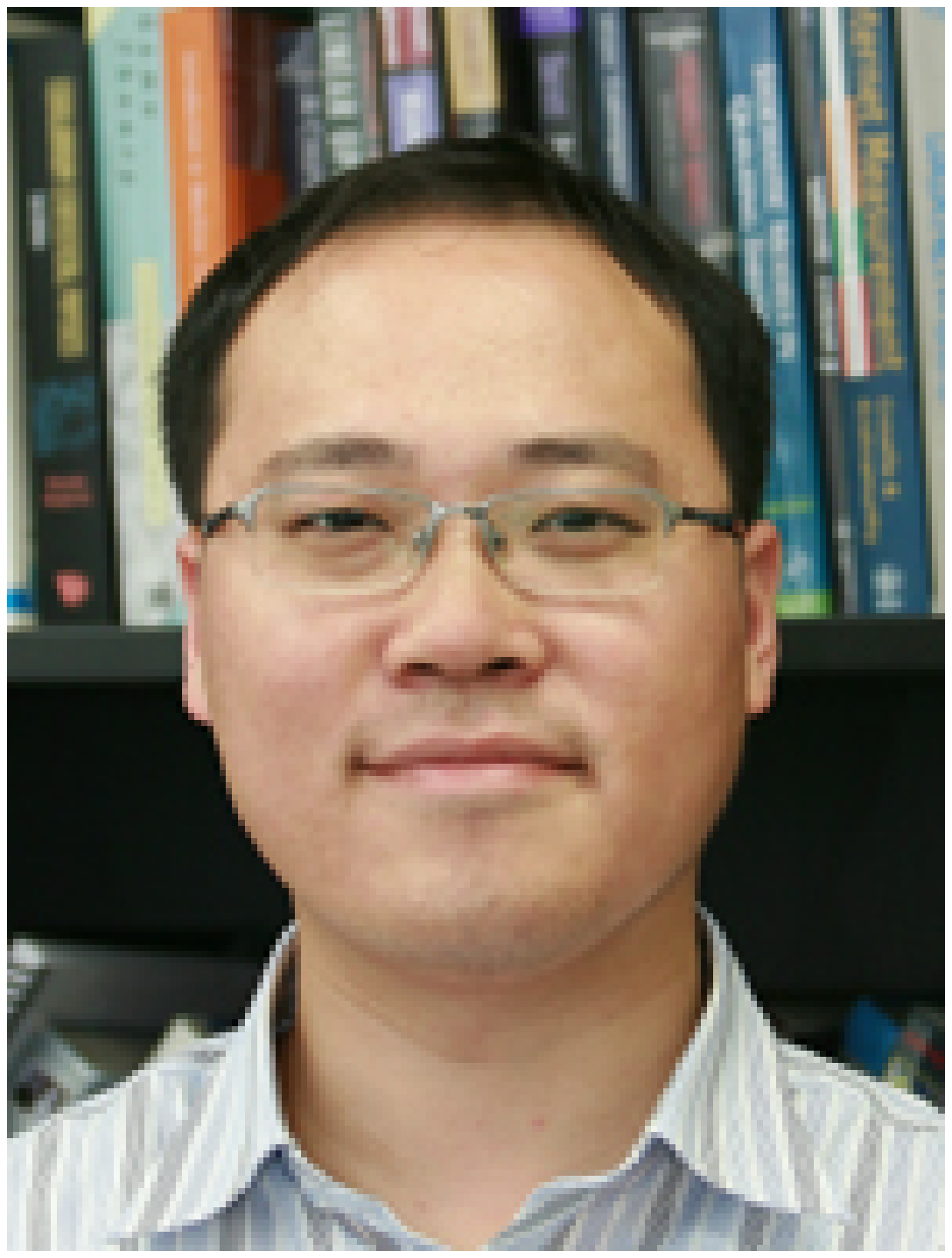}}]{Yung Yi}
received his B.S. and the M.S. in the School of Computer Science and Engineering from Seoul National University, South Korea in 1997 and 1999, respectively, and his Ph.D. in the Department of Electrical and Computer Engineering at the University of Texas at Austin in 2006. From 2006 to 2008, he was a post-doctoral research associate in the Department of Electrical Engineering at Princeton University. He is now an associate professor at the Department of Electrical Engineering at KAIST, South Korea. He has been serving as a TPC member at various conferences including ACM Mobihoc, Wicon, WiOpt, IEEE Infocom, ICC, Globecom, ITC, and WASA. His academic service also includes the local arrangement chair of WiOpt 2009 and CFI 2010, the networking area track chair of TENCON 2010, the publication chair of CFI 2011-2012, and the symposium chair of a green computing, networking, and communication area of ICNC 2012. He has served as a guest editor of the special issue on Green Networking and Communication Systems of IEEE Surveys and Tutorials, an associate editor of Elsevier Computer Communications Journal, and an associate editor of Journal of Communications and Networks. He has also served as the co-chair of the Green Multimedia Communication Interest Group of the IEEE Multimedia Communication Technical Committee. His current research interests include the design and analysis of computer networking and wireless systems, especially congestion control, scheduling, and interference management, with applications in wireless ad hoc networks, broadband access networks, economic aspects of communication networks, and greening of network systems.
\end{IEEEbiography}

\end{document}